\documentclass[11pt]{article}
\usepackage{amsopn}
\usepackage{amssymb, amscd}
\usepackage{mathrsfs}

\usepackage{amsmath,amssymb}

\usepackage[english]{babel}

\usepackage{graphics,color}

\usepackage{amsthm}
\usepackage{amsmath}

\usepackage{graphicx}

\topmargin -.5cm \evensidemargin 0cm \oddsidemargin 0cm \textwidth
16cm \textheight 23cm

\newcommand{\nc}{\newcommand}

\nc{\SO}{\mathrm{SO}} \nc{\Spe}{\mathrm{Sp}} \nc{\Sl}{\mathrm{SL}}
\nc{\SU}{\mathrm{SU}} \nc{\Or}{\mathrm{O}} \nc{\U}{\mathrm{U}}
\nc{\Gl}{\mathrm{GL}} \nc{\Se}{\mathrm{S}} \nc{\Cl}{\mathrm{Cl}}
\nc{\Spein}{\mathrm{Spin}} \nc{\Pin}{\mathrm{Pin}}
\nc{\G}{\mathrm{GL}_n(\RR)} \nc{\g}{\mathfrak{gl}_n(\RR)}

\nc{\vs}{\vspace{.2cm}} \nc{\vsp}{\vspace{1cm}}

\nc{\tq}{T_q}

\nc{\sqa}{\mathcal{S}^{a}_q}
\nc{\sq}{\mathcal{S}_q}
\nc{\sqao}{\mathcal{S}^{a\mathscr{O}}_q}
\nc{\sqo}{\mathcal{S}^{\mathscr{O}}_q}
\nc{\sqan}{\mathcal{S}^{a}_{q,N}}
\nc{\sqn}{\mathcal{S}_{q,N}}
\nc{\sqe}{\sq^{\epsilon,r}}
\nc{\sqh}{\widehat{\sq^{\epsilon,r}}}
\nc{\sqns}{\mathcal{S}^{\sigma,N}_{q,N}}
\nc{\sqnh}{\widehat{\mathcal{S}^{\sigma,N}_{q,N}}}
\nc{\sqno}{\sqn^{\O}}
\nc{\sqnso}{(\mathcal{S}^{\sigma,N}_{q,N})^{\O}}
\nc{\sqnoa}{\sqn^{\O^{a}}}
\nc{\sqnoh}{\widehat{\sqno}}
\nc{\sqnsn}{\mathcal{S}^{\sigma,n}_{q,N}}
\nc{\sqnsnn}{\mathcal{S}^{\sigma,N}_{q,N}}

\nc{\mtir}{\widetilde{M}(\infty, R)}
\nc{\mtim}{\widetilde{M}(\infty)[m]}
\nc{\ri}{R^{\infty}}
\nc{\zz}{[z,z^{-1}]}
\nc{\vsm}{\varphi^{[m]}_s}

\nc{\enz}{\in \mathbb{Z}}
\nc{\glh}{\widehat{ g\ell}_\infty^{[m]}}
\nc{\gl}{{ g\ell}_\infty^{[m]}}
\nc{\bi}{b^{[m]}_{\infty}}
\nc{\ci}{c^{[m]}_{\infty}}
\nc{\di}{d^{[m]}_{\infty}}

\nc{\fims}{\varphi^{[m]}_s}
\nc{\gm}{\mathfrak{g}^{[m]}}

\numberwithin{equation}{section}

\theoremstyle{plain}
\newtheorem{theorem}{Theorem}[section]
\newtheorem{proposition}[theorem]{Proposition}
\newtheorem{corollary}[theorem]{Corollary}
\newtheorem{lemma}[theorem]{Lemma}

\theoremstyle{definition}
\newtheorem{definition}[theorem]{Definition}

\theoremstyle{remark}
\newtheorem{remark}[theorem]{Remark}

\numberwithin{theorem}{section}

\newcommand{\CC}{{\mathbb C}}
\newcommand{\RR}{{\mathbb R}}

\newcommand{\ZZ}{{\mathbb Z}}
\newcommand{\NN}{{\mathbb N}}

\renewcommand{\O}{\mathcal{O}}



\begin{document}
%

\title{QHWM of the orthogonal and symplectic types Lie subalgebras of the Lie algebra of the matrix quantum pseudo differential operators}

\author{Karina Batistelli and Carina Boyallian\thanks{%
     Ciem - FAMAF, Universidad Nacional de C\'ordoba - (5000) C\'ordoba,
Argentina
\newline $<$batistelli@famaf.unc.edu.ar - boyallia@mate.uncor.edu$>$.}}
\date{}

\maketitle

\begin{abstract}
In this paper we classify the irreducible quasifinite highest weight modules over the orthogonal and syplectic types Lie subalgebras of the Lie algebra of the matrix quantum pseudo differential operators. We also realize them in terms of the irreducible quasifinite highest weight modules of the Lie algebras of infinite matrices with finitely many nonzero diagonals and its classical Lie subalgebras of types B, C and D.
\end{abstract}

\vfill \pagebreak

\section{Introduction}
The study of $W$-infinity algebras has its origins in various physical theories, such as conformal field theory, the theory of quantum Hall effect, etc.  The most important of these algebras is $W_{1+\infty}$, which is the central extension of the Lie algebra $D$ of differential operators on the circle.

The dificulty when studying the representation theory of these algebras lies in the fact that, although they admit a $\ZZ$-gradation and a triangular decomposition, each of the graded subspaces is still infinite dimensional. As a consequence, the study of highest weight modules that satisfy the quasifinite condition, which is, graded subspaces are finite dimensional, becomes a nontrivial problem.

The representations of the Lie algebra $W_{1+\infty}$ were first studied in \cite{KR1}, where its irreducible quasifinite highest weight modules were characterized and it was shown that they can be realized in terms of the irrducible highest weight representations of the Lie algebra of infinite matrices. At the end of that article, similar results were found for the central extension of the Lie algebra of quantum pseudo-differential operatos $\sq$, which contains as a subalgebra the $q$-analogue of the Lie algebra $\widehat{D}$, the algebra of all regular difference operators on $\mathbb{C}^{\times}$.

This study for $\widehat{D}$ was continued in \cite{FKRW}, \cite{KL} and \cite{KR2} in the framework of vertex algebra theory and in \cite{BKLY} for the matrix case. In \cite{KL}, V. Kac and J. Liberati also gave some general results on the characterization of quasifinite representations of any $\ZZ$-graded Lie algebra, which will be used in this paper.
In \cite{KWY}, a classification was given of the irreducible quasifinite highest weight modules of the central extension of the Lie subalgebras of $D$ fixed by minus the anti-involutions preserving the principal gradation. These results were extended in \cite{BL1} to the algebra $D^N$ of the $N \times N$-matrix differential operators on the circle.

An analogous study was carried out for the Lie algebra of quantum pseudo-differential operators. In \cite{BL3} it was shown that there is a family of anti-involutions on $\sq$, up to conjugation, preserving the principal gradation. Their irreducible quasifinite highest weight modules were classified and realized in terms of irreducible highest weight representations of the Lie algebra of infinite matrices with finitely many nonzero diagonals $\gl$ and its classical Lie subalgebras of B, C and D types. Similarly, in \cite{BL2},   the quasifinite highest weight modules over the central extension of the Lie algebra of $N \times N$ matrix quantum pseudo differential operators, denoted $\widehat{\sqn}$, were classified and characterized them in terms of the representation theory of the Lie algebra of infinite matrices with finitely many nonzero diagonals.

Making use of the the description of Lie subalgebras of $\widehat{\sqn}$ fixed by minus the anti-involutions preserving the principal gradation given in \cite{BB}, we classify the irreducible highest weight modules of some of the subalgebras found, particularly the orthogonal and symplectic types. This paper is organized as follows. In Sect. 2 we present some standard facts of representation theory of $\glh$ and its subalgebras of types B, C and D. In Sects. 3 and 4 we introduce the subalgebras $\sqnsnn$ and we study the structure of its parabolic subalgebras. In Sect. 5 we give a characterization of the irreducible quasifinite highest weight modules of $\sqnh$. In Sect. 6 an interplay between $\sqnh$ and the infinite rank classical Lie algebras of types A, B, C and D is established. Finally, in Sect. 7 we give the realization of the irreducible quasifinite highest weight modules of $\sqnh$.

\section{Lie algebras $\glh$ and its classical Lie subalgebras}

In this section we will give a description of the  Lie algebra of infinite matrices with finitely many nonzero diagonals $\gl$ and its classical Lie subalgebras of B, C and D types. We will follow the notation in Sect. 1 of \cite{KWY}.

Denote $R_m=\CC[u]/(u^{m+1})$ the quotient algebra of the polynomial algebra $\CC[u]$ by the ideal generated by $u^{m+1}$ ($m \in \ZZ_{\geq 0}$). Let $\mathbf{1}$ be the identity element in $R_m$. Denote by $\gl$ the complex Lie algebra of all infinite matrices $(a_{i,j})_{i,j \in \ZZ}$ with only finitely many nonzero diagonals with entries in $R_m$. Denote $E_{i,j}$ the infinite matrix with $1$ at $(i,j)$-entry and $0$ elsewhere. There is a natural automorphism $\nu$ of $\gl$ given by

\begin{equation}\label{eq:nu}
\nu(E_{i,j})=E_{i+1,j+1}.
\end{equation}

Let the weight of $E_{i,j}$ be $j-i$. This defines the \textit{principal} $\ZZ$-gradation $\gl=\oplus_{j \in \ZZ}(\gl)_j$. Denote by $\glh=\gl \oplus R_m$ the central extension of $\gl$ given by the following $2$-cocycle with values in $R_m$:

\begin{equation}{\label{eq:2cocycleb}}
C(A,B)= Tr([J,A]B),
\end{equation}
where $J=\sum_{i \leq 0} E_{i,i}$. The $\ZZ$-gradation of the Lie algebra $\gl$ extends to $\glh$ by putting the weight of $R_m$ to be $0$. In particular, we have the \textit{triangular decomposition},

\begin{equation}
\glh = (\glh)_- \oplus (\glh)_0 \oplus (\glh)_+,
\end{equation}
where

\begin{equation}
(\glh)_\pm = \oplus_{j \in \NN} (\glh)_{\pm j} \quad \text{and} \quad (\glh)_0=(\gl)_0 \oplus R_m.
\end{equation}

Given $\lambda \in (\glh)^*_0$, we let
\begin{eqnarray}\label{eq:gl}
 c_i &=& \lambda(u^i), \\ \nonumber
 \ ^a\lambda^{(i)}_j  &=&  \lambda (u^i E_{j,j}), \\ \nonumber
 \ ^aH^{(i)}_j &=& u^iE_{j,j}-u^iE_{j+1,j+1} + \delta_{j,0}c_i, \\ \nonumber
 \ ^ah^{(i)}_j &=&  \ ^a\lambda^{(i)}_j - \ ^a\lambda^{(i)}_{j+1} + \delta_{j,0}c_i, \nonumber
\end{eqnarray}
where $j \in \ZZ$ and $0 \leq i \leq m.$ Let $L(\glh,\lambda)$ be the irreducible highest weight $\glh$-module with highest weight $\lambda$. The $^a\lambda^{(i)}_j$ are called \textit{labels} and $c_i$ are the \textit{central charges} of $L(\glh, \lambda)$.


Consider the vector space $R_m[t, t^{-1}]$ and take the basis $v_i=t^{-i}$, $i \in \ZZ$ over $R_m$. Now consider the following $\CC$-bilinear form on $R_m[t, t^{-1}]$:

\begin{equation}
B^{\pm}(u^mv_i,u^nv_j)=u^m(-u)^{-n}(\pm1)^{i}\delta_{i,-j}.
\end{equation}
Denote by $\bar{b}^{- [m]}_{\infty}$ (resp. $\bar{b}^{+ [m]}_{\infty}$) the Lie subalgebra of $\gl$ which preserves the bilinear form $B^-(\cdot,\cdot)$ (resp. $B^+(\cdot,\cdot)$). We have
\begin{align*}
& \bar{b}^{+ [m]}_{\infty}=\{ (a_{i,j}(u))_{i,j \in \ZZ} \in \gl : a_{i,j}(u)=-a_{-j,-i}(-u) \}, \\
& \bar{b}^{- [m]}_{\infty}=\{ (a_{i,j}(u))_{i,j \in \ZZ} \in \gl : a_{i,j}(u)=(-1)^{1+i+j}a_{-j,-i}(-u) \}.
\end{align*}
Denote by $\bi=\bar{b}^{-[m]}_{\infty} \oplus R_m$ (resp. $\tilde{b}^{[m]}_{\infty}=\bar{b}^{+[m]}_{\infty} \oplus R_m$) the central extension of $\bar{b}^{-[m]}_{\infty}$ (resp. $\bar{b}^{+[m]}_{\infty}$) given by the restriction of the $2$-cocycle \eqref{eq:2cocycleb}, defined in $\gl$. The subalgebra $\bi$ (resp.  $\tilde{b}^{[m]}_{\infty}$) inherits from ${\glh}$ the principal $\ZZ$-gradation and the triangular decomposition (see \cite{KR2} and \cite{FKRW} for notation),
\begin{align*}
& \bi = \oplus_{j \in \ZZ}(\bi)_j, \quad \bi=(\bi)_+ \oplus (\bi)_0 \oplus (\bi)_-, \\
& \tilde{b}^{[m]}_{\infty} = \oplus_{j \in \ZZ} (\tilde{b}^{[m]}_{\infty})_j , \quad \tilde{b}^{[m]}_{\infty}=(\tilde{b}^{[m]}_{\infty})_+ \oplus (\tilde{b}^{[m]}_{\infty})_0 \oplus (\tilde{b}^{[m]}_{\infty})_-
\end{align*}
Note that the Lie algebra $\tilde{b}^{[m]}_{\infty}$ is isomorphic to ${b}^{[m]}_{\infty}$ via the isomorphism that sends the elements $u^kE_{i,j}-(-u)^kE_{-j,-i}$ to $u^kE_{i,j}+(-1)^{1+i+j}(-u)^kE_{-j,-i}$, $i,j \in \ZZ$, $k \in \ZZ_{+}$. Their Cartan subalgebra coincides. In particular, when $m=0$, we have the usual Lie subalgebra of $ g\ell_\infty$, denoted $b_{\infty}$ (see \cite{K}) (resp. $\tilde{b}_{\infty}$, see \cite{W}). Given $\lambda \in (\bi)^{\ast}_0$, denote $L(\bi, \lambda)$ the irreducible highest weight module over $\bi$ with highest weight $\lambda$.

For each $\lambda \in (\bi)^{\ast}_0$, we let
\begin{align} \label{eq:bi}
& c_i=\lambda(u^i), \\ \nonumber
& ^b\lambda^{(j)}_0= \lambda(2u^jE_{0,0}) \quad (j \, \text{ odd)}, \\ \nonumber
& ^b\lambda^{(j)}_i= \lambda(u^jE_{i,i}-(-u)^jE_{-i,-i}), \\ \nonumber
& ^bH^{(j)}_i= u^jE_{i,i}-u^jE_{i+1,i+1}+(-u)^jE_{-i-1,-i-1}-(-u)^jE_{-i,-i}, \\ \nonumber
& ^bH^{(j)}_0= 2(u^jE_{0,0}-u^jE_{-1,-1}-u^jE_{1,1})+u^j, \quad (j \, \text{ even)},\\ \nonumber
& ^bH^{(j)}_0= (2u^jE_{0,0}-u^jE_{-1,-1}-u^jE_{1,1})+u^j, \quad (j \, \text{ odd)},\\ \nonumber
& ^bh^{(j)}_i= \lambda (^bH^{(j)}_i) =  {^b\lambda^{(j)}_i}-  {^b\lambda^{(j)}_{i+1}},\\ \nonumber
& ^bh^{(j)}_0= \lambda (^bH^{(j)}_0) = -2 \,\, {^b\lambda^{(j)}_1}+2c_j \quad (j \, \text{ even)},\\ \nonumber
& ^bh^{(j)}_0= \lambda (^bH^{(j)}_0) =   {^b\lambda^{(j)}_0}-  {^b\lambda^{(j)}_1}+c_j \quad (j \, \text{ odd)},
\end{align}
where $i \in \NN$ and $0 \leq j \leq m$. The $^b\lambda^{(i)}_j$ are called the \textit{labels} and $c_i$ are the \textit{central charges} of $L(\bi, \lambda)$ or  $L(\tilde{b}^{[m]}_{\infty}, \lambda)$.


Now consider the following $\CC$-bilinear form on $R_m[t,t^{-1}]$:

\begin{equation}
   C(u^m v_i,u^nv_j)= u^{m}(-u^{n})(-1)^i\delta_{i,1-j}   \, .
\end{equation}

Denote by $\bar c_{\infty}^{[m]}$  the Lie subalgebra of $g\ell_\infty ^{[m]}$ which preserves the bilinear form $C (\,,\, )$. We have

\begin{displaymath}
   \bar c_{\infty}^{[m]}= \{ (a_{ij}(u))_{i,j\in\ZZ} \in g\ell_\infty  ^{[m]}\, | \, a_{ij}(u)=(-1)^{i+j+1}a_{1-j,1-i}(-u)\, \} \, .\end{displaymath}

Denote by $c_{\infty} ^{[m]}=\bar c_{\infty} ^{[m]}\oplus R_m$ the central extension of $\bar c_{\infty} ^{[m]}$   given by the restriction of the 2-cocycle  (2.2), defined in $g\ell_\infty ^{[m]}$. This subalgebra inherits from $\widehat {g\ell}_\infty^{[m]}$ the principal $\ZZ$-gradation and the
triangular decomposition, (see \cite{KWY} and \cite{K} for notation)

\begin{displaymath}
    c_{\infty}^{[m]}=\oplus_{j\in\ZZ} (c_{\infty}^{[m]})_j\qquad c_{\infty}^{[m]}=(c_{\infty}^{[m]})_+\oplus (c_{\infty}^{[m]})_0\oplus (c_{\infty}^{[m]})_- \, .
\end{displaymath}

In particular when $m=0$, we have the usual Lie subalgebra of $g\ell _\infty$, denoted by $c_\infty$.

Given $\lambda\in (c_{\infty}^{[m]})^*_0$, denote by $L(c_{\infty}^{[m]}; \lambda)$ the irreducible highest weight module over $c_{\infty}^{[m]}$  with highest weight $\lambda$. For each $\lambda\in(c_\infty^{[m]})_0^*$, we let:

\begin{eqnarray} \label{eq:ci}
  \quad \quad c_i &= & \lambda(u^i),\nonumber\\
  \quad\quad \ ^c\lambda_j^{(i)}&=&\lambda(u^i\,E_{j,j}-(-u)^i\, E_{1-j,1-j}),\nonumber\\
   \quad\quad \ ^cH^{(i)}_j  &=&   u^iE_{j,j}-u^jE_{j+1,j+1}+(-u)^iE_{-j,-j}-(-u)^iE_{1-j,1-j}, \\ \nonumber
 \quad\quad \ ^cH^{(i)}_0  &=&  (u^iE_{0,0}-u^iE_{1,1})+u^i, \quad (i \, \text{ even)} \\ \nonumber
  \quad\quad \ ^ch_j^{(i)} &=&    ^c\lambda_j^{(i)}-\ ^c\lambda_{1+j}^{(i)},\nonumber\\
\quad \quad\ ^ch_0^{(i)} &=&    ^c\lambda_1^{(i)}+c_i\quad (i\,
\hbox{ even}),
\end{eqnarray}
where $j\in\NN$ and $i=0,\cdots,m$. For later use, it is convenient to put $^ch_0^{(i)}=c_i$ ($i$ odd), $i=0,\cdots,m$.

The $ \ ^c\lambda_j^{(i)}$ are called the {\it labels} and $c_i$ are the ¨{\it central charges} of $L( \ci,\lambda)$.


\medskip
Now consider the following $\CC-$bilinear form on $R_m[t,t^{-1}]$:
\begin{equation}
D(u^mv_i,u^nv_j)=u^m(-u)^{n}\delta_{i,1-j}.
\end{equation}
Denote by $\di$ the Lie subalgebra of $\gl$ which preserves the bilinear form $D(\cdot , \cdot)$. We have
\begin{equation*}
\bar{d}^{[m]}_{\infty}=\{ (a_{i,j}(u))_{i,j \in \ZZ} \in \gl : a_{i,j}(u)=-a_{1-j,1-i}(-u) \}.
\end{equation*}
Denote by $\di=\bar{d}^{[m]}_{\infty} \oplus R_m$ the central extension of $\bar{d}^{[m]}_{\infty}$ given by the restriction of the $2$-cocylcle \eqref{eq:2cocycleb}, defined in $\gl$. This subalgebra inherits from ${\glh}$ the principal $\ZZ$-gradation and the triangular decomposition (see \cite{KR2} and \cite{FKRW} for notation),
\begin{equation*}
\di = \oplus_{j \in \ZZ}(\di)_j, \quad \di=(\di)_+ \oplus (\di)_0 \oplus (\di)_-.
\end{equation*}
Given $\lambda \in (\di)^{\ast}_0$, denote $L(\di, \lambda)$ the irreducible highest weight module over $\di$ with highest weight $\lambda$.

For each $\lambda \in (\di)^{\ast}_0$, we let
\begin{align} \label{eq:di}
& c_i=\lambda(u^i), \\ \nonumber
& ^d\lambda^{(j)}_i= \lambda(u^jE_{i,i}-(-u)^jE_{1-i,1-i}), \\ \nonumber
& ^dH^{(j)}_i= u^jE_{i,i}-u^jE_{i+1,i+1}+(-u)^jE_{-i,-i}-(-u)^jE_{1-i,1-i}, \\ \nonumber
& ^dH^{(j)}_0= ((-u)^jE_{0,0}+(-u)^jE_{-1,-1}-u^jE_{2,2}-u^jE_{1,1})+2u^j, \\ \nonumber
& ^dh^{(j)}_i= \lambda (^dH^{(j)}_i) =  {^d\lambda^{(j)}_i}-  {^d\lambda^{(j)}_{i+1}},\\ \nonumber
& ^dh^{(j)}_0= \lambda (^dH^{(j)}_0) =   -{^d\lambda^{(j)}_1}-  {^d\lambda^{(j)}_2}+2c_j ,
\end{align}
where $i \in \NN$ and $0 \leq j \leq m$. The $^d\lambda^{(i)}_j$ are called the \textit{labels} and $c_i$ are the \textit{central charges} of $L(\di, \lambda)$. In particular, when $m=0$ we have the usual $\bar{d}_{\infty}=\bar{d}^{[0]}_{\infty}$, $d_{\infty}=d^{[0]}_{\infty}$, cf. \cite{K}. In this case, we drop the superscript $[0]$.

\section{The Lie algebra $\sqn$}\label{sec:sqn}

Consider $\mathbb{C}[z,z^{-1}]$ the Laurent polynomial algebra in one variable. 
We denote $\sqa$ the associative algebra of quantum pseudo-differential operators.
Explicitly, let $T_q$ denote the  operator on $\mathbb{C}[z,z^{-1}]$ given by

\begin{equation*}
T_q f(z)=f(qz),
\end{equation*}
where $q \, \in \, \mathbb{C}^{\times}=\mathbb{C} \backslash \{0\}$. An element of $\sqa$ can be written as a linear combination of operators of the form $z^{k}f(T_q)$, where $f$ is a Laurent polynomial in $T_q$. The product in $\sqa$ is given by
\begin{equation}
(z^{m}f(T_q))(z^{k}g(T_q))=z^{m+k}f(q^{k}T_q)g(T_q).
\end{equation}

Denote $\sq$ the Lie algebra obtained from $\sqa$ by taking the usual commutator. Take  $\sq^{\prime}:=[\sq,\sq]$. It follows:

\begin{equation*}
\sq=\sq^{\prime} \oplus \mathbb{C}T^{0}_q \quad \hbox{ (direct sum of ideals).}
\end{equation*}

Let $N$ be a positive integer. As of this point, we shall denote by $Mat_N A$ the associative algebra of all $N \times N$ matrices over an algebra $A$ and $E_{ij}$ the standard basis of $Mat_N \mathbb{C}$.

Let $\sqan=\sqa \otimes Mat_N \mathbb{C}$ be the associative algebra of all quantum matrix pseudodifferential operators, namely the operators on $\mathbb{C}^N[z, z^{-1}]$ of the form
\begin{equation}
E=e_k(z)T^k_q+e_{k-1}(z)T^{k-1}_q + \cdots + e_0(z),
\hbox{ where } e_k(z) \, \in \, Mat_N \mathbb{C}[z, z^{-1}].
\end{equation}

In a more useful notation, we write the matrix of pseudodifferential operators as linear combinations of elements of the form $z^k f(T_q)A$, where $f$ is a Laurent polynomial, $k \, \in \, \mathbb{Z}$ and $A \, \in \, Mat_N \mathbb{C} $. The product in $\sqan$ is given by

\begin{equation}
(z^{m}f(T_q)A)(z^{k}g(T_q)B)=z^{m+k}f(q^{k}T_q)g(T_q)AB.
\end{equation}

Let $\sqn$ denote the Lie algebra obtained from $\sqan$ with the bracket given by the conmutator, namely:

\begin{equation}
[z^{m}f(T_q)A, z^{k}g(T_q)B]=z^{m+k}(f(q^{k}T_q)g(T_q)AB-f(T_q)g(q^m T_q)BA).
\end{equation}

Taking the trace form $tr_0(\sum_j c_jw^j)=c_0$, and denoting by $tr$ the usual trace in $Mat_M \mathbb{C}$, we obtain, by a general construction (cf. Sec. 1.3 in \cite{KR1}), the following $2$-cocylce in $\sqn$

\begin{equation}\label{eq:2cocycle}
\psi(z^mf(T_q)A,z^kg(T_q)B)= \delta_{m,-k} \, m \, tr_0(f(q^{-m}T_q)g(T_q)) \, tr(AB) ,
\end{equation}
where $r, \, s \, \in \, \mathbb{Z}$, $f, \, g \, \in \, \mathbb{C}[w,w^{-1}]$, $A, \, B \, \in \, Mat_N\mathbb{Z}$. Let

\begin{equation}
\widehat{\sqn} = \sqn^{\prime} \oplus \mathbb{C}C
\end{equation}
denote the central extension of $\sqn^{\prime}$ by a one-dimensional center $\mathbb{C}C$ corresponding to the two-cocycle $\psi$. The bracket in $\widehat{\sqn}$ is given by

\begin{align}
[z^{m}f(T_q)A, z^{k}g(T_q)B] & = z^{m+k}(f(q^{k}T_q)g(T_q)AB-f(T_q)g(q^m T_q)BA) \\ \nonumber
							& +\psi(z^{m}f(T_q)A, z^{k}g(T_q)B)C.
\end{align}

The elements $z^k T^m_q E_{ij} \, (k \, \in \, \mathbb{Z}, m \, \in \, \mathbb{Z}, i,j \, \in \, \{1,\cdots,N\})$ form a basis of $\sqn$. We define the \textit{weight} on $\sqn$ by

\begin{equation}\label{eq:weight}
wt z^k f(T_q)E_{ij}=kN+i-j.
\end{equation}

This gives  the \textit{principal} $\mathbb{Z}$ -gradation of $\sqan$, $\sqn$ and $\widehat{\sqn}$,

\begin{equation*}
\sqan= \oplus_{j \, \epsilon \, \mathbb{Z}} (\mathscr{S}_{q,N})_j, \quad  \widehat{\sqn}= \oplus_{j \, \epsilon \, \mathbb{Z}} (\widehat{\mathscr{S}}_{q,N})_j.
\end{equation*}
$\newline$
An $\textit{anti-involution}$ $\sigma$ of $\sqan$ is an involutive anti-automorphism of $\sqan$, ie,
$\sigma^2= Id$, $\sigma(\mathit{a}x+\mathit{b}y)=\mathit a \sigma(x)+\mathit b \sigma(y)$ and $\sigma(xy)=\sigma(y)\sigma(x)$, for all $a, \, b \, \in \, \mathbb{C}$ and $x, y \, \in \, \sqan$. From now on we will assume that $|q|\neq1$.

The following Corolary was proved in \cite{BB}.

\begin{corollary}\label{cor:antiinv}
Let $\sigma=\sigma_{A,B,c,r,N}$ be given by

\begin{equation*}
\sigma(E_{ii})=E_{N+1-i,N+1-i}
\end{equation*}

\begin{equation*}
\sigma(T_qE_{ii})=BT^{-1}_qE_{N+1-i,N+1-i}
\end{equation*}

\begin{equation}
\sigma(zE_{ii})=zAT^r_qE_{N+1-i,N+1-i}
\end{equation}

\begin{equation*}
\sigma(z^{-1}E_{ii})=A^{-1}q^{r}z^{-1}T^{-r}_qE_{N+1-i,N+1-i}
\end{equation*}

\begin{equation*}
\sigma(E_{ij})= \begin{cases} c_{i,j}E_{N+1-j,N+1-i} & \text{if} \quad i>j \\
								c^{-1}_{j,i}E_{N+1-j,N+1-i} & \text{if} \quad i<j
								\end{cases}
\end{equation*}
where $A$, $B$,  $c_{i,j}, \, r \, \in \, \mathbb{C}$, $A^2(Bq^{-1})^r=1$, $c_{i,j}$ verify the following relations
\begin{subequations}
\begin{equation} \label{eq:productosc}
c_{ij}=c_{i,i-1}c_{i-1,i-2}\cdots c_{j+1,j}
\end{equation}
\begin{equation}\label{eq:lamolesta}
\begin{cases} c_{i,j}c_{N+1-j,N+1-i}=1 & \text{if} \quad i \leq n \quad \textit{or} \quad j>n \\
			  c_{i,j}c^{-1}_{N+1-i,N+1-j}=\pm1 & \text{if} \quad i>n \quad \textit{and} \quad  j \leq n.
\end{cases}
\end{equation}
\end{subequations}
Then $\sigma=\sigma_{A,B,c,r,N}$ extends to an anti-involution on $\sqan$ which preserves the principal $\ZZ$-gradation.

\end{corollary}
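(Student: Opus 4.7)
The plan is to verify three requirements: (i) the listed formulas extend to a well-defined anti-homomorphism of $\sqan$; (ii) this anti-homomorphism is involutive, i.e.\ $\sigma^2=\mathrm{Id}$; and (iii) it preserves the principal $\ZZ$-gradation. I will start with (iii), which is the quickest. Since $\sigma$ is to be anti-multiplicative and the weight \eqref{eq:weight} is additive, it suffices to check weight preservation on the seven generator types listed. A direct inspection gives, for instance, $\mathrm{wt}(zAT_q^rE_{N+1-i,N+1-i})=N=\mathrm{wt}(zE_{ii})$, and for $i>j$, $\mathrm{wt}(c_{i,j}E_{N+1-j,N+1-i})=i-j=\mathrm{wt}(E_{ij})$; the other cases are analogous.

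For (i), I will present $\sqan$ through the associative generators $z^{\pm1}$, $T_q^{\pm1}$, $E_{ij}$. The values $\sigma(T_q^{-1}E_{ii})$ and $\sigma(z^{-1}E_{ii})$ are forced by anti-multiplicativity from $\sigma(T_qE_{ii})$ and $\sigma(zE_{ii})$ via the factorizations $E_{ii}=(T_qE_{ii})(T_q^{-1}E_{ii})=(zE_{ii})(z^{-1}E_{ii})$, and summing over $i$ yields the clean formulas $\sigma(T_q)=BT_q^{-1}$ and $\sigma(z)=zAT_q^r$. Once $\sigma$ is extended anti-multiplicatively, it must respect every defining relation of $\sqan$. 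The twisted commutation $T_qz=qzT_q$ becomes $\sigma(z)\sigma(T_q)=q\sigma(T_q)\sigma(z)$; both sides simplify to $ABzT_q^{r-1}$ after one application of $T_q^{-1}z=q^{-1}zT_q^{-1}$, so this relation is automatic. The commutativity of $T_q$ and $z$ with the matrix units $E_{ij}$ is preserved because $\sigma(T_qE_{ii})$ and $\sigma(zE_{ii})$ remain block-diagonal in $Mat_N\CC$. Finally, the matrix-unit law $E_{ij}E_{kl}=\delta_{jk}E_{il}$ on images reduces to a cocycle condition on the constants $c_{i,j}$, ensured by the telescoping identity \eqref{eq:productosc} together with \eqref{eq:lamolesta}.

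For (ii), anti-multiplicativity reduces the check to the generators. The diagonal case $\sigma^2(E_{ii})=E_{ii}$ and $\sigma^2(T_qE_{ii})=B\cdot B^{-1}T_qE_{ii}=T_qE_{ii}$ are immediate. For $E_{ij}$ with $i>j$ one gets $\sigma^2(E_{ij})=c_{i,j}c_{N+1-j,N+1-i}E_{ij}$, which equals $E_{ij}$ by the first branch of \eqref{eq:lamolesta}. The decisive calculation is $\sigma^2(zE_{ii})$: using $T_q^{-r}z=q^{-r}zT_q^{-r}$ one finds $\sigma^2(z)=A^2(Bq^{-1})^rz$, so the normalization $A^2(Bq^{-1})^r=1$ is exactly what is needed. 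I expect the main obstacle to be the careful bookkeeping in the matrix-unit check of step (i), where the four regimes determined by whether each index is $\leq n$ or $>n$ must be handled compatibly, and where the $\pm1$ sign in \eqref{eq:lamolesta} encodes the split between the orthogonal and symplectic type involutions; the scalar relations among $A,B,r,q$ are governed by the single normalization identity and cause no further difficulty.
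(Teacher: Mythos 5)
You should first be aware that the paper does not actually prove this statement: it is imported wholesale from \cite{BB} (``The following Corolary was proved in \cite{BB}''), so there is no internal argument to measure your proof against. Your three-step scheme --- well-definedness of the anti-homomorphism on a presentation of $\sqan$ by $z^{\pm1}$, $T_q^{\pm1}$, $E_{ij}$; involutivity checked on generators; weight preservation checked on generators --- is the natural direct route, and the computations you highlight are correct: the relation $T_qz=qzT_q$ is preserved (both images reduce to $ABzT_q^{r-1}$ using $T_q^{-1}z=q^{-1}zT_q^{-1}$), the forced values $\sigma(T_q)=BT_q^{-1}$, $\sigma(z^{-1}E_{ii})=A^{-1}q^rz^{-1}T_q^{-r}E_{N+1-i,N+1-i}$ are consistent with the list, the matrix-unit cocycle condition is exactly the telescoping \eqref{eq:productosc}, and $\sigma^2(zE_{ii})=A^2(Bq^{-1})^rzE_{ii}$ is precisely where the normalization $A^2(Bq^{-1})^r=1$ enters.

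The one step that is not justified as written is $\sigma^2(E_{ij})=c_{i,j}c_{N+1-j,N+1-i}E_{ij}=E_{ij}$ ``by the first branch of \eqref{eq:lamolesta}.'' That branch only applies when $i\le n$ or $j>n$; in the complementary regime $i>n\ge j$ the hypotheses supply $c_{i,j}c^{-1}_{N+1-i,N+1-j}=\pm1$, which is not the identity you need. Factoring $c_{i,j}$ through the index $n$ via \eqref{eq:productosc} and applying the first branch to the two outer factors, one is left with $c_{i,j}c_{N+1-j,N+1-i}=c_{n+1,n}c_{N+1-n,N-n}$, and the second branch pins this down only up to the sign $\pm1$; with the minus sign your formula for $\sigma$ would satisfy $\sigma^2(E_{ij})=-E_{ij}$ on the mixed block and fail to be an involution. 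The resolution is that the parameter $n$ is never fixed in the statement: in view of the remark following the corollary, the displayed formula for $\sigma(E_{ij})$ corresponds to a particular choice (effectively $n=N$, for which the second branch is vacuous and the first branch covers all $i>j$), while for $0<n<N$ the anti-involution classified in \cite{BB} acts on the off-diagonal $E_{ij}$ by a genuinely different formula. You should either state that you are taking the case where the second branch of \eqref{eq:lamolesta} is empty, or carry out the sign bookkeeping explicitly; as written, the appeal to ``the first branch'' silently skips the only regime where the $\pm1$ --- i.e., the orthogonal/symplectic dichotomy you correctly identify as the heart of the matter --- could cause the involutivity to fail.
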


\begin{remark}
For each $n<N$, a $\ZZ$-gradation preserving anti-involution can be constructed in a similar way. In \cite{BB} all anti-involutions of $\sqan$ preserving the $\ZZ$-gradation were classified.
\end{remark}

Let $\sqn^{A,B,c,r,N}$ denote the Lie subalgebra of $\sqn$ fixed by minus $\sigma_{A,B,c,r,N}$, namely

\begin{equation}
\sqn^{A,B,c,r,N}=\{ a \, \in \, \sqn | \sigma_{A,B,c,r,N}(a)=-a \},
\end{equation}
where $\sigma_{A,B,c,r,N}$ is the anti-involution given by Corolary \ref{cor:antiinv}.

\begin{lemma}
The Lie algebras $\sqn^{A,B,c,r,N}$ for arbitrary choices of $  A, \, B \text{ and } c$
are isomorphic to $\sqn^{\epsilon, q, \mathbf{1}, r, N}$, where $\epsilon$ is $1$ or $-1$, and $\mathbf{1}$ is the matrix $c$ with $c_i=1$ except for the fixed points that are $1$ or $-1$, which keep their sign.
\end{lemma}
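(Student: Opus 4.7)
The plan is to construct an associative algebra automorphism $\phi$ of $\sqan$ which intertwines the two anti-involutions, in the sense that
\begin{equation*}
\sigma_{A,B,c,r,N}\circ\phi = \phi\circ\sigma_{\epsilon,q,\mathbf{1},r,N}.
\end{equation*}
Such a $\phi$ is automatically a Lie algebra automorphism of $\sqn$, and if $x\in\sqn$ satisfies $\sigma_{\epsilon,q,\mathbf{1},r,N}(x)=-x$, then $\sigma_{A,B,c,r,N}(\phi(x))=\phi(\sigma_{\epsilon,q,\mathbf{1},r,N}(x))=-\phi(x)$, so $\phi$ restricts to the desired isomorphism $\sqn^{\epsilon,q,\mathbf{1},r,N}\xrightarrow{\sim}\sqn^{A,B,c,r,N}$.

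To build $\phi$ I would combine the two natural families of gradation-preserving automorphisms of $\sqan$ visible from the defining relation $T_q z = q z T_q$: rescaling $T_q\mapsto\beta T_q$ for some $\beta\in\CC^\times$ (compatible with the product since the $q$-commutation relation is preserved), and conjugation by a diagonal matrix $D=\mathrm{diag}(d_1,\ldots,d_N)$, which sends $E_{ij}\mapsto d_id_j^{-1}E_{ij}$. Combining them,
\begin{equation*}
\phi\bigl(z^k f(T_q)\,E_{ij}\bigr) = d_id_j^{-1}\,z^k f(\beta T_q)\,E_{ij}.
\end{equation*}
A direct check from the product formula of $\sqan$ shows that $\phi$ is a well-defined algebra automorphism preserving the principal $\ZZ$-gradation.

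Applying Corollary \ref{cor:antiinv} generator by generator to compute $\sigma':=\phi^{-1}\sigma_{A,B,c,r,N}\phi$ yields the new parameters
\begin{equation*}
B' = \beta^2 B,\qquad A' = A\beta^{-r},\qquad c'_{ij} = c_{ij}\,\frac{d_id_{N+1-i}}{d_jd_{N+1-j}}\quad(i>j).
\end{equation*}
Choosing $\beta$ so that $\beta^2=q/B$ normalizes $B'=q$; the condition $A^2(Bq^{-1})^r=1$ from Corollary \ref{cor:antiinv} then forces $(A')^2=A^2(B/q)^r=1$, so $A'=:\epsilon\in\{\pm1\}$ and the normalization relation is preserved.

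It remains to choose the $d_i$ so that the $c'_{ij}$ match the matrix $\mathbf{1}$. Thanks to the multiplicativity \eqref{eq:productosc} it is enough to control the adjacent entries $c_{i+1,i}$, and introducing $e_i:=d_id_{N+1-i}$ reduces the problem to the cascade $e_{i+1}=e_ic_{i+1,i}^{-1}$. The delicate point, which I expect to be the main obstacle, is that by construction $e_i=e_{N+1-i}$, so the target sequence must be symmetric under $i\leftrightarrow N+1-i$; equivalently, the combination $c'_{ij}c'_{N+1-j,N+1-i}=c_{ij}c_{N+1-j,N+1-i}$ is an invariant of $\phi$, and the two cases of \eqref{eq:lamolesta} govern exactly which pairs can be trivialized. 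In the first case this invariant equals $1$, so the consistency holds and the corresponding $c'_{ij}$ can be set to $1$; in the second case ($i>n$, $j\leq n$) the surviving sign $\pm 1$ of the invariant cannot be altered by any diagonal conjugation and persists as the ``fixed-point'' entries of the normalized matrix $\mathbf{1}$ in the statement. With $\beta$ and the $d_i$ chosen accordingly, $\phi$ intertwines the two anti-involutions and the lemma follows.
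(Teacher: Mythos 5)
The paper states this lemma without proof (it is taken over from the classification in [BB], in the spirit of the analogous normalization lemmas in [BL2] and [BL3]), so there is no in-paper argument to compare yours against; judged on its own, your proposal is correct and is the standard route. The intertwining automorphism $\phi(z^kf(T_q)E_{ij})=d_id_j^{-1}z^kf(\beta T_q)E_{ij}$ is indeed a gradation-preserving automorphism of $\sqan$ (both the rescaling $T_q\mapsto\beta T_q$ and diagonal conjugation respect the product rule), your computation of the transformed parameters $B'=\beta^2B$, $A'=A\beta^{-r}$, $c'_{ij}=c_{ij}e_i/e_j$ with $e_i=d_id_{N+1-i}$ is right, and the normalization $\beta^2=q/B$ together with $A^2(Bq^{-1})^r=1$ correctly forces $A'=\epsilon=\pm1$ while keeping $r$ fixed. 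Your identification of the obstruction is also the right one: since $e_i=e_{N+1-i}$ automatically, the antidiagonal entries $c_{i,N+1-i}$ (the ``fixed points'', which square to $1$ by \eqref{eq:lamolesta}) are genuine invariants of the conjugation, and these are exactly the signs that survive in $\mathbf{1}$. The only compressed step is the last one: you should spell out that the cascade $e_{i+1}=e_ic_{i+1,i}^{-1}$ gives $e_i=e_1c_{i,1}^{-1}$, that the palindromy requirement $e_i=e_{N+1-i}$ is then equivalent via \eqref{eq:productosc} and \eqref{eq:lamolesta} to the vanishing of the fixed-point signs, and that when a sign is $-1$ one instead solves the cascade up to that single sign, which is precisely what the matrix $\mathbf{1}$ in the statement retains; also note that recovering the $d_i$ from a prescribed palindromic $(e_i)$ requires a square root $d_{(N+1)/2}=\sqrt{e_{(N+1)/2}}$ when $N$ is odd, which is harmless over $\CC$. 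None of this is a gap in the strategy, only in the write-up.
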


Thus, the anti-involution is of the following form:
\begin{equation}\label{eq:sigma}
\sigma_{\epsilon,r,N}(z^kh(T_q)E_{i,j})=(\epsilon)^{k}q^{k(k-1)r/2}z^k h(q^{1-k}T^{-1}_q)T^{kr}_qE_{N+i-j,N+1-i}.
\end{equation}
where $\epsilon = \pm 1$, $r \, \in \, \mathbb{C^{\times}}$. For simplicity, denote $\sqnsnn$ the Lie subalgebras of $\sqn$ fixed by minus $\sigma_{\epsilon,r,N}$.

We will denote

\begin{equation*}
\delta_{m,\text{even}}= \begin{cases} 1 & \text{ if } \,  m \, \text{ is even} \\
									  0 & \text{ otherwise}
									  \end{cases}.
\end{equation*}
$\sqnsnn$ inherits a $\mathbb{Z}$-gradation from $\sqn$ since $\sigma$ preserves the principal $\mathbb{Z}-$gradation of $\sqan$. Thus $\sqnsnn = \oplus_{j \, \in \, \mathbb{Z}} (\sqnsnn)_j$. We can now give a description of $(\sqnsnn)_j$. By the division algorithm, let $j=kN+p$ with $0 \leq p \leq N-1$ . Thus,

If $p \neq 0$
\begin{align*}
(\sqnsnn)_j = &  \{ z^k(q^{(k-1)/2}T_q)^{rk/2}(f(q^{(k-1)/2}T_q)E_{i,i-p}\\
&-(\epsilon)^kf((q^{(k-1)/2}T_q)^{-1})E_{N+1-i+p,N+1-i}) | f(w) \, \in \, \mathbb{C}[w, w^{-1}], \, 1+p \leq i \leq N, \\
& \, i \neq (N+1+p)/2 \} \\
& \bigcup \delta_{N+p, \text{odd}} \{ z^{k}(q^{(k-1)/2}T_q)^{rk/2}g(q^{(k-1)/2}T_q)E_{(N+1+p)/2,(N+1-p)/2}  | g(w) \, \in \, \mathbb{C}[w, w^{-1}]^{\epsilon,k} \} \\
& \bigcup \{ z^{k+1}(q^{k/2}T_q)^{r(k+1)/2}(h(q^{k/2}T_q)E_{i,N-p+i}\\
&-(\epsilon)^{k+1}h(q^{-k/2}T^{-1}_q)E_{p+1-i,N+1-i} | h(w) \, \in \, \mathbb{C}[w, w^{-1}], \,  1 \leq i \leq p , \, i \neq (1+p)/2 \} \\
& \bigcup \delta_{p, \text{odd}} \{ z^{k+1}(q^{k/2}T_q)^{r(k+1)/2} \tilde{g}(q^{k/2}T_q)E_{(p+1)/2,(2N+1-p)/2} | \tilde{g}(w) \, \in \, \mathbb{C}[w, w^{-1}]^{\epsilon,k+1} \},
\end{align*}
and for $p=0$
\begin{align*}
(\sqnsnn)_j = &  \{ z^k(q^{(k-1)/2}T_q)^{rk/2}(f(q^{(k-1)/2}T_q)E_{i,i}\\
&-(\epsilon)^kf((q^{(k-1)/2}T_q)^{-1})E_{N+1-i,N+1-i}) | f(w) \, \in \, \mathbb{C}[w, w^{-1}], \, 1 \leq i \leq [N/2] \} \\
& \bigcup \delta_{N, \text{odd}} \{ z^{-k}(q^{(-k-1)/2}T_q)^{-rk/2}g(q^{(-k-1)/2}T_q)E_{(N+1)/2,(N+1)/2}  | g(w) \, \in \, \mathbb{C}[w, w^{-1}]^{\epsilon,k} \} \\
\end{align*}
where $\mathbb{C}[w, w^{-1}]^{\epsilon,k}$ denotes the set of Laurent polynomials such that $f(w^{-1})=-(\epsilon)^kf(w)$.

We denote again $\psi$ the restriction of the $2$-cocycle in \eqref{eq:2cocycle} to $\sqnsnn$.
%
Denote by $\sqnh$ the central extension of $\sqns$ by $\mathbb{C}C$ corresponding to the $2$-cocycle $\psi$. $\sqnh$ is a Lie subalgebra of $\widehat{\sqn}$ by definition.


\section{Parabolic subalgebras of $\sqnsnn$}
In order to characterize the quasifiniteness of the highest weight modules (HWMs) of $\sqnh$ we will study the structure of its parabolic subalgebras and apply general results for quasifinite representations of $\ZZ$-graded Lie algebras obtained in \cite{KL}. We refer to \cite{KL} for proofs and details.
Let $\mathfrak{g}$ be a $\mathbb{Z}$-graded Lie algebra over $\mathbb{C}$,

\begin{equation*}
\mathfrak{g} = \bigoplus_{j \, \in \, \mathbb{Z}} \mathfrak{g}_j, \quad [\mathfrak{g}_i, \mathfrak{g}_j]\subset \mathfrak{g}_{i+j},
\end{equation*}
where $\mathfrak{g}_i$ is not necessarily of finite dimension. Let $\mathfrak{g}_{\pm}=\oplus_{j>0} \mathfrak{g}_{\pm j}$. A subalgebra $\mathfrak{p}$ of $\mathfrak{g}$ is called \textit{parabolic} if it contains $\mathfrak{g}_0 \oplus \mathfrak{g}_+$ as a proper subalgebra, that is

\begin{equation*}
\mathfrak{p} = \bigoplus_{j \, \in \, \mathbb{Z}} \mathfrak{p}_j, \quad \text{ where } \mathfrak{p}_j=\mathfrak{g}_j \text{ for } j \geq 0, \text{ and } \mathfrak{p}_j \neq 0 \text{ for some } j<0.
\end{equation*}

Following \cite{KL}, we assume the following properties of $\mathfrak{g}$:

\begin{enumerate}
\item[(P1)] $\mathfrak{g}_0$ is commutative,
\item[(P2)] if $a \, \epsilon \, \mathfrak{g}_{-k}$ ($k>0$) and $[a,\mathfrak{g}_1]=0$, then $a=0$.
\end{enumerate}

Given $a \, \in \, \mathfrak{g_{-1}}$, $a \neq 0$, we define $\mathfrak{p}^a= \oplus_{j \, \epsilon \, \mathbb{Z}} \mathfrak{p}_j$, where $\mathfrak{p}_j^a=\mathfrak{g}_j$ for all $j \geq 0$, and

\begin{equation*}
\mathfrak{p}_{-1}^a= \sum [ \dots [[a, \mathfrak{g}_0], \mathfrak{g}_0], \dots] , \quad \quad \mathfrak{p}_{-k-1}^a = [\mathfrak{p}_{-1}^a, \mathfrak{p}_{-k}^a].
\end{equation*}

\begin{lemma}\label{eq:lemaremarkvane}
\begin{itemize}
\item[(a)]
For any parabolic subalgebra $\mathfrak{p}$ of $\mathfrak{g}$, $\mathfrak{p}_{-k} \neq 0, \, k>0$, implies \\ $\mathfrak{p}_{-k+1} \neq 0$.

\item[(b)] $\mathfrak{p}^a$ is the minimal parabolic subalgebra containing $a$.

\item[(c)] $\mathfrak{g}^a_0 := [\mathfrak{p}^a,\mathfrak{p}^a] \cap \mathfrak{g}_0 = [a, \mathfrak{g}_1]$
\end{itemize}
\end{lemma}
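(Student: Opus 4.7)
The plan is to treat the three parts in the given order, since (b) and (c) both rely on understanding how the negative pieces $\mathfrak{p}^a_{-k}$ are assembled from $a$.

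For (a), pick $0 \neq x \in \mathfrak{p}_{-k}$. The case $k=1$ is immediate because $\mathfrak{p}_0 = \mathfrak{g}_0 \neq 0$. For $k \geq 2$, property (P2) gives $[x, \mathfrak{g}_1] \neq 0$; since $\mathfrak{g}_1 \subset \mathfrak{p}$ (every parabolic contains all of $\mathfrak{g}_+$), this is a nonzero subset of $\mathfrak{p}_{-k+1}$. This step is short but it is the only place (P2) enters in the three parts.

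For (b), I would first check that $\mathfrak{p}^a$ is genuinely a graded subalgebra. The only nontrivial compatibility is $[\mathfrak{g}_j, \mathfrak{p}^a_{-k}] \subset \mathfrak{p}^a_{j-k}$ for $0 \leq j < k$; this is a routine induction on $k$, using the recursive definition $\mathfrak{p}^a_{-k-1} = [\mathfrak{p}^a_{-1}, \mathfrak{p}^a_{-k}]$, the Jacobi identity, and the base case $[\mathfrak{g}_0, \mathfrak{p}^a_{-1}] \subset \mathfrak{p}^a_{-1}$, which holds directly from the construction of $\mathfrak{p}^a_{-1}$. Since $a \in \mathfrak{p}^a_{-1}$, the subalgebra $\mathfrak{p}^a$ is parabolic and contains $a$. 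For minimality, let $\mathfrak{q}$ be any parabolic subalgebra containing $a$. By gradedness, $a \in \mathfrak{q}_{-1}$; combined with $\mathfrak{g}_0 \subset \mathfrak{q}$, iterating brackets shows $\mathfrak{p}^a_{-1} \subset \mathfrak{q}_{-1}$, and then $\mathfrak{p}^a_{-k} \subset \mathfrak{q}_{-k}$ follows by induction on $k$ using the same recursion.

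Part (c) is the genuine obstacle. The inclusion $[a, \mathfrak{g}_1] \subset [\mathfrak{p}^a, \mathfrak{p}^a] \cap \mathfrak{g}_0$ is immediate. For the reverse, observe that the degree-zero part of $[\mathfrak{p}^a, \mathfrak{p}^a]$ equals $[\mathfrak{g}_0, \mathfrak{g}_0] + \sum_{k \geq 1}[\mathfrak{p}^a_{-k}, \mathfrak{g}_k]$, and $[\mathfrak{g}_0, \mathfrak{g}_0] = 0$ by (P1), so it suffices to prove $[\mathfrak{p}^a_{-k}, \mathfrak{g}_k] \subset [a, \mathfrak{g}_1]$ for every $k \geq 1$. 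For $k=1$, I argue by induction on the length of an iterated bracket $Y_n = [\dots[[a, x_1], x_2], \dots, x_n]$ with $x_i \in \mathfrak{g}_0$: Jacobi gives
$$[[Y_n, x_{n+1}], y] = [Y_n, [x_{n+1}, y]] - [x_{n+1}, [Y_n, y]],$$
where the first term lies in $[a, \mathfrak{g}_1]$ by the inductive hypothesis (since $[x_{n+1}, y] \in \mathfrak{g}_1$) and the second vanishes because $[Y_n, y] \in [a, \mathfrak{g}_1] \subset \mathfrak{g}_0$ and $[\mathfrak{g}_0, \mathfrak{g}_0] = 0$. For $k \geq 2$, a second induction on $k$ using $\mathfrak{p}^a_{-k-1} = [\mathfrak{p}^a_{-1}, \mathfrak{p}^a_{-k}]$ and Jacobi splits $[\mathfrak{p}^a_{-k-1}, \mathfrak{g}_{k+1}]$ into a piece inside $[\mathfrak{p}^a_{-1}, \mathfrak{g}_1]$ (handled above) and a piece inside $[\mathfrak{p}^a_{-k}, \mathfrak{g}_k]$ (the inductive hypothesis). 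The essential use of (P1) in both inductions is what keeps the iterated brackets from leaking outside $[a, \mathfrak{g}_1]$; without commutativity of $\mathfrak{g}_0$ the correction terms would be genuine and the equality would fail.
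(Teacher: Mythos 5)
Your proof is correct; the paper does not actually argue this lemma but simply refers to Lemmas 2.1 and 2.2 of \cite{KL}, and your argument reconstructs essentially that standard proof (P2 for part (a), the Jacobi-identity inductions combined with commutativity of $\mathfrak{g}_0$ for parts (b) and (c)). The only point you gloss over is the closure $[\mathfrak{p}^a_{-k},\mathfrak{p}^a_{-l}]\subset\mathfrak{p}^a_{-k-l}$ for $k,l\ge 2$, which is not literally given by the defining recursion but follows by the same Jacobi induction you already use, so it is a routine omission rather than a gap.
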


\begin{proof}
Cf. \cite{KL} Lemmas 2.1 and 2.2.
\end{proof}

In \cite{BKLY}, for the case of the central extension of the Lie algebra of matrix differential operators on the circle, the existance of some parabolic subalgebras $\mathfrak{p}$ such that $\mathfrak{p}_j = 0$ for $j\gg 0$ was observed. Having in mind that example, they give the following definition.

\begin{definition}
\begin{itemize}
\item[(a)] A parabolic subalgebra $\mathfrak{p}$ is called $\mathit{nondegenerate}$ if $\mathfrak{p}_{-j}$ has finite codimension in $\mathfrak{g}_{-j}$ , for all $j>0$.
\item[(b)] An element $a \, \in \, \mathfrak{g}_{-1}$ is called $\mathit{nondegenerate}$ if $\mathfrak{p}^a$ is nondegenerate.
\end{itemize}
\end{definition}

We will also require the following condition on $\mathfrak{g}$.

\begin{enumerate}
\item[(P3)] If $\mathfrak{p}$ is a nondegenerate parabolic subalgebra of $\mathfrak{g}$, then there exists an nondegenerate element $a$ such that $\mathfrak{p}^a \subseteq \mathfrak{p}$.
\end{enumerate}

Now take a parabolic subalgebra $\mathfrak{p}$ of $\sqnh$. Observe that for each $j \, \in \, \mathbb{N}$, $j=kN+p$ with $0 \leq p \leq N-1$, we have
\begin{align}\label{pj}
&\mathfrak{p}_{-j} = \{ z^{-k}(q^{(-k-1)/2}T_q)^{-rk/2}(f(q^{(-k-1)/2}T_q)E_{i,i+p} \\ \nonumber
&-(\epsilon)^kf(q^{(k+1)/2}T_q^{-1})E_{N+1-i-p,N+1-i}) | f(w) \, \in \, I^i_{-j}, \, 1 \leq i \leq N-p, \, i \neq (N+1-p)/2 \} \\ \nonumber
& \bigcup \delta_{N-p, \text{odd}} \{ z^{-k}(q^{(-k-1)/2}T_q)^{-rk/2}g(q^{(-k-1)/2}T_q)E_{(N+1-p)/2,(N+1+p)/2}  | g(w) \, \in \, I^{(N+1-p)/2}_{-j} \} \\ \nonumber
&\bigcup \{ z^{-k-1}(q^{-1-k/2}T_q)^{-r(k+1)/2}(h(q^{-1-k/2}T_q)E_{i,i-N+p} \\ \nonumber
&-(\epsilon)^{(k+1)}h((q^{-1-k/2}T_q)^{-1})E_{2N+1-i-p,N+1-i}) | h(w) \, \in \, I^i_{-j}, \, N+1-p \leq i \leq N, \, i \neq (2N+1-p)/2  \} \\ \nonumber
& \bigcup \delta_{p, \text{odd}} \{ z^{-k-1}(q^{-1-k/2}T_q)^{-r(k+1)/2} \tilde{g}(q^{-1-k/2}T_q)E_{(2N+1-p)/2,(1+p)/2} | \tilde{g}(w) \, \in \, I^{(2N+1-p)/2}_{-j} \},
\end{align}
where $I^i_{-j}$ is a subspace of $\mathbb{C}[w, w^{-1}]$, $I^{(N+1-p)/2}_{-j}$ is a subspace of $\mathbb{C}[w, w^{-1}]^{\epsilon,k}$ and $I^{(2N+1-p)/2}_{-j}$ is a subspace of $\mathbb{C}[w, w^{-1}]^{\epsilon,k+1}$.

Let us check conditions (P1),(P2) and (P3) for $\sqnh$.

Observe that (P1) is immediate from the definition of $(\sqnh)_0$. (P2) follows from computing the following bracket
\begin{align*}
[z^{l}(q^{(l-1)/2}T_q)^{lr/2}(f(T_q)E_{i,j}-&(\epsilon)^lf(T^{-1}_q)E_{N+1-j,N+1-i}), E_{j,j-1}-E_{N+2-j,N+1-j}]
\end{align*}
and the particular case
\begin{align*}
[z^{l}(q^{(l-1)/2}T_q)^{lr/2}(f(T_q)E_{N/2,N/2}-&(\epsilon)^lf(T^{-1}_q)E_{N/2+1,N/2+1}), E_{N/2,N/2-1}-E_{N/2+2,N/2+1}].
\end{align*}

%
%

$ $\newline
To prove (P3), let $f(w), \, g(w)$ be Laurent polynomials in the variable $w$ with $f \, \in I^i_{-j}$, and let $\mathfrak{p}_{-j}$ with $j = kN+p$ as in \eqref{pj}. Let us first consider $1 \leq i \leq N-p$. If $p =0$, suppose $i \neq (N+1)/2$. We compute the following bracket
\begin{align*}
[z^{-k}(q^{(-k-1)/2}T_q)^{-rk/2}(f(q^{(-k-1)/2}T_q)E_{i,i}-&(\epsilon)^kf(q^{(k+1)/2}T_q^{-1})E_{N+1-i,N+1-i}), \\
& g(q^{-1/2}T_q)E_{i,i}-g(q^{1/2}T^{-1}_q)E_{N+1-i,N+1-i}]
\end{align*}
So, $I^i_{-j}$ satisfies
\begin{equation}\label{eq:IjIdeal}
A_{j}I^i_{-j} \subseteq I^i_{-j},
\end{equation}
where $A_j = \{ g(q^{k/2}w)-g(q^{-k/2}w): g(w) \, \in \, \mathbb{C}[w, w^{-1}] \}$.

If $p \neq 0$, suppose that $i \neq N+1-i$. Computing the following bracket
\begin{align*}
[z^{-k}(q^{(-k-1)/2}T_q)^{-rk/2}(f(q^{(-k-1)/2}T_q)E_{i,i+p}-&(\epsilon)^kf(q^{(k+1)/2}T_q^{-1})E_{N+1-i-p,N+1-i}), \\
& g(q^{-1/2}T_q)E_{i,i}-g(q^{1/2}T^{-1}_q)E_{N+1-i,N+1-i}],
\end{align*}
we see that $I^i_{-j}$ satisfies \eqref{eq:IjIdeal} for $A_j = \{ g(q^{k/2}w): g(w) \, \in \, \mathbb{C}[w, w^{-1}] \}$.

Now, if $N+1-p \leq i \leq N$, we see by computing
\begin{align*}
[z^{-k-1}(q^{-1-k/2}T_q)^{-r(k+1)/2}(f(q^{-1-k/2}T_q)E_{i,i-N+p}-&(\epsilon)^{k+1}f(q^{1+k/2}T_q^{-1})E_{2N+1-i-p,N+1-i}), \\
& g(q^{-1/2}T_q)E_{i,i}-g(q^{1/2}T^{-1}_q)E_{N+1-i,N+1-i}]
\end{align*}
that $I^i_{-j}$ also satisfies \eqref{eq:IjIdeal}, this time for $A_j = \{ g(q^{(-1-k)/2}w): g(w) \, \in \, \mathbb{C}[w, w^{-1}] \}$.

Analogous results can be obtained if $N-p$ is odd for $I^{(N+1-p)/2}_{-j}$ by computing
\begin{align*}
[z^{-k}(q^{(-k-1)/2}T_q)^{-kr/2}&(f(q^{(-k-1)/2}T_q)E_{(N+1-p)/2,(N+1+p)/2}, \\
& g(q^{-1/2}T_q)E_{i,i}-g(q^{1/2}T^{-1}_q)E_{N+1-i,N+1-i}],
\end{align*}
and if $p$ is odd for $I^{(2N+1-p)/2}_{-j}$, computing
\begin{align*}
[z^{-k-1}(q^{-1-k/2}T_q)^{-r(k+1)/2}&(f(q^{-1-k/2}T_q)E_{(2N+1-p)/2,(1+p)/2}, \\
& g(q^{-1/2}T_q)E_{i,i}-g(q^{1/2}T^{-1}_q)E_{N+1-i,N+1-i}].
\end{align*}

Thus, since $\CC[w, w^{-1}]$ is a principal ideal domain, we have proven the following

\begin{lemma}\label{eq:LemaIj}
For $j>0,$
\begin{itemize}
\item[(a)]  $I^i_{-j}$, $I^{(N+1-p)/2}_{-j}$ and $I^{(2N+1-p)/2}_{-j}$ are ideals;
\item[(b)] if $I^i_{-j} \neq 0$, $I^{(N+1-p)/2}_{-j} \neq 0$ and $I^{(2N+1-p)/2}_{-j} \neq 0$, then they have finite codimension in $\mathbb{C}[w, w^{-1}]$.
\end{itemize}
\end{lemma}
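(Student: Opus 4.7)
The plan is to leverage the bracket computations set up immediately before the statement, which for each of the three families of subspaces exhibit an inclusion $A_j I \subseteq I$ for an explicit set $A_j \subset \CC[w,w^{-1}]$ of Laurent polynomials. Part (a) then reduces to verifying that each $A_j$, together with the scalars (which automatically preserve $I$ since $I$ is a $\CC$-subspace), multiplicatively generates the full ring $\CC[w,w^{-1}]$ acting on the relevant containing subspace. Part (b) will then follow from the fact that $\CC[w,w^{-1}]$ is a principal ideal domain, so nonzero ideals are generated by a single element and have finite-dimensional quotient.

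For part (a), I would argue case by case. Whenever the bracket produces a set of the form $A_j=\{g(q^{c}w):g\in\CC[w,w^{-1}]\}$ for some constant $c$ (this covers the situations $p\neq 0$ with $i\neq N+1-i$, as well as $N+1-p\leq i\leq N$), the substitution $g\mapsto g(q^{c}w)$ is an automorphism of $\CC[w,w^{-1}]$, so $A_j=\CC[w,w^{-1}]$ and the ideal property is immediate. In the remaining $p=0$, $i\neq(N+1)/2$ case, $A_j=\{g(q^{k/2}w)-g(q^{-k/2}w):g\in\CC[w,w^{-1}]\}$; testing with $g(w)=w^n$ produces $(q^{nk/2}-q^{-nk/2})w^n$, and since $|q|\neq 1$ by hypothesis, this coefficient is nonzero for every $n\neq 0$ and every $k\neq 0$. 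Hence $A_j$ contains a nonzero scalar multiple of every non-constant monomial, so $I^i_{-j}$ is closed under multiplication by each $w^n$, $n\neq 0$, and therefore, being a subspace over $\CC$, by every element of $\CC[w,w^{-1}]$. For the signed spaces $I^{(N+1-p)/2}_{-j}$ and $I^{(2N+1-p)/2}_{-j}$ one must additionally check that multiplication by $A_j$ preserves the symmetry constraint $f(w^{-1})=-\epsilon^{k}f(w)$ defining $\CC[w,w^{-1}]^{\epsilon,k}$ (respectively $\epsilon,k+1$); this is a short direct verification using the explicit form of the bracket.

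For part (b), assume $I$ is nonzero. Since $\CC[w,w^{-1}]$, being the localization of $\CC[w]$ at $w$, is a principal ideal domain, $I=(f)$ for some nonzero $f$. Factoring $f=w^{a}p(w)$ with $p\in\CC[w]$ and $p(0)\neq 0$, one has $(f)=(p)$ in $\CC[w,w^{-1}]$, and the natural map $\CC[w]/(p)\to\CC[w,w^{-1}]/(p)$ is an isomorphism because $w$ is already invertible modulo $p$. Consequently $\dim\CC[w,w^{-1}]/I=\deg p<\infty$. The same argument, applied inside the respective containing signed subspace, gives the finite codimension statement for $I^{(N+1-p)/2}_{-j}$ and $I^{(2N+1-p)/2}_{-j}$.

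The main obstacle will be part (a) for the signed spaces: one has to carry out the bracket calculation on elements $f(q^{c}T_q)E_{a,b}$ under the imposed symmetry $f(w^{-1})=-\epsilon^{k}f(w)$ and confirm both that the resulting multiplicative action agrees with the $A_j$ above and that it lands back inside the same signed subspace. The rest, including the PID computation for part (b), is essentially routine commutative algebra.
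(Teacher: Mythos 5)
Your proposal is correct and follows essentially the same route as the paper: the displayed bracket computations give $A_jI\subseteq I$ for each family, one checks that $A_j$ together with the scalars spans $\CC[w,w^{-1}]$ (using $|q|\neq 1$ for the case $A_j=\{g(q^{k/2}w)-g(q^{-k/2}w)\}$), and finite codimension of nonzero ideals then follows from $\CC[w,w^{-1}]$ being a principal ideal domain. You in fact supply more detail than the paper, which states the lemma as an immediate consequence of those computations.
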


\medskip

Let $[k]$ denote the integer part of a number $k$. Now we have the following important Proposition.

\begin{proposition}\label{eq:ProposicionSobreD}
\begin{itemize}
\item[(a)] Any nonzero element $d \, \in \, (\sqnh)_{-1}$ is nondegenerate.
\item[(b)] Any parabolic subalgebra of $\sqnh$ is nondegenerate.
\item[(c)] Let  $d \, \in \, (\sqnh)_{-1}$,
\begin{align*}
d= & \sum^{[N/2]+\delta_{N, even}}_{i=1} f_i(q^{-1/2}T_q)E_{i,i+1}-f_i(q^{1/2}T^{-1}_q)E_{N-i, N+1-i} \\
& + \delta_{N, even} g(q^{-1/2}T_q)E_{N/2,N/2+1}+ z^{-1}(q^{-1}T_q)^{-r/2}h(q^{-1}T_q)E_{N,1},
\end{align*}
where $f_i(w), \, g(w)$ and $h(w)$ are Laurent polynomials such that $g(w^{-1})=-g(w)$ and $h(w^{-1})=-\epsilon h(w)$. Then

\begin{align*}
(\sqnh)^{d}_{0} : & = [(\sqnh)_{1}, d] \\
& = \hbox{span} \{ f_{k-1}(q^{-1/2}T_q)(q^{-1/2}T_q)^l(E_{k-1,k-1}-E_{k,k}) \\
& + f_{k-1}(q^{1/2}T^{-1}_q)(q^{-1/2}T_q)^{-l}(E_{N-k+1,N-k+1}-E_{N+2-k,N+2-k}): \\
& 2 \leq k \leq [N/2]+ \delta_{N, odd}, l \, \in \, \mathbb{Z}_{\geq 0} \} \\
& \cup \delta_{N, even} \{ g(q^{-1/2}T_q)((q^{-1/2}T_q)^n-(q^{-1/2}T_q)^{-n})(E_{N/2,N/2}-E_{N/2+1,N/2+1}): \\
& n \, \in \, \mathbb{Z}_{\geq 0}, \, g \, \in \, \mathbb{C}[w, w^{-1}]^{\epsilon,0} \} \\
& \cup \{ h(T_q)(T^m_q-\epsilon T^{-m}_q)E_{N,N}-h(q^{-1}T_q)((q^{-1}T_q)^m-\epsilon (q^{-1}T_q)^{-m})E_{1,1} \\
& +tr_0(\epsilon h(q^{1/2}w^{-1})(w^m-\epsilon w^{-m}))C: m \, \in \, \mathbb{Z}, \, h \, \in \,  \mathbb{C}[w, w^{-1}]^{\epsilon,1}  \}.
\end{align*}
\end{itemize}
\end{proposition}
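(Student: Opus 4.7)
I start with (a): a nonzero $d\in(\sqnh)_{-1}$, necessarily of the form displayed in (c) (with the $f_i$, $g$, $h$ obtained by specializing the slot description of Section \ref{sec:sqn} to $j=-1$). To show $\mathfrak{p}^d$ is nondegenerate I need $\mathfrak{p}^d_{-j}$ of finite codimension in $(\sqnh)_{-j}$ for every $j>0$. The crucial point is that $(\sqnh)_0$ is commutative and consists of the diagonal operators $g(T_q)E_{j,j}-g(T^{-1}_q)E_{N+1-j,N+1-j}$ (together with $\CC C$, and a middle term when $N$ is odd), so iterated commutators of $d$ with $(\sqnh)_0$ act on each slot of $d$ by multiplication by Laurent polynomials in $T_q$. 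Consequently, each of the ideals $I^i_{-1}$, $I^{(N+1-p)/2}_{-1}$, $I^{(2N+1-p)/2}_{-1}$ from \eqref{pj} is the principal ideal generated by the corresponding coefficient of $d$, and Lemma \ref{eq:LemaIj}(b) yields finite codimension of $\mathfrak{p}^d_{-1}$ in $(\sqnh)_{-1}$. For $j\geq 2$ I would use the inductive recipe $\mathfrak{p}^d_{-j}=[\mathfrak{p}^d_{-1},\mathfrak{p}^d_{-j+1}]$ and reapply Lemma \ref{eq:LemaIj} to the resulting ideals. Part (b) is then immediate: by Lemma \ref{eq:lemaremarkvane}(a), any parabolic $\mathfrak{p}$ satisfies $\mathfrak{p}_{-1}\neq 0$; choosing any nonzero $d\in \mathfrak{p}_{-1}$, (a) gives nondegeneracy of $\mathfrak{p}^d$, and Lemma \ref{eq:lemaremarkvane}(b) gives $\mathfrak{p}^d\subseteq \mathfrak{p}$, forcing finite codimension of each $\mathfrak{p}_{-j}$ in $(\sqnh)_{-j}$.

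\textbf{Plan for part (c).} This is a direct bracket computation. I would first list a spanning set of $(\sqnh)_1$ by specializing the formulas of Section \ref{sec:sqn} to $j=1$: the adjacent subdiagonal pairs $f(q^{-1/2}T_q)E_{i,i-1}-f(q^{1/2}T^{-1}_q)E_{N+2-i,N+1-i}$, a middle term when $N$ is even, and a $z^{1}$-contribution at position $E_{1,N}$ with the symmetry constraint encoded by $\CC[w,w^{-1}]^{\epsilon,1}$. For each such basis vector $e$ and each of the four types of summands in $d$, I would apply the product rule for $\sqan$ and the bracket formula, tracking how the $T_q$-shifts compose. Brackets of the $E_{k,k-1}$ terms against the $E_{k-1,k}$ slot of $d$ contribute the first displayed family (using $[E_{k,k-1},E_{k-1,k}]=E_{k,k}-E_{k-1,k-1}$ together with its image under $\sigma_{\epsilon,r,N}$); the middle slot of $d$ pairs with its middle counterpart in $(\sqnh)_1$ to give the second family, and the constraint $g(w^{-1})=-g(w)$ explains the appearance of $(w^{n}-w^{-n})$; finally, the pairing of the $z^1$-slot in $(\sqnh)_1$ against the $z^{-1}$ slot of $d$ produces simultaneously the third diagonal family and, via the $2$-cocycle \eqref{eq:2cocycle}, the central term $tr_0(\epsilon h(q^{1/2}w^{-1})(w^m-\epsilon w^{-m}))C$. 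All other bracket combinations involve pairs of matrix positions that do not overlap, and therefore vanish.

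\textbf{Main obstacle.} Parts (a) and (b) are essentially structural: once the diagonal, commutative action of $(\sqnh)_0$ on the slots of $(\sqnh)_{-1}$ is understood, finite codimension follows from the principal-ideal-domain statement already packaged in Lemma \ref{eq:LemaIj}. The real work is in (c). The main obstacle is the bookkeeping: several slot types in both $d$ and $(\sqnh)_1$, the half-integer shifts in $q$ that change the polynomial arguments from $T_q$ to $q^{\pm 1/2}T_q$, the parity-dependent symmetry conditions encoded in $\CC[w,w^{-1}]^{\epsilon,k}$, and in particular the isolation of the central cocycle contribution, which arises only from the single $z^1$--versus--$z^{-1}$ pairing. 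The delicate point will be to verify that as $e$ runs over all of $(\sqnh)_1$ the resulting coefficient polynomials exhaust precisely the displayed parameter spaces (no more and no less), and that the symmetry constraints inherited from $\sigma_{\epsilon,r,N}$ translate exactly into the $(w^m-\epsilon w^{-m})$-type antisymmetrizations appearing in the statement.
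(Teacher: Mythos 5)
Your plan follows the paper's own proof essentially verbatim: parts (a) and (b) are obtained exactly as in the paper by combining Lemma \ref{eq:lemaremarkvane} with the finite-codimension statement of Lemma \ref{eq:LemaIj}(b) and the minimality $\mathfrak{p}^d\subseteq\mathfrak{p}$, and part (c) is the same direct computation of $[d,a]$ against the adjacent-subdiagonal elements $(q^{-1/2}T_q)^lE_{k,k-1}-(q^{-1/2}T_q)^{-l}E_{N+2-k,N+1-k}$, the middle-slot elements when $N$ is even, and $zT_q^{r/2}(T_q^m-\epsilon T_q^{-m})E_{1,N}$. The proposal is correct in approach; the only (harmless) overstatement is calling $I^i_{-1}$ the principal ideal generated by the coefficient of $d$, whereas all that is used is that it is a nonzero ideal.
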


\begin{proof}
Let $0 \neq d \, \in \, (\sqnh)_{-1}$, by Lemma \eqref{eq:lemaremarkvane}, part (a), $\mathfrak{p}^d_{-j} \neq 0$ for all $j \geq 1$. So, by  Lemma \eqref{eq:LemaIj} part (b), part (a) follows. Let $\mathfrak{p}$ be any parabolic subalgebra  of $\sqnh$, using Lemma 2.1 and 2.2 in \cite{KL}, we get $\mathfrak{p}_{-1} \neq 0$. Then using (a) and $\mathfrak{p}^d \subseteq \mathfrak{p}$ (for any nonzero $d \, \in \, \mathfrak{p}_{-1}$) we obtain (b). Finally, part (c) follows by computing the commutators $[d,a]$ with $a = (q^{-1/2}T_q)^lE_{k,k-1}-(q^{-1/2}T_q)^{-l}E_{N+2-k,N+1-k} $ with $2 \leq k \leq [N/2]+ \delta_{N, odd}$; $a= \delta_{N, even}((q^{-1/2}T_q)^n-(q^{-1/2}T_q)^{-n})E_{N/2+1,N/2}$ and $a= z T^{r/2}_q(T^m_q-\epsilon T^{-m}_q)E_{1,N}$, with $l, \, n, \, m \, \in \, \mathbb{Z}_{\geq 0}$.
\end{proof}

Summarizing, we have proven that the following properties are satisfied by $\sqnh$:
\begin{itemize}
\item[(P1)] $(\sqnh)_0$ is commutative;
\item[(P2)] if $a \, \in \, (\sqnh)_{-j} \, (j>0)$ and $[a,(\sqnh)_1]=0$, then $a=0$;
\item[(P3)] if $\mathfrak{p}$ is a nondegenerate parabolic subalgebra of $\sqnh$, then there exists a nondegenerate element $a$, such that $\mathfrak{p}^a \subseteq \mathfrak{p}.$
\end{itemize}

Observe that (P3) follows from Proposition \eqref{eq:ProposicionSobreD}, parts (a) and (b).


\section{Characterization of quasifinite highest weight modules of $\sqnh$}
Now, we begin our study of quasifinite representations over the lie algebras $\sqnh$. Let $\mathfrak{g}$ be a Lie algebra. For a Lie algebra $\mathfrak{g}$, a $\mathfrak{g}$-module $V$ is called $\mathbb{Z}$-graded if $V= \oplus_{j \, \in \, \mathbb{Z}} V_j$ and $\mathfrak{g}_i V_j \subset V_{i+j}$. A $\mathbb{Z}$-graded $\mathfrak{g}$-module $V$ is called \textit{quasifinite} if $\hbox{dim} V_j < \infty$ for all $j$.

Given $\lambda \, \in \, \mathfrak{g}^{\ast}_0$, a \textit{highest weight module} is a $\mathbb{Z}$-graded $\mathfrak{g}$-module $V(\mathfrak{g},\lambda)$ generated by a heighest weight vector $v_{\lambda} \, \in \, V(\mathfrak{g}, \lambda)$ which satisfies
\begin{equation*}
hv_{\lambda} = \lambda(h)v_{\lambda} \quad (h \, \in \, \mathfrak{g}_0), \quad \quad \mathfrak{g}_+v_{\lambda}=0.
\end{equation*}
A nonzero vector $v \, \in \, V(\mathfrak{g}, \lambda)$ is called $\mathit{singular}$ if $\mathfrak{g}_+v_{\lambda}=0.$ The \textit{Verma module} over $\mathfrak{g}$ is defined as usual:

\begin{equation*}
M(\mathfrak{g},\lambda) = \mathcal{U}(\mathfrak{g}) \otimes_{\mathcal{U}(\mathfrak{g}_0 \oplus\mathfrak{g}_+)} \mathbb{C}_{\lambda},
\end{equation*}
where $\mathbb{C}_{\lambda}$ is the one-dimensional $(\mathfrak{g}_0 \oplus\mathfrak{g}_+)$-module given by $h \mapsto \lambda(h)$ if $h \, \in \, \mathfrak{g}_0$, $\mathfrak{g}_+ \mapsto 0$, and under the action of $\mathfrak{g}$ is induced by the left multiplication in $\mathcal{U}(\mathfrak{g})$. Here and further $\mathcal{U}\mathfrak{(g)}$ stands for the universal enveloping algebra of the Lie algebra $\mathfrak{g}$. Any highest-weight module $V(\mathfrak{g}, \lambda)$ is a quotient module of $M(\mathfrak{g}, \lambda)$. The irreducible module $L(\mathfrak{g}, \lambda)$ is the quotient of $M(\mathfrak{g}, \lambda)$ by the maximal proper graded module. We shall write $M(\lambda)$ and $L(\lambda)$ in place of $M(\mathfrak{g}, \lambda)$ and $L(\mathfrak{g}, \lambda)$ if no ambiguity may arise.

Consider a parabolic subalgebra $\mathfrak{p}= \oplus_{j \, \in \, \mathbb{Z}} \mathfrak{p}_j$ of $\mathfrak{g}$ and let $\lambda \, \in \, \mathfrak{g}^{\ast}_0$ be such that $\lambda|_{\mathfrak{g}_0 \cap [\mathfrak{p},\mathfrak{p}]}=0$. Then the $(\mathfrak{g}_0 \oplus \mathfrak{g}_+)$-module  $\mathbb{C}_{\lambda}$ extends to a $\mathfrak{p}$-module by letting $\mathfrak{p}_j$ act as $0$ for $j <0$, and we may construct the highest-weight module

\begin{equation*}
M(\mathfrak{g}, \mathfrak{p}, \lambda)=\mathcal{U} \otimes_{\mathcal{U}(\mathfrak{p})} \mathbb{C}_{\lambda}
\end{equation*}
called the \textit{generalized Verma module}. Clearly all these heighest weight modules are graded.

From now on we will consider $\lambda \, \in \, \widehat{\mathfrak{g}}_0^{\ast}$. By Theorem 2.5 in \cite{K}, we have the following.

\begin{theorem}\label{eq:teoremacuasifinito}
The following conditions on $\lambda \, \in \, \mathfrak{g}^{\ast}_0$ are equivalent:
\begin{enumerate}
\item $M(\lambda)$ contains a singular vector $a . v_{\lambda}$ in $M(\lambda)_{-1}$ where $a$ is nondegenerate;
\item there exists a nondegenerate element $a \, \in \, \mathfrak{g}_{-1}$, such that $\lambda([\mathfrak{g}_1,a])=0$;
\item $L(\lambda)$ is quasifinite;
\item there exists a nondegenerate element  $a \, \in \, \mathfrak{g}_{-1}$, such that $L(\lambda)$ is the irreducible quotient of the generalized Verma module $M(\mathfrak{g}, \mathfrak{p}^a,\lambda)$.
\end{enumerate}
\end{theorem}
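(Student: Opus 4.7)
The plan is to prove the four conditions equivalent by establishing the cycle $(1)\Leftrightarrow(2)$, $(2)\Rightarrow(4)\Rightarrow(3)\Rightarrow(2)$, following the general template of \cite{KL} and relying on the properties (P1)--(P3) verified for $\sqnh$ in Proposition~\ref{eq:ProposicionSobreD}.

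First I would establish $(1)\Leftrightarrow(2)$ by a direct computation: a homogeneous vector $a.v_\lambda\in M(\lambda)_{-1}$ is singular iff $\mathfrak{g}_j.(a.v_\lambda)=0$ for every $j>0$, which is automatic for $j\geq 2$ because $M(\lambda)_{>0}=0$, and for $j=1$ reduces via
\begin{equation*}
b.(a.v_\lambda) \;=\; [b,a].v_\lambda + a.(b.v_\lambda) \;=\; \lambda([b,a])\,v_\lambda \qquad (b\in\mathfrak{g}_1)
\end{equation*}
to the condition $\lambda([\mathfrak{g}_1,a])=0$. For $(2)\Rightarrow(4)$, Lemma~\ref{eq:lemaremarkvane}(c) identifies $\mathfrak{g}_0\cap[\mathfrak{p}^a,\mathfrak{p}^a]$ with $[a,\mathfrak{g}_1]$, so the hypothesis says $\lambda$ vanishes on this intersection; then $\mathbb{C}_\lambda$ extends to a $\mathfrak{p}^a$-character by letting $\mathfrak{p}^a_-$ act trivially, and $L(\lambda)$ is realized as the unique irreducible quotient of the generalized Verma module $M(\mathfrak{g},\mathfrak{p}^a,\lambda)$. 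For $(4)\Rightarrow(3)$, I would pick a graded complement $V\subset\mathfrak{g}_-$ of $\mathfrak{p}^a_-$; by PBW, $M(\mathfrak{g},\mathfrak{p}^a,\lambda)\cong S(V)\otimes\mathbb{C}_\lambda$ as graded vector spaces, and since $a$ is nondegenerate each $V_{-j}$ is finite dimensional, forcing each graded piece of $S(V)$, and hence of the quotient $L(\lambda)$, to be finite dimensional.

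The main step is $(3)\Rightarrow(2)$. I would define
\begin{equation*}
\mathfrak{p}_{-j}\;:=\;\{\,a\in(\sqnh)_{-j}\ :\ a.v_\lambda=0\text{ in }L(\lambda)\,\}\quad(j>0),\qquad \mathfrak{p}_j\;:=\;(\sqnh)_j\quad(j\geq 0),
\end{equation*}
and check that $\mathfrak{p}:=\bigoplus_j\mathfrak{p}_j$ is a graded subalgebra, using $\mathfrak{g}_+v_\lambda=0$ and $\mathfrak{p}_{-j}v_\lambda=0$ to handle the various brackets $[\mathfrak{g}_i,\mathfrak{p}_{-j}]$. Since $(\sqnh)_{-j}/\mathfrak{p}_{-j}$ embeds into the finite-dimensional $L(\lambda)_{-j}$, $\mathfrak{p}$ is a nondegenerate parabolic, and property (P3) from Proposition~\ref{eq:ProposicionSobreD} furnishes a nondegenerate $a\in(\sqnh)_{-1}$ with $\mathfrak{p}^a\subseteq\mathfrak{p}$. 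In particular $a.v_\lambda=0$ in $L(\lambda)$, so writing $L(\lambda)=M(\lambda)/J$ with $J$ the maximal proper graded submodule, $a.v_\lambda\in J\cap M(\lambda)_{-1}$; since $\mathfrak{g}_+\bigl(J\cap M(\lambda)_{-1}\bigr)\subset J\cap M(\lambda)_{\geq 0}=0$, this $a.v_\lambda$ is singular in $M(\lambda)$, and $(1)\Leftrightarrow(2)$ yields $\lambda([\mathfrak{g}_1,a])=0$.

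The hardest part will be this final implication $(3)\Rightarrow(2)$: all three properties (P1)--(P3) have to come together, and the existence of the nondegenerate element inside $\mathfrak{p}$ is exactly what (P3)---the most intricate of the three, proven in Proposition~\ref{eq:ProposicionSobreD}(a)--(b)---is designed to guarantee. Once that input is granted, the argument is formal and parallel to the ones used in \cite{KL}, \cite{BL1} and \cite{BL2}, so no genuinely new difficulty should appear beyond the verification of hypotheses already carried out in Section~4.
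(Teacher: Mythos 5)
Your proof is correct and is essentially the argument the paper relies on: the paper offers no proof of its own here, simply invoking Theorem 2.5 of the Kac--Liberati framework, and your cycle $(1)\Leftrightarrow(2)$, $(2)\Rightarrow(4)\Rightarrow(3)\Rightarrow(2)$ — with (P3) supplying the nondegenerate element inside the annihilator parabolic in the step $(3)\Rightarrow(2)$ — is exactly the standard proof from \cite{KL}. The only detail you leave implicit is that your $\mathfrak{p}$ is genuinely parabolic, i.e.\ $\mathfrak{p}_{-1}\neq 0$, which holds because $\mathfrak{g}_{-1}$ is infinite dimensional while $L(\lambda)_{-1}$ is not.
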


$ $\newline

Consider $\widehat{\mathfrak{g}}=\sqnh$. A functional $\lambda \, \in \, (\sqnh)^{\ast}_0$ is described by its \textit{labels},
\begin{align*}
\bigtriangleup_{i,l} =  \lambda ((q^{-1/2}T_q)^lE_{i,i}-(q^{-1/2}T_q)^{-l}E_{N+1-i,N+1-i}), \\
\bigtriangleup_{N,l} =  \lambda ((T_q^l+T_q^{-l})E_{N,N}-((q^{-1}T_q)^l+(q^{-1}T_q)^{-l})E_{1,1})
\end{align*}
with $l \, \in \, \mathbb{Z}_{\geq 0}$, $1 < i \leq [N/2]+\delta_{N,even}$ and the \textit{central charge} $c=\lambda(C)$. We shall consider the generating series

\begin{equation}
\bigtriangleup_i(x)= \sum_{l \in \mathbb{Z}} x^{-l} \bigtriangleup_{i,l} \quad 1 < i \leq [N/2]+\delta_{N,even} \quad \text{ and } \quad \bigtriangleup_N(x)= \sum_{l \in \mathbb{Z}} x^{-l} \bigtriangleup_{N,l}.
\end{equation}

Recall that a \textit{quasipolynomial} is a linear combination of functions of the form $p(x)q^{\alpha x}$, where $p(x)$ is a polynomial and $\alpha \, \in \, \mathbb{C}$. That is, it satisfies a nontrivial linear differential equation with constant coefficients. We also have the following well-known proposition.

\begin{proposition}\label{eq:proposicionquasi}
Given a quasipolynomial $P$, and a polynomial $B(x)= \prod_i(x-A_i)$, take $b(x)=\prod_i(x-a_i)$ where $a_i=e^{A_i}$, then $b(x)(\sum_{n \, \in \, \mathbb{Z}}P(n)x^{-n})=0$ if and only if $B(d/dx)P(x)=0$.
\end{proposition}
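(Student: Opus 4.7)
The plan is to convert the formal series identity into an equivalent operator identity on the space of quasipolynomials, and then match kernels of $b(T)$ and $B(D)$, where $T$ denotes the shift operator $Tf(x) = f(x+1)$.

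First I would verify the translation
\begin{equation*}
b(x) \sum_{n \in \ZZ} P(n)\, x^{-n} \;=\; \sum_{m \in \ZZ} (b(T)P)(m)\, x^{-m},
\end{equation*}
which is immediate after writing $b(x) = \sum_k b_k x^k$ and collecting the coefficient of $x^{-m}$. Thus $b(x) \sum P(n) x^{-n} = 0$ is equivalent to $(b(T)P)(m) = 0$ for every $m \in \ZZ$. Since $T$ preserves the space of quasipolynomials, $Q := b(T)P$ is itself a quasipolynomial, and a standard Vandermonde/Wronskian argument ensures that a nonzero quasipolynomial cannot vanish on all of $\ZZ$ (under the implicit convention, natural in the paper's setting with a fixed $q$, that distinct exponents $\alpha_j$ appearing in $P$ yield distinct values $e^{\alpha_j}$). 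Hence $b(x)\sum P(n)x^{-n} = 0$ is equivalent to the functional identity $b(T)P \equiv 0$.

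Next I would compare the kernels of $b(T)$ and $B(D)$ by decomposing $P = \sum_j p_j(x) e^{A_j x}$ into its exponential components (with distinct $A_j$). On each block $\CC[x]\, e^{A_j x}$, the factor $(D - A_j)$ acts as $p \mapsto p'$, a nilpotent strictly lowering polynomial degree, while $(D - A')$ for $A' \neq A_j$ acts as $p \mapsto (A_j - A')p + p'$, a scalar plus nilpotent, hence invertible on $\CC[x]$. Correspondingly $(T - a_j)$ with $a_j = e^{A_j}$ acts as $p \mapsto a_j(p(x+1) - p(x)) = a_j\Delta p$, again a degree-lowering nilpotent, whereas $(T - a')$ with $a' \neq a_j$ acts as $p \mapsto (a_j - a')p + a_j \Delta p$, invertible. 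Under the bijection $A_i \leftrightarrow a_i = e^{A_i}$, the nilpotent factors of $B(D)$ acting on $\CC[x]\, e^{A_j x}$ correspond exactly to those of $b(T)$, with identical multiplicities, so
\begin{equation*}
\ker B(D) \,\cap\, \CC[x]\, e^{A_j x} \;=\; \CC_{<m_j}[x]\, e^{A_j x} \;=\; \ker b(T) \,\cap\, \CC[x]\, e^{A_j x},
\end{equation*}
where $m_j$ is the multiplicity of $A_j$ as a root of $B$. Summing over $j$ yields $\ker B(D) = \ker b(T)$ on the full space of quasipolynomials, which combined with the translation above finishes the proof.

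The main obstacle is the branch-of-log ambiguity: if two distinct $A_i, A_j$ happen to satisfy $e^{A_i} = e^{A_j}$, then $b$ acquires a repeated root where $B$ does not, and the correspondence between nilpotent factors breaks down. The statement is therefore implicitly understood in a fundamental domain for $\alpha \mapsto e^{\alpha}$, which is automatic in the paper's setting where quasipolynomials arise as $p(x) q^{\alpha x}$ for a fixed $q$ with $|q|\neq 1$ and a fixed branch of $\log q$.
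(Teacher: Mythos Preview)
The paper does not prove this proposition at all: it is introduced with the phrase ``We also have the following well-known proposition'' and then simply stated. So there is no proof in the paper to compare your attempt against.

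Your argument is correct and is essentially the standard one for this folklore fact. The translation to the shift operator via $b(x)\sum_n P(n)x^{-n} = \sum_m (b(T)P)(m)\,x^{-m}$ is right, the passage from ``vanishes at every integer'' to ``identically zero'' for a quasipolynomial is the correct reduction, and the blockwise comparison of $\ker B(D)$ and $\ker b(T)$ on each $\CC[x]\,e^{A_j x}$ via the nilpotent/invertible dichotomy is exactly how one sees that the two kernels coincide. Your remark about the branch-of-log issue is also the right caveat, and as you note it is harmless in the paper's setting where everything is written in terms of a fixed $q$ with $|q|\neq 1$ and a fixed choice of $\log q$.
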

If the polynomial $B$ is even we call $P$ an \textit{even quasipolynomial}. As a result, one has the following characterization of quasifinite heighest weight modules over $\widehat{\mathfrak{g}}$.


\begin{theorem}\label{eq:caracterizacion}
A $\sqnh$-module $L(\lambda)$ is quasifinite if and only if one of the following conditions holds:
\begin{enumerate}

\item There exist monic polynomials $b_1(w), \cdots , b_{[N/2]-\delta_{N, even} }(x), b^{\epsilon}_N(w)$ such that
\begin{align}
b_i(x)(\bigtriangleup_{i+1}(x)- & \bigtriangleup_{i}(x))=0 \quad \text{ for } \quad 1 < i \leq [N/2]-\delta_{N, even}  \quad \text{ and } \label{eq:deltagral} \\
& b^{\epsilon}_N(x)(\bigtriangleup_N(x)+2c)=0 \label{eq:deltacasoN}
\end{align}
Moreover, if $N$ is even there exists a monic polynomial $b_{N/2}(x)$ such that
\begin{equation}
b_{N/2}(x)(\bigtriangleup_{1+N/2}(x)-\bigtriangleup_{N/2}(x))=0
\end{equation}
\item There exist quasipolynomials $P_i$ and even quasipolynomials $P^{\epsilon}_N$ such that $( n \in \NN)$
\begin{align}
P_{i}(n)=\bigtriangleup_{i,n}-\bigtriangleup_{i+1,n} \quad \text{ for } \quad  1 < i \leq [N/2]-\delta_{N, even} \quad  \text{ and } \label{eq:pigral} \\
P^{\epsilon}_N(n)=\bigtriangleup_{N,n} \quad \text{ for } \quad  n \neq 0 \quad \text{ and } \quad P^{\epsilon}_N(0)=-2c. \label{eq:pnepsilon}
\end{align}
Moreover, if $N$ is even there exists an even quasipolynomial $P_{N/2}$ such that
\begin{equation} \label{eq:pnsobre2}
P_{N/2}(n)=\bigtriangleup_{N/2,n}-\bigtriangleup_{N/2+1,n}.
\end{equation}

\end{enumerate}
\end{theorem}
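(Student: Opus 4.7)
The plan is to apply Theorem \ref{eq:teoremacuasifinito} to $\sqnh$, making essential use of the explicit form of $(\sqnh)_0^d = [(\sqnh)_1, d]$ computed in Proposition \ref{eq:ProposicionSobreD}(c). By parts (a) and (b) of that proposition, every nonzero $d \in (\sqnh)_{-1}$ is nondegenerate and every parabolic subalgebra is nondegenerate, so Theorem \ref{eq:teoremacuasifinito} reduces quasifiniteness of $L(\lambda)$ to the existence of some nonzero $d \in (\sqnh)_{-1}$ with $\lambda|_{(\sqnh)_0^d} = 0$.

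First I would parametrize such a $d$ by the data of Proposition \ref{eq:ProposicionSobreD}(c): Laurent polynomials $f_i$ for the ``generic'' blocks, a Laurent polynomial $g \in \CC[w,w^{-1}]^{\epsilon,0}$ for the middle block (when $N$ is even), and a Laurent polynomial $h \in \CC[w,w^{-1}]^{\epsilon,1}$ for the boundary block. Evaluating $\lambda$ on each of the three families of spanning elements of $(\sqnh)_0^d$ listed in \ref{eq:ProposicionSobreD}(c) and writing $f_{k-1}(w) = \sum_j c_j^{(k)} w^j$, the condition $\lambda|_{(\sqnh)_0^d} = 0$ becomes, for the generic blocks, the recursion
\begin{equation*}
\sum_j c_j^{(k)}\bigl(\bigtriangleup_{k-1,\, l+j} - \bigtriangleup_{k,\, l+j}\bigr) = 0 \qquad \text{for all } l \in \ZZ_{\geq 0},
\end{equation*}
and analogous recursions for the $g$- and $h$-data. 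In the boundary case, the central-term summand $tr_0(\epsilon h(q^{1/2}w^{-1})(w^m - \epsilon w^{-m}))\,C$ contributes $-2c$ precisely at $m = 0$.

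Next I would translate these recursions into equalities of formal series in $x$. After clearing denominators of the Laurent polynomial $f_{k-1}$ and normalizing its leading coefficient, one obtains a monic polynomial $b_{k-1}(x)$ for which the recursion above is equivalent to $b_{k-1}(x)(\bigtriangleup_k(x) - \bigtriangleup_{k-1}(x)) = 0$, since multiplication by $x^j$ in the generating series shifts the label index by $j$. An analogous construction produces $b_{N/2}(x)$ from $g$ and $b_N^\epsilon(x)$ from $h$; the palindromic conditions $g(w^{-1}) = -g(w)$ and $h(w^{-1}) = -\epsilon h(w)$ force $b_{N/2}$ and $b_N^\epsilon$ to be even polynomials, and the central term yields precisely the shift in $b_N^\epsilon(x)(\bigtriangleup_N(x) + 2c) = 0$. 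The equivalence of (1) and (2) is then immediate from Proposition \ref{eq:proposicionquasi}: a formal series is annihilated by a polynomial iff its coefficients come from a quasipolynomial (and the polynomial is even iff the quasipolynomial is). The special value $P_N^\epsilon(0) = -2c$ reflects the fact that it is $\bigtriangleup_N(x) + 2c$, not $\bigtriangleup_N(x)$ itself, which is annihilated.

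The main obstacle will be the bookkeeping: tracking the rescalings by powers of $q^{\pm 1/2}$ inherent in the description of $(\sqnsnn)_j$, exploiting the intrinsic symmetry $\bigtriangleup_{i,-l} = -\bigtriangleup_{N+1-i,\, l}$ built into the definition of the labels in order to extend the recursions from $l \in \ZZ_{\geq 0}$ to all $l \in \ZZ$ (so that they really are statements about the two-sided generating series), and correctly isolating the central-charge contribution in the boundary case so as to obtain the precise $+2c$ shift. With this bookkeeping in place, the proof reduces to a direct translation of the conditions in Proposition \ref{eq:ProposicionSobreD}(c) into polynomial identities on generating series, following the template of \cite{BL2,BL3}.
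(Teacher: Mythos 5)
Your proposal is correct and follows essentially the same route as the paper: combine Proposition \ref{eq:ProposicionSobreD}(c) with Theorem \ref{eq:teoremacuasifinito}(b), rewrite the vanishing of $\lambda$ on $(\sqnh)_0^d$ as recursions in the labels, pass to generating series to obtain the monic polynomials, and invoke Proposition \ref{eq:proposicionquasi} together with the symmetries $g(w^{-1})=-g(w)$, $h(w^{-1})=-\epsilon h(w)$ (which make the associated $B$ even) to get the (even) quasipolynomials. One small inaccuracy: in the paper the central term contributes $-2\epsilon c_m c$ to the recursion for \emph{every} $m$, not only at $m=0$, and the $+2c$ shift in $b_N^{\epsilon}(x)(\bigtriangleup_N(x)+2c)=0$ arises from summing all of these contributions against $x^m-\epsilon x^{-m}$.
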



\begin{proof}
From Proposition \ref{eq:ProposicionSobreD} part (c) and Theorem \ref{eq:teoremacuasifinito} part (b), we have that $L(\lambda)$ is quasifinite if and only if there exist (monic) Laurent polynomials

\begin{equation*}
h^{\epsilon}(w)= \sum ^{p}_{t=0} c_t(w^t-\epsilon w^{-t}), \quad g(w)= \sum ^{u}_{s=0} d_s (w^s-w^{-s}), \quad f_i(w)= \sum ^{m_i}_{v=-m_i} a_{i,v} w^v
\end{equation*}
for $1 < i \leq [N/2]- \delta_{N,even}$, such that for each $l, \, n, \, m \, \epsilon \, \mathbb{Z}_{\geq 0}$, we have
\begin{align*}
& \lambda(f_{k-1}(q^{-1/2}T_q)(q^{-1/2}T_q)^l(E_{k-1,k-1}-E_{k,k}) \\
& + f_{k-1}(q^{1/2}T^{-1}_q)(q^{-1/2}T_q)^{-l}(E_{N-k+1,N-k+1}-E_{N+2-k,N+2-k}))=0
\end{align*}
with $1 < k \leq [N/2]- \delta_{N,even}$,
\begin{align*}
\lambda(& h(T_q)(T^m_q-\epsilon T^{-m}_q)E_{N,N}-h(q^{-1}T_q)((q^{-1}T_q)^m-\epsilon (q^{-1}T_q)^{-m})E_{1,1} \\
&  +tr_0(\epsilon h(q^{1/2}w^{-1})(w^m-\epsilon w^{-m}))C)=0,
\end{align*}
and
\begin{equation*}
\delta_{N, even}\lambda(g(q^{-1/2}T_q)((q^{-1/2}T_q)^n-(q^{-1/2}T_q)^{-n})(E_{N/2,N/2}-E_{N/2+1,N/2+1}))=0.
\end{equation*}
These conditions can be rewritten as follows:

\begin{equation}\label{eq:ecuacionconaies}
0=\sum ^{m_i}_{v=-m_i} a_{i,v}(\bigtriangleup_{i,v+l}-\bigtriangleup_{i+1,v+l})
\end{equation}
for all $1 < i \leq [N/2]- \delta_{N,even}$ and $l \, \in \, \mathbb{Z}_{\geq 0}$, and
\begin{equation}\label{eq:ecuaciondeh}
0=\sum^{p}_{t=0}c_t(\bigtriangleup_{N,t+m}-\epsilon \bigtriangleup_{N,t-m})+tr_0(\epsilon h(q^{1/2}w^{-1})(w^m-\epsilon w^{-m}))C)=0
\end{equation}
with  $m \, \in \, \NN$. Finally, if $N$ is even,
\begin{equation}\label{eq:ecuacionN/2}
0=\sum^{u}_{s=0} d_s  (\bigtriangleup_{N/2,s+n}-\bigtriangleup_{N/2,-s+n}+\bigtriangleup_{1+N/2,-s+n}-\bigtriangleup_{1+N/2,s+n})
\end{equation}
with  $n \, \in \, \mathbb{Z}_{\geq 0}$. Let

\begin{equation*}
F_{k}(x)=\bigtriangleup_k(x)-\bigtriangleup_{k+1}(x)
\end{equation*}
for  $1 < k \leq [N/2]- \delta_{N,even}$.

Let us first analyze \eqref{eq:ecuacionconaies}. Multiplying both sides by $x^{-l}$ and adding over $l \in \mathbb{Z}$, we get
\begin{equation}
0=\sum ^{m_i}_{v=-m_i} a_{i,v} x^v F_i(x)=f_i(x)F_i(x).
\end{equation}
We construct $\tilde{b}_i(x)=x^{m_i} f_{i}(x) \in \CC[x]$. The equivalence of (1) and (2) for this case follows from the fact that \eqref{eq:deltagral} holds since it also holds multiplying both sides of this formula by $x^{m_i}$ with $m_i \geq 0$. Due to Proposition \ref{eq:proposicionquasi}, the existence of the qualispolynomials $P_i(x)$ for $1 < i \leq [N/2]-\delta_{N, even}$ is clear.

Let us now study \eqref{eq:ecuaciondeh}. Making use of the definition of $tr_0$ given in Section \ref{sec:sqn} and the fact that $\bigtriangleup_{N,l}= \bigtriangleup_{N,-l}$, we get

\begin{equation*}
0=\sum^{p}_{t=0}c_t(\bigtriangleup_{N,t+m}-\epsilon \bigtriangleup_{N,t-m})-2\epsilon c_m c
\end{equation*}

Multiplying both sides by $x^m-\epsilon x^{-m}$ and adding over $m \in \ZZ_{\geq 0}$, we obtain

\begin{equation*}
0=\sum^{p}_{t=0}c_t(x^{-t}-\epsilon x^{t})\bigtriangleup_N(x)-\sum_{m \in \mathbb{Z}}(x^m-\epsilon x^{-m}) (2\epsilon c_m c)=-\epsilon h^{\epsilon}(x)(\bigtriangleup_{N}(x)+2c).
\end{equation*}

Once again, \eqref{eq:deltacasoN} holds since it also holds multiplying both sides of this formula by $x^p$ with $p \geq 0$. Now, $b^{\epsilon}(x)= x^ph^{\epsilon}(x) \in \CC[x]$. Since $h^{\epsilon}(x^{-1})=-\epsilon h^{\epsilon}(x)$ it is easy to see that if $\alpha \neq 0$ is a root of $b^{\epsilon}(x)$, then $1/\alpha$ is also a root of $b^{\epsilon}(x)$. Now we can apply Proposition \ref{eq:proposicionquasi} and due to the relationship between the roots of $B$ and $b$ in this proposition it follows that the $B^{\epsilon}(x)$ corresponding to our $b^{\epsilon}(x)$ is an even polynomial. This implies that the quasipolynomial $P^{\epsilon}_N(x)$ such that $P^{\epsilon}_N(n)=\bigtriangleup_{N,n}$ for $n\neq 0$ and $P(0)=2c$ is even, finishing the proof for this case.

\medskip

Finally, let us analyze \ref{eq:ecuacionN/2} for the case $N$ even. Proceding similarly as with the previous equation, we multiply by $(x^{n}-x^{-n})$ and add over $n \, \in \, \mathbb{Z}_{\geq 0}$. Using the fact that $\bigtriangleup_{1+N/2,l}=-\bigtriangleup_{N/2,-l}$ we obtain

\begin{equation*}
0=\sum^u_{s=0}d_s(x^{s}-x^{-s})(\bigtriangleup_{N/2}(x)-\bigtriangleup_{N/2+1}(x))= g(x) F_{N/2}(x).
\end{equation*}

Now $\hat{b}_{N/2}(x) = x^{u}g(x) \in \CC[x]$. Making use once again of Proposition \ref{eq:proposicionquasi} we prove that $P_{N/2}(x)$ such that $P_{N/2}(n)=\bigtriangleup_{N/2,n}-\bigtriangleup_{N/2+1,n}$ for $n \in \mathbb{Z}$ is an even quasipolynomial.
\end{proof}


Given a quasifinite irreducible highest weight $\sqnh$-module $V$ by Theorem \ref{eq:caracterizacion}, we have that there either exist a quasipolynomials $P_i(x)$ (for $1 \leq i \leq [N/2]-\delta_{N,even}$) satisfying \eqref{eq:pigral}, an even quasipolynomials $P^{\epsilon}_N(x)$ verifying \eqref{eq:pnepsilon}, and if $N$ is even, $P_{N/2}(x)$ satisfying \eqref{eq:pnsobre2}. We will write
\begin{align} \label{eq:formulaspn}
&P_{i}(x)=\sum_{e \in \CC} p_{e,i}(x)q^{e_ix}, \\ \nonumber
&P^{\epsilon}_N(x)=\sum_{j \in \CC}p^{\epsilon}_{j,N}(x)\cosh_q(e_j^+x)+\sum_{j \in \CC}q^{\epsilon}_{j,N}(x)\sinh_q(e_j^-x) \quad \text{ and } \\ \nonumber
&P_{N/2}(x)=\sum_{j \in \CC}p_{j,N/2}(x)\cosh_q(e_j^+x)+\sum_{j \in \CC}q_{j,N/2}(x)\sinh_q(e_j^-x),
\end{align}	
with $p_{j,N}(x)$ and $p_{j,N/2}(x)$ (respectively, $q_{j,N}(x)$ and $q_{j,N/2}(x)$) even (respectively, odd) polynomials, $p_{e,i}(x)$ a polynomial, $e$, $e_j^+$ and $e_j^-$ distinct complex numbers. Also, $\cosh_q(x)=\frac{q^x+q^{-x}}{2}$ and $\sinh_q=\frac{q^x-q^{-x}}{2}$. The last two expresions in \eqref{eq:formulaspn} are unique up to a sign of $e^+_j$ or a simultaneous change of signs of $e^-_j$ and the respective $q_j(x)$. We call  $e^+_j$ (respectively,  $e^-_j$), \textit{even type} (respectively \textit{odd type}) \textit{exponents of} $V$ with \textit{multiplicities} $p_j(x)$ (respectively, $q_j(x)$). As in \cite{KWY}, we denote $e^+$ the set of even type exponents $e^+_j$ with multiplicity  $p_j(x)$ and by $e^-$ the set of odd type exponents $e^-_j$ with multiplicity $q_j(x)$. Therefore, the pair $(e^+;e^-)$ uniquely determines $V$. Analogously for the first formula, we call ${e_i}$ the \textit{exponents of} $V$ with \textit{multiplicities} $p_{e,i}(x)$, and we denote $e$ the set of exponents $e_i$ with multiplicity $p_{e,i}(x)$. We will denote this module by $L(\sqnh; e; e^+; e^-)$.


\section{Interplay between $\sqnh$ and the infinite rank classical Lie algebras}
In this section we will discuss the connection between $\sqnh$ and the Lie algebra of infinite matrices with finitely many nonzero diagonals over the algebra of truncated polynomials and its classical Lie subalgebras.
Let $\O$ be the algebra of all holomorphic functions on $\CC^{\times}$ with the topology of uniform convergence on compact sets, and denote

\begin{equation*}
\O^{\epsilon,j}=\{ f \in \O | f(w)=-\epsilon^j f(w^{-1}) \}.
\end{equation*}

Let $R$ be an associative algebra over $\CC$ and denote $\ri$ a free $R$-module with a fixed basis $\{ v_j \}_{j \in \ZZ}$ and denote $R_m = \CC [t]/(t^{m+1})$ where $m \in \ZZ_+$.

We consider the vector space $\sqnoa$ spanned by the quantum pseudo differential operators (of infinite order) of the form $z^kf(\tq)E_{i,j}$, where $f \in \O$. The bracket in $\sqn$ extends to $(\sqn)^{\O}$. In a similar fashion, we define a completion $(\sqns)^{O}$ of $\sqns$ consisting of all pseudo differential operators of the form

\begin{align*}
\{ z^k(q^{(k-1)/2}T_q)^{rk/2}(f(q^{(k-1)/2}T_q)E_{i,j}-(\epsilon)^kf((q^{(k-1)/2}T_q)^{-1})E_{N+1-j,N+1-i}):\\
 k \in \ZZ, \, 1 \leq i <j \leq N, \, f \in \O \},
\end{align*}
and the opposite diagonal
\begin{equation*}
\{ z^k(q^{(k-1)/2}T_q)^{rk/2}(f(q^{(k-1)/2}T_q)E_{i,N+1-i}: k \in \ZZ, \, 1 \leq i  \leq N, \, f \in \O^{\epsilon,k} \}.
\end{equation*}

Then the $2$-cocycle $\psi$ on $\sqns$ extends to a $2$-cocycle $\psi$ on $(\sqns)^{\O}$. Recall that $\sqn^{\prime}$ denotes the derived algebra of $\sqn$. Let $\sqnoh = \sqn^{\O \prime} +\CC C$ be the corresponding central extension.

Given $s \in \CC$, we have (cf. (3.2) in \cite{BL02}) the embedding $\varphi^{[m]}_s:\sqn \longrightarrow \gl $ ($\varphi^{[m]}_s:(\sqn)^{\O} \longrightarrow \gl $) given by

\begin{equation*}
\varphi^{[m]}_{s}(z^kf(\tq)E_{i,j})=\sum_{l \in \ZZ}f(sq^{-l+t})E_{(l-k)N-i+1, lN-j+1}
\end{equation*}
which are Lie algebra homomorphisms. A restriction of these homomorphisms of Lie algebras to $\sqns$ gives a family of homomorphisms of Lie algebras $\varphi^{[m]}_s:\sqns \longrightarrow \gl $ ($\varphi^{[m]}_s:(\sqns)^{\O} \longrightarrow \gl $).


For each $s \in \CC$ and $k \in \ZZ$, set

\begin{equation*}
I^{[m]}_{s,k}=\{ f \in \O: f^{(i)}(sq^{(k-1)/2 +n})=0 \, \text{ and } \, f^{(i)}(s^{-1}q^{-(k-1)/2 -n})=0,  \, \forall \, n \in \ZZ, 0 \leq i \leq m \}
\end{equation*}
and
\begin{equation*}
\tilde{I}^{[m]}_{s,k,\epsilon}=\{ f \in \O^{\epsilon,j}: f^{(i)}(sq^{(k-1)/2 +n})=0  \quad \forall \, n \in \ZZ, 0 \leq i \leq m \}.
\end{equation*}

Let
\begin{align*}
J^{[m],r,\epsilon}_{s}=\oplus_{k \in \ZZ} \{ z^k(q^{(k-1)/2}T_q)^{rk/2}(f(q^{(k-1)/2}T_q)E_{i,j}-(\epsilon)^kf((q^{(k-1)/2}T_q)^{-1})E_{N+1-j,N+1-i}):\\
 \, 1 \leq i <j \leq N, \, f \in I^{[m]}_{s,k} \} \\
\oplus \, \oplus_{k \in \ZZ} \{ z^k(q^{(k-1)/2}T_q)^{rk/2}(f(q^{(k-1)/2}T_q)E_{i,N+1-i}:  \, 1 \leq i  \leq N, \, f \in \tilde{I}^{[m]}_{s,k,\epsilon} \}.
\end{align*}

Using the Taylor formula on $\varphi^{[m]}_s: \sqns \longrightarrow \gl$, it follows that

\begin{equation}\label{eq:kernel}
\ker \varphi^{[m]}_s = J^{[m],r,\epsilon}_{s}.
\end{equation}

Choose a branch of $\log q$. Let $\tau = \log q/ 2\pi i$. Then any $s \in \CC^{\times}$ is uniquely written as $s=q^a$, with $a \in \CC/\tau^{-1}\ZZ$. Fix $\vec{s}=(s_1, \cdots, s_M) \in \CC^M$ such that if each $s_i=q^{a_i}$, we have

\begin{equation}\label{eq:aiaj}
a_i-a_j \notin \ZZ + \tau^{-1} \ZZ \quad \text{ for } i \neq j,
\end{equation}
and $\vec{m}=(m_1, \cdots, m_M) \in \ZZ^{M}_{\geq 0}$.
Let ${ g\ell}_\infty^{[\vec{m}]}=\oplus^M_{i=1} { g\ell}_\infty^{[m_i]}$. Consider the homomorphism

\begin{equation*}
\varphi^{[\vec{m}]}_s = \oplus^M_{i=1} \varphi^{[m_i]}_{s_i}: \sqnso \longrightarrow { g\ell}_\infty^{[\vec{m}]}.
\end{equation*}

\begin{proposition}\label{prop:aenz2}
Given $\vec{s}$ and $\vec{m}$ as above, we have the exact sequence of $\ZZ-$graded Lie algebras, provided that $|q| \neq 1:$
\begin{equation}
0 \rightarrow J^{[\vec{m}], r, \epsilon}_{\mathbf{s}} \rightarrow \sqnso \rightarrow { g\ell}_\infty^{[\vec{m}]} \rightarrow 0,
\end{equation}
where $J^{[\vec{m}], r, \epsilon}_{\mathbf{s}}= \cap^M_{i=1} J^{{[m_i], r, \epsilon}}_{s_i}$.
\end{proposition}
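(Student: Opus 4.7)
The plan is to exploit that $\varphi^{[\vec m]}_{\vec s}$ is a direct sum of Lie algebra homomorphisms preserving the principal $\ZZ$-gradation, so it suffices to verify exactness in each graded degree separately. Exactness at the middle is immediate from \eqref{eq:kernel}: an element $X$ lies in the kernel of $\varphi^{[\vec m]}_{\vec s}$ iff $\varphi^{[m_i]}_{s_i}(X)=0$ for every $i$, hence
\[
\ker \varphi^{[\vec m]}_{\vec s} \;=\; \bigcap_{i=1}^M \ker \varphi^{[m_i]}_{s_i} \;=\; \bigcap_{i=1}^M J^{[m_i],r,\epsilon}_{s_i} \;=\; J^{[\vec m],r,\epsilon}_{\vec s}.
\]

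The real content is surjectivity. Fix a degree $j = kN + p$ with $0 \leq p \leq N-1$ and a target $\bigoplus_i B_i \in \bigoplus_i ({ g\ell}_\infty^{[m_i]})_j$. Since each $B_i$ has only finitely many nonzero diagonals, the preimage problem decouples into finitely many scalar problems indexed by matrix position. A typical generator of $(\sqnso)_j$ of the shape $z^k(q^{(k-1)/2}T_q)^{rk/2}(f(q^{(k-1)/2}T_q)E_{\alpha,\beta} - \epsilon^k f((q^{(k-1)/2}T_q)^{-1}) E_{N+1-\beta,N+1-\alpha})$ maps under $\varphi^{[m_i]}_{s_i}$ to a matrix whose entries are determined by the Taylor jets of $f$ of order up to $m_i$ at the nodes $s_i q^{(k-1)/2 - l + t}$ and, via the second summand, at their reciprocals. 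Thus one must produce $f \in \O$ realizing any finite set of prescribed jets at these nodes, simultaneously for $i = 1, \dots, M$.

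The main obstacle is this simultaneous interpolation, and three ingredients make it possible. First, $|q|\neq 1$ guarantees that for each $i$ the set $\{s_i q^{(k-1)/2+n}\}_{n \in \ZZ}$ is discrete in $\CC^{\times}$, accumulating only at $0$ and $\infty$. Second, the hypothesis $a_i - a_j \notin \ZZ + \tau^{-1}\ZZ$ from \eqref{eq:aiaj} forces these node sequences, as well as their reciprocals, to be pairwise disjoint across $i \neq j$, so that the interpolation conditions attached to distinct indices are independent. Third, the Weierstrass--Mittag-Leffler theorem on $\CC^{\times}$ then produces a holomorphic $f \in \O$ interpolating the (finitely many) prescribed Taylor jets of order at most $m_i$ at each assigned node.

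For the anti-diagonal generators (those with $\alpha + \beta = N+1$), only a single summand appears and $f$ is constrained to lie in $\O^{\epsilon,k}$. Here the prescribed values at the nodes $s_i q^{(k-1)/2 + n}$ automatically pair with prescribed values at the reciprocals in the sign pattern dictated by $\epsilon$ and $k$, so symmetrizing any holomorphic interpolant (replacing $f(w)$ by $\tfrac12(f(w) - \epsilon^k f(w^{-1}))$) yields an element of $\O^{\epsilon,k}$ achieving the same data. Assembling these local interpolations over the finitely many nonzero matrix positions at degree $j$ and over all indices $i$ gives the desired preimage, completing surjectivity and hence the exact sequence.
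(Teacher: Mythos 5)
Your overall strategy coincides with the paper's: exactness in the middle is read off from \eqref{eq:kernel}, and surjectivity is reduced to a holomorphic jet--interpolation problem at a discrete set of nodes, solved by the classical interpolation (Weierstrass--Mittag-Leffler) theorem on $\CC^{\times}$. Your treatment is in fact more explicit than the paper's in two respects: you handle general $M$ rather than reducing to $M=1$, and you spell out the symmetrization argument for the anti-diagonal generators constrained to $\O^{\epsilon,k}$.

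There is, however, one genuine gap. A single off-diagonal generator of the form $z^k(q^{(k-1)/2}T_q)^{rk/2}\bigl(f(q^{(k-1)/2}T_q)E_{\alpha,\beta}-\epsilon^k f((q^{(k-1)/2}T_q)^{-1})E_{N+1-\beta,N+1-\alpha}\bigr)$ imposes, through its two summands, interpolation conditions on the \emph{same} function $f$ at the node family $\{s_i q^{(k-1)/2+n}\}_{n\in\ZZ}$ and at the reciprocal family $\{s_i^{-1}q^{-(k-1)/2-n}\}_{n\in\ZZ}$; these two blocks of conditions can be prescribed independently only if the two families are disjoint \emph{for the same index} $i$. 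Your three ingredients cover discreteness (from $|q|\neq 1$) and disjointness across $i\neq j$ (from \eqref{eq:aiaj}), but not this within-index disjointness, which fails precisely when $2a_i\in\ZZ+\tau^{-1}\ZZ$, i.e.\ essentially when $a_i\in\ZZ/2$. This is not a pedantic point: for $a_i=1$ or $a_i=1/2$ the map $\varphi^{[m_i]}_{s_i}$ is genuinely \emph{not} surjective onto $\gl$ --- its image is a classical subalgebra, cf.\ Propositions \ref{prop:a1}--\ref{prop:a1/2npar} --- so an argument for surjectivity that nowhere uses $a_i\notin\ZZ/2$ must have a hole. The paper's proof inserts the clause ``since $a\notin\ZZ/2$'' exactly at this step to conclude that the node family and its reciprocal family are disjoint; you should add the corresponding hypothesis (implicit in the proposition's ``as above'') and the observation that it guarantees each node family is disjoint from its own reciprocal family.
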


\begin{proof}
The injectivity part is clear from \eqref{eq:kernel}. For the sake of simplicity, we will prove the surjectivity of $\varphi^{[\vec{m}]}_s$ for the case $M=1$, $\vec{m}=m$ and $\vec{s}=s=q^a$. We will make use of the well-known fact that for every discrete sequence of points of $\CC$ and a non-negative integer $m$ there exists $f(w) \in \O$ having prescribed values of its first $m$ derivatives at these points. By conditions \eqref{eq:aiaj} and $|q| \neq 1$ and since $a \notin \ZZ/2$ we have that $\{ q^{(n-1)/2+j+a} \}$ and $\{ q^{-(n-1)/2-j-a} \}$ are discrete and disjoint sequences of points in $\CC$. Therefore we can find $f \in \O$ such that every element $t^j E_{a,b}$ is in the image, finishing the proof.
\end{proof}

We now intend to extend the homomorphism $\fims$ to a homomorphism between the central extensions of the corresponding Lie algebras.

\begin{proposition}\label{prop:fihat}
The $\CC-$linear map $\hat{\varphi}^{[m]}_s: \sqnh \rightarrow \glh$ defined by ($s=q^a$),
\begin{equation}
\hat{\varphi}^{[m]}_s|_{(\sqnh)_j}=\fims|_{(\sqns)_j} \quad \text{ if } j \neq 0,
\end{equation}
\begin{align}
\hat{\varphi}^{[m]}_s(q^{-n/2}\tq^nE_{i,i}-q^{n/2}\tq^{-n}E_{N+1-i,N+1-i})= \fims(q^{-n/2}\tq^nE_{i,i}-q^{n/2}\tq^{-n}E_{N+1-i,N+1-i}) \\ \nonumber
- \sum^m_{j=1} \frac{q^{(a-1)n}+(-1)^jq^{(-a+1)n}}{q^{n/2}-q^{-n/2}}(n \log q)^j \frac{t^j}{j!} \quad (n \neq 0),
\end{align}
\begin{equation}
\hat{\varphi}^{[m]}_s(C)=1 \in R_m,
\end{equation}
is a Lie algebra homomorphism over $\CC$.
\end{proposition}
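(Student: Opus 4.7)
The plan is to verify that $\hat{\varphi}^{[m]}_s$ preserves brackets by reducing to the previous Proposition~\ref{prop:aenz2}, which already guarantees that $\varphi^{[m]}_s:\sqnso\to \gl$ is a Lie algebra homomorphism, and then to check separately the two-cocycle compatibility on the central extensions. Concretely, I would pick $x\in(\sqnh)_j$ and $y\in(\sqnh)_k$ and distinguish the cases $j+k\neq 0$ and $j+k=0$. In the first case the correction terms defining $\hat{\varphi}^{[m]}_s$ on the degree-zero part are not activated, both the $\sqnh$-cocycle $\psi$ and the $\glh$-cocycle $C(\,\cdot\,,\,\cdot\,)=\operatorname{Tr}([J,\,\cdot\,]\,\cdot\,)$ vanish, and the identity $\hat\varphi([x,y])=[\hat\varphi(x),\hat\varphi(y)]$ reduces to $\varphi^{[m]}_s([x,y])=[\varphi^{[m]}_s(x),\varphi^{[m]}_s(y)]$, which is already known.

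The substantive case is $k=-j$, and by linearity and the description of $(\sqns)_j$ from Section~\ref{sec:sqn} it suffices to handle a pair of natural generators $x\in(\sqns)_j$, $y\in(\sqns)_{-j}$ (both for $j=nN+p$ with $p\neq 0$ and for the diagonal case $p=0$, as well as the exceptional self-paired middle generators when $N-p$ or $p$ is odd). Writing $[x,y]_{\sqnh}=[x,y]_{\sqns}+\psi(x,y)C$ and $[\varphi^{[m]}_s(x),\varphi^{[m]}_s(y)]_{\glh}=\varphi^{[m]}_s([x,y]_{\sqns})+C(\varphi^{[m]}_s(x),\varphi^{[m]}_s(y))\mathbf{1}$, the required identity becomes
\begin{equation*}
\hat{\varphi}^{[m]}_s([x,y]_{\sqns})-\varphi^{[m]}_s([x,y]_{\sqns})=C(\varphi^{[m]}_s(x),\varphi^{[m]}_s(y))\mathbf{1}-\psi(x,y)\mathbf{1}.
\end{equation*}
The left-hand side picks up only the correction terms of $\hat{\varphi}^{[m]}_s$ on the $0$-component $q^{-n/2}T_q^{n}E_{i,i}-q^{n/2}T_q^{-n}E_{N+1-i,N+1-i}$ that appear in $[x,y]_{\sqns}$. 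Thus the whole statement reduces to a single identity: that the $R_m$-valued trace cocycle $C$ evaluated on $\varphi^{[m]}_s(x)$ and $\varphi^{[m]}_s(y)$ minus the scalar $\psi(x,y)$ equals exactly the prescribed correction
\begin{equation*}
-\sum_{j=1}^{m}\frac{q^{(a-1)n}+(-1)^{j}q^{(-a+1)n}}{q^{n/2}-q^{-n/2}}(n\log q)^{j}\frac{t^{j}}{j!}
\end{equation*}
summed with the proper coefficient coming from the commutator.

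To execute this, I would use the explicit formula $\varphi^{[m]}_s(z^k f(T_q)E_{i,j})=\sum_{l\in\ZZ}f(sq^{-l+t})E_{(l-k)N-i+1,\,lN-j+1}$ to write out both sides as formal Laurent series in $s$, then expand $f(sq^{-l+t})$ as a Taylor polynomial in $t$ around $t=0$, producing the factors $(n\log q)^{j}t^{j}/j!$ naturally. The $\glh$-cocycle $C(A,B)=\operatorname{Tr}([J,A]B)$ localises to a finite sum over the diagonal strip, and a standard telescoping/geometric-series calculation (paralleling the one that produced $\psi$ in the first place, see Sect.~1.3 of \cite{KR1}) turns that finite sum into the quotient $\frac{q^{(a-1)n}+(-1)^{j}q^{(-a+1)n}}{q^{n/2}-q^{-n/2}}$, matching the correction term-by-term in $j$; the symplectic sign $(-1)^{j}$ arises from the $\sigma_{\epsilon,r,N}$-fixed shape of the generators, which pairs $E_{i,i}$ with $E_{N+1-i,N+1-i}$ and sends $t\mapsto -t$ on the second summand.

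The main obstacle is this last book-keeping step: tracking the interplay between the symmetry constraint imposed by $\sigma_{\epsilon,r,N}$ (which produces the $(-1)^{j}$ and the $q^{(-a+1)n}$ half) and the asymmetry of the $\glh$-cocycle (which only sums over $l\leq 0$). Once the $M=1$ case is verified for the generators of $(\sqnh)_{-n}$ paired with $(\sqnh)_{n}$, the extension to general $x,y$ and to general $\vec{m}$ follows by $R_m$-linearity and direct sum decomposition. Finally, one notes that the exceptional middle-row generators (in the odd cases involving $\CC[w,w^{-1}]^{\epsilon,k}$) are treated by the same computation, since they automatically sit in the fixed points of the symmetry and the correction formula is already symmetric under $a\leftrightarrow -a+1$.
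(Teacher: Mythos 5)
Your reduction is correct and your route is genuinely different from the paper's. The paper's proof is a one-liner: the map $\hat{\varphi}^{[m]}_s$ was already constructed on all of $\widehat{\sqn}$ in formula (3.2) of \cite{BL2} and shown there to be a Lie algebra homomorphism into $\glh$, so the homomorphism property is inherited for free by restriction to the subalgebra $\sqnh$; the only computation the paper performs is rewriting that known formula on the $\sigma$-fixed degree-zero spanning elements $f(q^{-1/2}T_q)E_{i,i}-f(q^{1/2}T_q^{-1})E_{N+1-i,N+1-i}$, which produces the displayed correction term. You instead re-prove the statement from scratch, and your skeleton is sound: the splitting into $j+k\neq 0$ (where both cocycles vanish because they are supported in degree zero and $\fims$ is graded) versus $j+k=0$, and the resulting identity $\mathrm{correction}([x,y]_0)=C(\fims(x),\fims(y))-\psi(x,y)$ with the correction central in $\glh$, is exactly the right reformulation. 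What your route buys is independence from \cite{BL2}; what it costs is that the entire content of the proposition is concentrated in the telescoping trace computation you describe but do not carry out --- the verification that $\mathrm{Tr}([J,\fims(x)]\fims(y))-\psi(x,y)$ equals the coboundary of $-\sum_{j\ge 1}\frac{q^{(a-1)n}+(-1)^jq^{(1-a)n}}{q^{n/2}-q^{-n/2}}(n\log q)^j\frac{t^j}{j!}$. As it stands that step is asserted rather than proved, so your argument is a correct strategy with the decisive computation still owed; to make it self-contained you would need to execute that finite-strip trace calculation (the analogue of Sect.~1.3 of \cite{KR1}) at least for one pair of graded generators in each of the cases you list, including the self-paired middle generators when $N$ is odd.
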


\begin{proof}
It is a straightforward computation restricting the formula $\hat{\varphi}^{[m]}_s$ in (3.2) of \cite{BL02}, to $\sqnh$.
\end{proof}

The homomorphism $\fims$ is defined for any $s \in \CC$. However, for $a \in \ZZ/2$, it is no longer surjective. These cases are described by the following propositions.

\begin{proposition}\label{prop:a1}
For $a=1$, we have the following exact sequence of Lie algebras:
\begin{align*}
0 \rightarrow J^{[m],k,\epsilon}_{s} \rightarrow (\sqns)^{\O} \rightarrow \mathfrak{g} \rightarrow 0
\end{align*}
where $\mathfrak{g} \simeq \bar{d}^{[m]}_{\infty}$ if $\epsilon=1$ and $\mathfrak{g} \simeq \bar{c}^{[m]}_{\infty}$ if $\epsilon=-1$.
\end{proposition}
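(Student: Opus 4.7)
The plan is to establish the exact sequence by checking three things in turn: that the image of $\varphi^{[m]}_q$ on $(\sqns)^{\O}$ lies in $\mathfrak{g}$, that the induced map is surjective onto $\mathfrak{g}$, and that its kernel equals $J^{[m],r,\epsilon}_s$. A typical generator of $(\sqns)^{\O}$ has the form $X=Y-\sigma(Y)$ with $Y=z^k(q^{(k-1)/2}T_q)^{rk/2}f(q^{(k-1)/2}T_q)E_{i,j}$, and I will exploit the auxiliary identity $(q^{(k-1)/2}w)^{rk/2}f((q^{(k-1)/2}w)^{-1}) = (q^{(k-1)/2}w)^{rk}F(q^{1-k}w^{-1})$, where $F(w)=(q^{(k-1)/2}w)^{rk/2}f(q^{(k-1)/2}w)$, to rewrite the $\sigma$-side of the image in terms of $F$.

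For the image containment, I will apply $\varphi^{[m]}_q$ to $X$, obtaining two sums indexed by $l\in\ZZ$, and use the substitution $l\mapsto k+1-l$ in the $\sigma$-sum to put its supporting entries at the reflected positions $(1-q,1-p)$ of the first sum's positions $(p,q)=((l-k)N-i+1,\,lN-j+1)$. The crucial point where the hypothesis $a=1$ enters is that $s=q$ collapses the reflected argument $q^{l-k+t}$ into $q^{1-l-t}$, i.e.\ into the $t\mapsto -t$ image of the first-sum argument $q^{1-l+t}$; this is precisely the involution $u\mapsto -u$ appearing in the defining relations of $\bar{d}^{[m]}_{\infty}$ and $\bar{c}^{[m]}_{\infty}$. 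A careful accounting of the sign $-\epsilon^k$ coming from $\sigma$ together with the parity $(-1)^{p+q}=(-1)^{kN+i+j}$ of the $\gl$-indices will then produce the $d$-type relation $a_{p,q}(u)=-a_{1-q,1-p}(-u)$ when $\epsilon=1$, and the $c$-type relation $a_{p,q}(u)=(-1)^{p+q+1}a_{1-q,1-p}(-u)$ when $\epsilon=-1$. The anti-diagonal generators with $f\in\O^{\epsilon,k}$ will be handled the same way, using that the defining constraint $f(w^{-1})=-\epsilon^k f(w)$ is exactly the compatibility needed at the fixed points of the index reflection.

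For surjectivity, given $A\in\mathfrak{g}$, an interpolation argument parallel to the one in the proof of Proposition~\ref{prop:aenz2} will produce some $Y\in(\sqn)^{\O}$ with $\varphi^{[m]}_q(Y)=A$; then $\tfrac{1}{2}(Y-\sigma(Y))\in(\sqns)^{\O}$ still maps to $A$, because the image step shows $\varphi^{[m]}_q\circ\sigma$ agrees (up to sign convention) with the involution on $\gl$ whose fixed locus is $\mathfrak{g}$, and $A$ is already fixed. For the kernel, the Taylor-expansion computation that produced \eqref{eq:kernel} extends verbatim to $\O$-valued coefficients: an element of $(\sqns)^{\O}$ lies in the kernel iff each of its constituent functions vanishes to order $m$ at every point of the orbit $\{q^{(k-1)/2+n+1}\}_{n\in\ZZ}$ (together with the reflected condition on the anti-diagonal strand), which is precisely the definition of $J^{[m],r,\epsilon}_s$. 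The hard step will be the sign bookkeeping in the image containment: showing how the phase $-\epsilon^k$ from $\sigma$ combines with the parity $(-1)^{kN+i+j}$ to yield, with no residual factor, the $d$-type relation when $\epsilon=1$ and the $c$-type relation when $\epsilon=-1$, handling the two parities of $N$ and the anti-diagonal generators separately.
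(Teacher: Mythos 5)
Your overall strategy---transport $\sigma$ through $\varphi^{[m]}_q$ to an anti-involution $\omega$ of $\gl$, identify the image of $(\sqns)^{\O}$ with the $-\omega$-fixed subalgebra, get surjectivity by averaging $\tfrac12(Y-\sigma(Y))$ using the surjectivity of $\varphi^{[m]}_q$ on all of $(\sqn)^{\O}$, and compute the kernel by Taylor expansion---is exactly the paper's, and for $\epsilon=1$ your computation goes through: the reflected positions, the substitution $l\mapsto k+1-l$, and the collapse of $q^{l-k+t}$ into the $t\mapsto -t$ image of $q^{1-l+t}$ at $s=q$ do produce the relation $a_{p,q}(u)=-a_{1-q,1-p}(-u)$ defining $\bar d^{[m]}_{\infty}$.

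However, the step you yourself flag as ``the hard step'' fails for $\epsilon=-1$. The sign that $\sigma$ contributes is $\epsilon^{k}=(-1)^{k}$, which depends only on the power of $z$, i.e.\ only on the difference of the \emph{block} indices of the positions $p=q_pN+r_p$, $q=q_qN+r_q$ (explicitly $q_q-q_p=k$). The $c$-type relation requires the sign $(-1)^{p+q+1}\equiv(-1)^{kN+i+j+1}$, and the two differ by a residual factor $(-1)^{k(N+1)+i+j}$ which is not identically $1$: take $k=0$ and $i+j$ odd, so that $\epsilon$ does not even enter, and the image entries at $(lN-i+1,\,lN-j+1)$ satisfy the $d$-type relation rather than the $c$-type one. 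Hence the image of $(\sqns)^{\O}$ is \emph{not} contained in $\bar c^{[m]}_{\infty}$; it is the fixed-point set of minus the anti-involution $\omega(u^kE_{p,q})=(-1)^{q_q-q_p}(-u)^kE_{1-q,1-p}$. The proposition only claims an isomorphism $\mathfrak g\simeq\bar c^{[m]}_{\infty}$, and the missing ingredient is the conjugating automorphism $T'$ of $\gl$ in \eqref{eq:ta1}, which rescales the two halves of each generator so as to convert the block sign $(-1)^{q_b-q_a}$ into the entry sign $(-1)^{a+b}$; only after conjugating $\omega$ by $T'$ does one land on the anti-involution defining $\bar c^{[m]}_{\infty}$. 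Without this step your ``no residual factor'' claim is false and the $\epsilon=-1$ half of the argument does not close.
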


\begin{proof}
We will first prove the case $\epsilon=1$. The homomorphism $\fims:\sqn \rightarrow \gl$ introduced in \cite{BL02} is surjective. The anti-involution of $\sqn$ defined in \eqref{eq:sigma} transfers, via $\fims$, to an anti-involution $\omega: \gl \rightarrow \gl$ as follows
\begin{equation}
\omega(u^kE_{i,j})=(-u)^kE_{1-j,1-i}.
\end{equation}
Therefore, the Lie algebra of $-\sigma$ fixed points in $\sqn$, explicitly, $\sqns$, maps surjectively to the Lie algebra of $-\omega$ fixed points in $\gl$, explicitly, $\bar{d}^{[m]}_{\infty}$.
If $\epsilon=-1$, the anti-involution $\omega$ is as follows
\begin{equation*}
\omega(u^kE_{i,j})=(-1)^{q_j-q_i}(-u)^kE_{1-j,1-i},
\end{equation*}
where $i=q_iN+r_i$ and $j=q_jN+r_j$, with $1 \leq r_i \leq N$ and $1 \leq r_j \leq N$. As a result of the surjectivity described, it is enough to show that $\omega$ is conjugated by an automorphism $T^{\prime}$ of $\gl$ to the anti-involution defining $\bar{c}^{[m]}_{\infty}$. To that end, we define
\begin{equation}\label{eq:ta1}
T^{\prime}(u^mE_{a,b}-\epsilon^{q_b-q_a}(-u)^mE_{1-b,1-a} )=u^mE_{a,b}-\epsilon^{a+b}(-u)^mE_{1-b,1-a},
\end{equation}
where $a=q_aN+r_a$ and $b=q_bN+r_b$, with $0 \leq r_a \leq N-1$ y $0 \leq r_b \leq N-1$.
It is easy to see that $\omega$ is conjugated by $T^{\prime}$ to the anti-involution defining $\bar{c}^{[m]}_{\infty}$.
\end{proof}

\begin{proposition}\label{prop:a1/2nimpar}
If $a=1/2$ and $N$ is odd, we have the following exact sequence of Lie algebras:
\begin{align*}
0 \rightarrow J^{[m],k,\epsilon}_{s} \rightarrow (\sqns)^{\O} \rightarrow \mathfrak{g} \rightarrow 0,
\end{align*}
where $\mathfrak{g} \simeq \bar{b}^{+[m]}_{\infty}$ if $\epsilon=1$ and $\mathfrak{g} \simeq \bar{b}^{-[m]}_{\infty}$ if $\epsilon=-1$.
\end{proposition}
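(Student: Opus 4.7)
The plan is to follow exactly the same strategy used in the proof of Proposition \ref{prop:a1}. Start from the fact, proved in \cite{BL02}, that $\varphi^{[m]}_s : \sqn \to \gl$ is surjective, and use it to transfer the anti-involution $\sigma_{\epsilon,r,N}$ of $\sqn$ to an anti-involution $\omega$ of $\gl$. The image of the $-\sigma$-fixed Lie subalgebra $\sqns$ under $\varphi^{[m]}_s$ will then be the $-\omega$-fixed Lie subalgebra of $\gl$, and it will remain to identify this with the appropriate $\bar{b}^{\pm[m]}_\infty$. The kernel part $J^{[m],k,\epsilon}_s$ is immediate from \eqref{eq:kernel}.

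First I would carry out the index computation with $s=q^{1/2}$ and $N$ odd. Plugging $s=q^{1/2}$ into the formula
$$
\varphi^{[m]}_s(z^k h(T_q)E_{i,j})=\sum_{l \in \ZZ} h(q^{1/2-l+t})E_{(l-k)N-i+1,\,lN-j+1},
$$
and then applying the formula \eqref{eq:sigma} for $\sigma_{\epsilon,r,N}$ and comparing, one reads off the induced map on the standard basis $u^k E_{a,b}$ of $\gl$. The key point is that when $N$ is odd the map $(k,i)\mapsto (l-k)N-i+1$ composed with the $N+1-i$ symmetry produces the reflection $a\mapsto -a$ rather than $a\mapsto 1-a$, which is exactly the symmetry around $0$ used in the definition of $\bar{b}^{\pm[m]}_\infty$ (in contrast to the symmetry around $1/2$ used for $\bar{c}^{[m]}_\infty$ and $\bar{d}^{[m]}_\infty$ in the $a=1$ case). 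This is where the hypothesis ``$N$ odd'' is used.

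Next I would handle the two signs separately. For $\epsilon=1$, the above computation should give directly
$$
\omega(u^k E_{a,b})=-(-u)^k E_{-b,-a},
$$
so that the $-\omega$-fixed points in $\gl$ are exactly the matrices satisfying $a_{i,j}(u)=-a_{-j,-i}(-u)$, i.e.\ $\bar{b}^{+[m]}_\infty$. For $\epsilon=-1$, the same computation will produce an anti-involution that agrees with the previous one only up to a sign $(-1)^{q_b-q_a}$ (writing $a=q_aN+r_a$, $b=q_bN+r_b$), and I would introduce, in exact analogy with \eqref{eq:ta1}, an automorphism $T'$ of $\gl$ of the form
$$
T'\!\left(u^k E_{a,b}-\epsilon^{q_b-q_a}(-u)^k E_{-b,-a}\right)=u^k E_{a,b}-(-1)^{1+a+b}(-u)^k E_{-b,-a},
$$
and verify that $T'$ conjugates $\omega$ to the anti-involution $a_{i,j}(u)\mapsto (-1)^{1+i+j}a_{-j,-i}(-u)$ defining $\bar{b}^{-[m]}_\infty$. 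Putting these ingredients together yields the stated exact sequence.

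I expect the main obstacle to be the sign bookkeeping in the $\epsilon=-1$ case: verifying that the automorphism $T'$ is indeed well defined on all of $\gl$ (not just on the $-\omega$-fixed subspace, where it is forced), that it preserves the principal $\ZZ$-gradation, that it is a Lie algebra homomorphism, and that it actually conjugates the computed $\omega$ into the defining anti-involution of $\bar{b}^{-[m]}_\infty$. The surjectivity and the identification of the kernel are then essentially the same arguments as in Propositions \ref{prop:aenz2} and \ref{prop:a1}, so no new ideas are required beyond this sign analysis.
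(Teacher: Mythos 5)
Your overall strategy is the paper's: transfer $\sigma_{\epsilon,r,N}$ through $\fims$ to an anti-involution $\omega$ of $\gl$, identify the $-\omega$-fixed subalgebra, and for $\epsilon=-1$ fix the residual signs by a conjugating automorphism analogous to \eqref{eq:ta1}. But there is a concrete error in the middle step. The index computation with $s=q^{1/2}$ does \emph{not} directly produce the reflection $a\mapsto -a$: writing the row index as $a=(l-k)N-i+1$ and applying the symmetry $i\mapsto N+1-i$, one finds $\omega(u^kE_{i,j})=(-u)^kE_{-N+1-j,-N+1-i}$ (up to the $\epsilon$-sign), i.e.\ the reflection about $(1-N)/2$, not about $0$. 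So $\omega$ is not the anti-involution defining $\bar{b}^{+[m]}_{\infty}$ on the nose, and your claimed identity $\omega(u^kE_{a,b})=-(-u)^kE_{-b,-a}$ is false. The missing ingredient is the shift automorphism $T(u^rE_{i,j})=u^rE_{(-N+1)/2+i,(-N+1)/2+j}$ of \eqref{eq:timpar}, which conjugates $\omega$ to the reflection about $0$. This is also exactly where the hypothesis ``$N$ odd'' enters: the center $(1-N)/2$ of the reflection is an integer precisely when $N$ is odd, so the required shift is by an integer and hence is a legitimate automorphism of $\gl$; when $N$ is even the center is a half-integer, the admissible integer shift is $-N/2$, and one lands on the reflection about $1/2$, i.e.\ on $\bar{d}^{[m]}_{\infty}$ (Proposition \ref{prop:a1/2npar}). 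Your explanation attributes the parity dichotomy to the composition of index maps itself, which is not where it lives.

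Correspondingly, in the $\epsilon=-1$ case the conjugating automorphism is not $T'$ alone but the composition $D=T\circ T'$, with $T'$ as in \eqref{eq:timparmenosuno} handling the signs $(-1)^{q_i-q_j}$ and $T$ handling the index shift. Once $T$ is inserted in both cases, the rest of your argument (surjectivity as in Proposition \ref{prop:aenz2}, kernel from \eqref{eq:kernel}, and the well-definedness checks for $T'$) goes through as you describe.
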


\begin{proof}
If $\epsilon = 1$, replace in the proof of the last proposition $\omega$ by
\begin{equation}
\omega(u^kE_{i,j})=(-u)^kE_{-N+1-j,-N+1-i}.
\end{equation}
Therefore, the Lie algebra of $-\sigma$ fixed points in $\sqn$, explicitly, $\sqns$, maps surjectively to the Lie algebra of $-\omega$ fixed points in $\gl$. Consequently, it is enough to see that $\omega$ is conjugated by an automorphism $T$ of $\gl$ to the anti-involution defining $\bar{b}^{+[m]}_{\infty}$. So, we define
\begin{equation}\label{eq:timpar}
T(u^r E_{i,j})=u^r E_{(-N+1)/2+i,(-N+1)/2+j}.
\end{equation}
It is easy to check that this extends to an automorphism of the algebra $\gl$ that conjugates $\omega$ to the anti-involution defining $\bar{b}^{+[m]}_{\infty}$.
If $\epsilon = -1$, $\omega$ is the following
\begin{equation}
\omega(u^kE_{i,j})=(-1)^{q_i-q_j}(-u)^kE_{-N+1-j,-N+1-i},
\end{equation}
where $i=q_iN+r_i$ and $j=q_jN+r_j$, with $1 \leq r_i \leq N$ y $1 \leq r_j \leq N$. The automorphism of $\gl$ for this case is $D=T \circ T^{\prime}$, where $T$ is the same that in the previous case and we have
\begin{equation}\label{eq:timparmenosuno}
T^{\prime}(u^mE_{a,b}-\epsilon^{q_b-q_a}(-u)^mE_{-b,-a} )=u^mE_{a,b}-\epsilon^{a+b}(-u)^mE_{-b,-a},
\end{equation}
with $a=q_aN+r_a$ and $b=q_bN+r_b$, for $0 \leq r_a \leq N-1$ and $0 \leq r_b \leq N-1$.
It is easy to see that $\omega$ is conjugated by $D$ to the anti-involution defining $\bar{b}^{-[m]}_{\infty}$.
\end{proof}

\begin{proposition}\label{prop:a1/2npar}
If $a=1/2$ and $N$ is even, we have the following exact sequence of Lie algebras:
\begin{align*}
0 \rightarrow J^{[m],k,\epsilon}_{s} \rightarrow (\sqns)^{\O} \rightarrow \bar{d}^{[m]}_{\infty} \rightarrow 0.
\end{align*}
\end{proposition}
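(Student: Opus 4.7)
The plan is to follow the strategy of Propositions \ref{prop:a1} and \ref{prop:a1/2nimpar}: invoke the surjectivity of $\fims : \sqn \to \gl$ from \cite{BL02}, transfer the anti-involution $\sigma$ of $\sqn$ via $\fims$ to an anti-involution $\omega$ on $\gl$, and then show that $\omega$ is conjugate, via an automorphism of $\gl$, to the anti-involution defining $\bar{d}^{[m]}_{\infty}$. Once this is established, the Lie algebra of $-\sigma$ fixed points $(\sqns)^{\O}$ maps surjectively onto the Lie algebra of $-\omega$ fixed points, which after conjugation is $\bar{d}^{[m]}_{\infty}$. The kernel is identified with $J^{[m],k,\epsilon}_{s}$ via the kernel computation \eqref{eq:kernel}, exactly as in the two previous propositions.

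For $\epsilon = 1$ a direct calculation shows that the transferred anti-involution takes the shape
\[
\omega(u^k E_{i,j}) = (-u)^k E_{-N+1-j,\,-N+1-i},
\]
as in Proposition \ref{prop:a1/2nimpar}. The obstacle in the odd-$N$ case was that the natural ``halving'' shift $(-N+1)/2$ was not an integer; since $N$ is now even, I use instead the integer shift
\[
T(u^r E_{i,j}) = u^r E_{i + N/2,\, j + N/2},
\]
which extends to an automorphism of $\gl$. A direct computation yields $T \omega T^{-1}(u^k E_{i,j}) = (-u)^k E_{1-j,\,1-i}$, since $-N+1 + 2(N/2) = 1$, and this is precisely the anti-involution defining $\bar{d}^{[m]}_{\infty}$.

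For $\epsilon = -1$ the transferred anti-involution picks up an extra sign factor from the $\epsilon^{k}$ appearing in \eqref{eq:sigma}, producing
\[
\omega(u^k E_{i,j}) = (-1)^{q_i - q_j}(-u)^k E_{-N+1-j,\,-N+1-i},
\]
where $i = q_i N + r_i$, $j = q_j N + r_j$ with $1 \le r_i, r_j \le N$, in direct analogy with the situation of Proposition \ref{prop:a1/2nimpar}. I would then compose the shift $T$ above with a sign-twisting automorphism $T'$ modeled on \eqref{eq:timparmenosuno} and verify that $D = T \circ T'$ conjugates $\omega$ to the defining anti-involution of $\bar{d}^{[m]}_{\infty}$.

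The delicate point, and the main obstacle I anticipate, is checking that the interaction between the parity of $N/2$ and the $\epsilon$--signs produces $\bar{d}^{[m]}_{\infty}$ in \emph{both} the $\epsilon = +1$ and $\epsilon = -1$ cases, rather than splitting into B-- and C--type subalgebras as happens for odd $N$ in Proposition \ref{prop:a1/2nimpar}. This parity collapse is the qualitative difference between the present proposition and its odd-$N$ counterpart; once it is verified, the surjectivity onto $\bar{d}^{[m]}_{\infty}$ and the identification of the kernel are direct transcriptions of the arguments in Propositions \ref{prop:a1} and \ref{prop:a1/2nimpar}.
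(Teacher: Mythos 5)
Your proposal follows essentially the same route as the paper: transfer $\sigma$ to $\omega$ via $\fims$, conjugate by the integer shift by $N/2$ for $\epsilon=1$ (the paper uses the shift $E_{i,j}\mapsto E_{i-N/2,j-N/2}$, i.e.\ the inverse of your $T$, which is immaterial), and compose with a sign-twisting $T'$ for $\epsilon=-1$. The ``parity collapse'' you flag as the delicate point is handled in the paper simply by writing down the explicit $T'$ in \eqref{eq:tparmenosuno}, which sends $u^mE_{a,b}-\epsilon^{q_b-q_a}(-u)^mE_{1-b,1-a}$ to the unsigned combination $u^mE_{a,b}-(-u)^mE_{1-b,1-a}$, so both values of $\epsilon$ land in $\bar{d}^{[m]}_{\infty}$ exactly as you anticipate.
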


\begin{proof}
This proof follows the same steps as last proposition. If $\epsilon=1$, because $\omega$ is the same as before, it is enough to replace $T$ by
\begin{equation}\label{eq:tpar}
T(u^r E_{i,j})=u^r E_{-N/2+i,-N/2+j}.
\end{equation}
The rest of the proof is the same for this case. If $\epsilon=-1$, $\omega$ is the same formula as in the last proposition, so it is enough to replace $T^{\prime}$ by
\begin{equation}\label{eq:tparmenosuno}
T^{\prime}(u^mE_{a,b}-\epsilon^{q_b-q_a}(-u)^mE_{1-b,1-a} )=u^mE_{a,b}-(-u)^mE_{1-b,1-a},,
\end{equation}
where $a=q_aN+r_a$ and $b=q_bN+r_b$, with $0 \leq r_a \leq N-1$ and $0 \leq r_b \leq N-1$.
\end{proof}

\begin{remark}\label{rmk:embeding}
\begin{enumerate}
\item[(a)] By an abuse of notation, for $a=1$ and $a=1/2$, in view of Propositions \ref{prop:a1} to \ref{prop:a1/2npar}, we will denote again $\fims$ the surjective homomorphism from $\sqns$ onto $\bar{c}^{[m]}_{\infty}$, $\bar{b}^{[m]}_{\infty}$ and $\bar{d}^{[m]}_{\infty}$, respectively, given by the old $\fims$ composed with the corresponding isomorphisms introduced in the proof of the proposition above.
\item[(b)] Recall that $\nu$ was defined in \eqref{eq:nu}. If $\epsilon=1$, for arbitrary $a \in \ZZ$, the image of $\sqns$ under the homomorphism $\varphi^{[m]}_{q^a}$ is $\nu^{a}(\di)$. Similarly, if $a \in \ZZ+1/2$, the image of $\sqns$ under the homomorphism $\varphi^{[m]}_{q^a}$ is $\nu^{a}(\di)$ if $N$ is even and $\nu^{a}(\bi)$ if $N$ is odd. As a consequence, it is enough to study the cases $a=1$ and $a=1/2$. The same conclusions can be obtained for $\epsilon=-1$. Therefore, we will only consider  $a=1$ and $a=1/2$ throughout this paper.
\end{enumerate}
\end{remark}

Given vectors $\vec{s}=(s_1, \cdots, s_M)=(q^{a_1}, \cdots, q^{a_M}) \in \CC^M$ and $\vec{m}=(m_1, \cdots, m_M) \in \ZZ^M$ such that if $a_i \in \ZZ$, then $a_i=1$; if $a_i \in \ZZ+1/2$ then $a_i=1/2$; and $a_i-a_j \notin \ZZ+\tau^{-1} \ZZ$ for $i \neq j$. Combining this with Propositions \ref{prop:aenz2} to \ref{prop:a1/2npar}, we obtain a surjective Lie algebra homomorphism
\begin{equation}\label{eq:extensionfi}
\varphi^{[\vec{m}]}_{\vec{s}} = \oplus^n_{i=1} \varphi^{[m_i]}_{s_i}: \widehat{\sqnso} \longrightarrow \mathfrak{g}^{[\vec{m}]}:= \sum^n_{i=1}\mathfrak{g}^{[m_i]},
\end{equation}
where if $\epsilon=1$
\begin{align*}
\mathfrak{g}^{[m_i]}= \begin{cases}
\widehat{ g\ell}_\infty^{[m_i]} & \quad \text{ if } a_i \notin \ZZ/2,  \\
\tilde{b}^{[m_i]}_{\infty} & \quad \text{ if } a_i = 1/2 \, \text{ and N is odd}, \\
d^{[m_i]}_{\infty} & \quad \text{ if } a_i = 1/2 \, \text{ and N is even or } \, a_i=1.\\
\end{cases}
\end{align*}
and if $\epsilon=-1$
\begin{align*}
\mathfrak{g}^{[m_i]}= \begin{cases}
\widehat{ g\ell}_\infty^{[m_i]} & \quad \text{ if } a_i \notin \ZZ/2,  \\
b^{[m_i]}_{\infty} & \quad \text{ if } a_i = 1/2 \, \text{ and N is odd}, \\
d^{[m_i]}_{\infty} & \quad \text{ if } a_i = 1/2 \, \text{ and N is even}, \\
\ci & \quad \text{ if } a_i=1.
\end{cases}
\end{align*}

\section{Realization of quasifinite highest weight modules of $\sqnh$}

In this section $\gm$ will be $\glh$ or one of its classical subalgebras. The proof of the following Proposition is standard (cf. \cite{K})

\begin{proposition}
The $\gm-$module $L(\gm, \lambda)$ is quasifinite if and only if all but finitely many of the $^{\dagger}h^{(i)}_j$ are zero, where $\dagger$ represents $a, \, b, \, c$ or $d$ depending on whether $\gm$ is $\glh,\, \bi, \, \ci$ or $\di$.
\end{proposition}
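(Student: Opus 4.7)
My plan is to apply the Kac--Liberati characterization of quasifiniteness (Theorem~2.5 of \cite{K}): since each of the Lie algebras $\gm \in \{\glh, \bi, \ci, \di\}$ is $\ZZ$-graded and satisfies properties (P1)--(P3) (cf.\ \cite{K, KWY}), the module $L(\gm, \lambda)$ is quasifinite if and only if there exists a nondegenerate element $a \in \gm_{-1}$ with $\lambda([\gm_1, a]) = 0$. The rest of the proof consists in translating this existence statement into vanishing conditions on the labels $^{\dagger}h^{(i)}_j$.

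For $\gm = \glh$, I would write a general $a \in \gm_{-1}$ as $a = \sum_{j \in \ZZ} f_j(u) E_{j+1,j}$ with $f_j(u) = \sum_{i=0}^m c_{i,j} u^i \in R_m$. A direct computation using the $2$-cocycle \eqref{eq:2cocycleb} yields
\begin{equation*}
\lambda([u^k E_{j,j+1}, a]) = \sum_{i=0}^{m-k} c_{i,j}\, {}^a h^{(k+i)}_j,
\end{equation*}
so $\lambda([\gm_1, a]) = 0$ is equivalent, for each $j$, to a triangular linear system in the $c_{i,j}$'s. Moreover, since $[\gm_0,\, f_j(u) E_{j+1,j}]$ spans the ideal $(f_j(u))\cdot E_{j+1,j} \subset R_m\cdot E_{j+1,j}$ and $R_m$ is finite-dimensional, the element $a$ is nondegenerate precisely when $f_j$ is a unit in $R_m$ (equivalently, $c_{0,j} \neq 0$) for all but finitely many $j$.

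Combining these observations gives both directions. If $a$ is nondegenerate with $\lambda([\gm_1, a]) = 0$, then for cofinitely many $j$ we have $c_{0,j} \neq 0$, and the triangular system forces $^{a}h^{(i)}_j = 0$ for every $i = 0, \dots, m$. Conversely, if the set $J$ of $j$'s supporting some nonzero label is finite, the element $a := \sum_{j \notin J} E_{j+1,j}$ is nondegenerate and annihilated by $\lambda \circ \mathrm{ad}(\gm_1)$ by inspection. The cases $\gm = \bi, \ci, \di$ proceed identically, after writing $a$ as a sum of generators symmetric under the defining anti-involution and using the coroot formulas \eqref{eq:bi}, \eqref{eq:ci}, \eqref{eq:di}.

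The main obstacle will be the bookkeeping at the fixed or near-fixed points of the involutions for types $b, c, d$: the coroots $^{\dagger}H^{(j)}_0$ are modified by central-charge contributions, and one must verify that the triangular-system argument still yields vanishing of $^{\dagger}h^{(i)}_0$ after isolating the (finitely many) exceptional indices. These verifications are routine and parallel the analysis carried out in Section~1 of \cite{KWY}.
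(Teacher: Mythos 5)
Your argument is correct and is essentially the standard proof that the paper invokes without writing out (it only says the proof is ``standard (cf.\ \cite{K})''): you apply the general quasifiniteness criterion of Theorem \ref{eq:teoremacuasifinito} to $\gm$ and translate the existence of a nondegenerate $a\in\gm_{-1}$ with $\lambda([\gm_1,a])=0$ into the vanishing of all but finitely many labels via the triangular system $\sum_{i=0}^{m-k}c_{i,j}\,{}^{a}h^{(k+i)}_{j}=0$, exactly as in \cite{KR1} and \cite{KWY}. The two points you defer --- the verification of (P1)--(P3) for $\glh$, $\bi$, $\ci$, $\di$ themselves (the paper checks these only for $\sqnh$) and the coroot bookkeeping near the fixed points for types $b$, $c$, $d$ --- are indeed routine and are carried out in Section~1 of \cite{KWY} and in \cite{BL1}, so there is no gap.
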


Given $\vec{m}=(m_1, \cdots, m_M) \in \ZZ^M_{\geq 0}$, take a quasifinite $\lambda_i \in (\mathfrak{g}^{[m_i]})^\ast_0$ for each $1 \leq i \leq M$, and let $L(\mathfrak{g}^{[m_i]}, \lambda_i)$ be the corresponding $\mathfrak{g}^{[m_i]}$-module. Let $\vec{\lambda}=(\lambda_1, \cdots, \lambda_M)$. Then the tensor product
\begin{equation}
L(\gm, \lambda)=\otimes^M_{i=} L(\mathfrak{g}^{[m_i]}, \lambda_i)
\end{equation}
is an irreducible $\mathfrak{g}^{[\vec{m}]}$-module, with $\mathfrak{g}^{[\vec{m}]}=\oplus^M_{i=1}\mathfrak{g}^{[m_i]}$. The module $L(\mathfrak{g}^{[\vec{m}]}, \vec{\lambda})$ can be regarded as a $\sqnh-$-module via the homomorphism $\varphi^{[\vec{m}]}_{\vec{s}}$ and will be denoted by $L^{[\vec{m}]}_{\vec{s}}(\vec{\lambda})$. We shall need the following results.

\begin{proposition}\label{prop:extensionmod}
Let $V$ be a quasifinite $\sqnh$-module. Then the action of $\sqnh$ on $V$ naturally extends to the action of $(\sqnh)^{\O}_u$ on $V$, for any $u \neq 0$.
\end{proposition}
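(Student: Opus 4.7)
The plan is to reduce the action of an arbitrary operator $X = z^k f(T_q) E_{ij}$ with $f \in \O$ on any $v \in V$ to the evaluation of $f$ (and finitely many of its derivatives) at a finite set of points in $\CC^{\times}$; since holomorphic functions admit such evaluations, the action extends naturally, and the bracket identities will follow by a density argument based on the fact that Laurent polynomials are dense in $\O$ in the topology of uniform convergence on compact sets.

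I would first verify the claim on the highest-weight vector $v_{\lambda}$. Elements of $(\sqnh)_+$ annihilate $v_{\lambda}$, while elements of $(\sqnh)_0$ act by the scalar $\lambda(\cdot)$. By Theorem~\ref{eq:caracterizacion}, the labels of $\lambda$ arise as the values at integers of quasipolynomials $P_i$, $P_N^{\epsilon}$ and (when $N$ is even) $P_{N/2}$, each of which is entire. Inspecting the explicit description of $[(\sqnh)_1,(\sqnh)_{-1}]$ in Proposition~\ref{eq:ProposicionSobreD}(c), one sees that when the Laurent polynomials appearing there are replaced by holomorphic functions, the resulting scalar is still a finite linear combination of evaluations of these quasipolynomials at the exponents $e$, $e_j^{\pm}$ of $V$, hence well-defined.

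Next, I would propagate the action to an arbitrary $v \in V$. By Theorem~\ref{eq:teoremacuasifinito}(4), $V$ is spanned by vectors of the form $Y_1 \cdots Y_r v_{\lambda}$ with $Y_s \in (\sqnh)_-$. Writing $X(Y_1 \cdots Y_r v_{\lambda}) = [X,Y_1]Y_2 \cdots Y_r v_{\lambda} + Y_1 X(Y_2 \cdots Y_r v_{\lambda})$ and iterating reduces the computation to brackets $[z^m f(T_q) A, z^l g(T_q) B]$ with $f \in \O$ and $g$ a Laurent polynomial. Such a bracket is again of the same form in $(\sqnh)^{\O}_u$ plus a central contribution $\mathrm{tr}_0(f(q^l T_q) g(T_q)) \, \tr(AB) \, C$; the latter is well-defined for holomorphic $f$ since $\mathrm{tr}_0$ extracts a single Fourier coefficient of a convergent Laurent series on $\CC^{\times}$. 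The recursion thus closes within $(\sqnh)^{\O}_u$, and the first step supplies the base case.

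The main obstacle is verifying that the extended action satisfies the full Lie bracket of $(\sqnh)^{\O}_u$ and does not depend on the order in which the commutators were taken. Fix $v \in V_n$ and consider, for given indices, the linear map $\chi : \CC[w,w^{-1}] \to V_{n+\mathrm{wt}(X)}$ sending $f$ to $X \cdot v$; its target is finite-dimensional, so $\ker\chi$ has finite codimension, and the construction above shows that $\ker\chi$ contains one of the ideals $I^{[m]}_{s,k}$ or $\tilde{I}^{[m]}_{s,k,\epsilon}$ attached to the finite set of exponents of $V$. Hence $\chi$ factors through a finite-dimensional quotient on which $\O$ surjects, producing a canonical extension. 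Applying the same reasoning to the two sides of $[X,Y]v = X(Yv) - Y(Xv)$ yields continuous maps in $(f,g) \in \O \times \O$ that agree on the dense subset of pairs of Laurent polynomials, and hence everywhere, completing the proof.
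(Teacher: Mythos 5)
Your overall strategy in the third paragraph --- fix $v\in V_n$, consider the map $\chi\colon f\mapsto z^{k}f(T_q)E_{i,j}\cdot v$, show it kills a finite-codimension \emph{ideal} of $\CC[w,w^{-1}]$, and then extend to $\O$ by continuity --- is the right skeleton, but the decisive step is asserted rather than proved. Quasifiniteness only gives that $\ker\chi$ is a subspace of finite codimension; a finite-codimension subspace of $\CC[w,w^{-1}]$ need contain no nonzero ideal at all (its annihilator can be an arbitrary finite-dimensional space of linear functionals, not necessarily spanned by point evaluations and their derivatives), and without an ideal there is no canonical factorization of $\chi$ through a finite-dimensional quotient onto which $\O$ surjects. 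The sentence ``the construction above shows that $\ker\chi$ contains one of the ideals $I^{[m]}_{s,k}$'' is therefore exactly the gap: nothing in your commutator recursion produces such an ideal. This is where the paper's proof, following Proposition 4.3 of \cite{KL}, does its real work: one chooses the weight-zero elements $B$ displayed in the proof (combinations of $\mathrm{ad}(\tq E_{i,i})$ and $\mathrm{ad}(\tq^{-1}E_{N+1-i,N+1-i})$, replacing the operator $\mathrm{ad}\,D^2-k^2$ of \cite{KL}) so that $\mathrm{ad}\,B$ multiplies the symbol $f$ by a fixed non-constant Laurent polynomial $c(w)$; since $\mathrm{ad}\,B$ also acts on the finite-dimensional space $\mathrm{Hom}(V_n,V_{n+u})$, it satisfies a monic polynomial $p$, whence $\ker\chi\supseteq (p(c(w)))$, a genuine ideal of finite codimension. (Note also that the ideals $I^{[m]}_{s,k}\subset\O$ you invoke impose vanishing on an entire $q$-orbit and have infinite codimension; the objects you actually need are principal ideals $(b(w))$ with finitely many zeros.)

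A second, independent problem is that your first two paragraphs are built on a highest weight vector $v_{\lambda}$, on Theorem \ref{eq:teoremacuasifinito}(4), and on the exponents of Theorem \ref{eq:caracterizacion}, all of which presuppose that $V$ is an irreducible highest weight module. The proposition is stated for an arbitrary quasifinite module, i.e.\ any $\ZZ$-graded module with finite-dimensional graded pieces, and in the proof of Theorem \ref{teo:lirreduc} it is applied to a submodule $W$ of a pulled-back module, which need not be a highest weight module; your propagation argument does not cover this case. The $\mathrm{ad}\,B$ argument sketched above works vector by vector and needs no highest weight hypothesis, which is why the paper can dispense with that restriction entirely.
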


\begin{proof}
The proof is similar to the proof of Proposition (4.3) of \cite{KL}, replacing $B=adD^2-k^2$ by the following:
\begin{itemize}
\item If $i \neq j$, $i \neq N+1-j$, $i \neq N+1-i$ and $j \neq N+1-j$ \\
\begin{align*}
B=& \frac{1}{2q^k}(\text{ad} (\tq) E_{i,i}-\text{ad} (q^k\tq) E_{j,j})\\
&+\frac{1}{2}(\text{ad} (q\tq^{-1}) E_{N+1-j,N+1-j}-\text{ad} (q^{-k+1}\tq) E_{N+1-i,N+1-i}) .
\end{align*}

\item If $i = j$,
\begin{align*}
B=\frac{1}{q^k-1}\text{ad} \tq E_{i,i}+\frac{q^{-k+1}}{q^{-k}-1}\text{ad} \tq^{-1} E_{N+1-i,N+1-i}.
\end{align*}

\item If $i = N+1-j$,
\begin{align*}
B=\frac{1}{q^k}\text{ad} \tq E_{i,i}-q^{-k+1}\text{ad} \tq^{-1} E_{N+1-i,N+1-i}.
\end{align*}
\end{itemize}
\end{proof}

\begin{theorem}\label{teo:lirreduc}
Let $V$ be a quasifinite $\mathfrak{g}^{[\vec{m}]}$-module, which is regarded as a $\sqnh$-module via the homomorphism $\varphi^{[\vec{m}]}_{\vec{s}}$. Then any $\sqnh$-submodule of $V$ is also a $\mathfrak{g}^{[\vec{m}]}$-submodule. In particular, the $\sqnh$-module $L^{[\vec{m}],k,\epsilon}_{\vec{s}}(\vec{\lambda})$ is irreducible if $\vec{s}=(s_1, \cdots, s_M)$ with $s_i=q^{a_i}$ is such that $a_i \in \ZZ$ implies $a_i=1$; $a_i \in \ZZ+1/2$ implies $a_i=1/2$ and $a_i-a_j \notin \ZZ+\tau^{-1} \ZZ$ for $i \neq j$.
\end{theorem}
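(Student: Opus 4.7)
The plan is to reduce the irreducibility question over $\sqnh$ to irreducibility over $\mathfrak{g}^{[\vec{m}]}$, exploiting the surjectivity of $\varphi^{[\vec{m}]}_{\vec{s}}$ in \eqref{eq:extensionfi}. The first statement (any $\sqnh$-submodule is automatically a $\mathfrak{g}^{[\vec m]}$-submodule) is the technical heart; the second then follows by a standard tensor-product argument.

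First I would use Proposition \ref{prop:extensionmod} to extend the $\sqnh$-action on $V$ to an action of the completion $\widehat{(\sqns)^{\O}}$ (or, more precisely, of the localizations $(\sqnh)^{\O}_u$ at any $u\neq 0$ that one needs). Because the extension is canonical and obtained by term-by-term convergence on each graded piece of $V$ (all of which are finite-dimensional by quasifiniteness), any $\sqnh$-submodule $W\subset V$ is automatically stable under the extended action. Next, by Propositions \ref{prop:aenz2}, \ref{prop:a1}, \ref{prop:a1/2nimpar} and \ref{prop:a1/2npar}, the homomorphism $\varphi^{[\vec{m}]}_{\vec{s}}$ extends to a surjection
\begin{equation*}
\widehat{(\sqns)^{\O}} \twoheadrightarrow \mathfrak{g}^{[\vec{m}]}.
\end{equation*}
Therefore every element of $\mathfrak{g}^{[\vec{m}]}$ acts on $V$ as some element of $\widehat{(\sqns)^{\O}}$ does, and consequently $W$ is a $\mathfrak{g}^{[\vec{m}]}$-submodule. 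This proves the first claim.

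For the second claim, note that $\mathfrak{g}^{[\vec{m}]}=\bigoplus_{i=1}^{M}\mathfrak{g}^{[m_i]}$ as $\ZZ$-graded Lie algebras, and each $L(\mathfrak{g}^{[m_i]},\lambda_i)$ is irreducible by construction. I would then invoke the standard fact (cf. \cite{K}) that for a direct sum of $\ZZ$-graded Lie algebras admitting triangular decompositions, the outer tensor product of irreducible highest weight modules remains irreducible, so $L^{[\vec{m}],k,\epsilon}_{\vec{s}}(\vec{\lambda})=\bigotimes_{i=1}^{M} L(\mathfrak{g}^{[m_i]},\lambda_i)$ is irreducible as a $\mathfrak{g}^{[\vec{m}]}$-module. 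By the first claim, a $\sqnh$-invariant subspace must be $\mathfrak{g}^{[\vec m]}$-invariant, so there are no proper nonzero $\sqnh$-submodules.

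The main obstacle I anticipate is the extension step and verifying that the action of the completion $\widehat{(\sqns)^{\O}}$ on $V$ is well defined. One must check that for each element of $\mathfrak{g}^{[\vec{m}]}$ one can lift it through $\varphi^{[\vec{m}]}_{\vec{s}}$ to an element of the completion whose action converges on each finite-dimensional graded piece of $V$; the arithmetic hypothesis $a_i-a_j\notin\ZZ+\tau^{-1}\ZZ$ (together with $|q|\neq 1$) is precisely what separates the discrete sequences $\{q^{(k-1)/2+n+a_i}\}$ so that the required interpolating holomorphic functions exist simultaneously for all $i$. Once this separation is in place, the Proposition \ref{prop:extensionmod} mechanism takes care of the convergence, and the rest of the argument is formal.
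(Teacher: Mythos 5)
Your overall strategy is the same as the paper's: extend the action of $\sqnh$ on the submodule $W$ to the completion via Proposition \ref{prop:extensionmod}, use the surjectivity of $\varphi^{[\vec{m}]}_{\vec{s}}$ coming from Propositions \ref{prop:aenz2}--\ref{prop:a1/2npar} to deduce that $W$ is $\mathfrak{g}^{[\vec{m}]}$-invariant, and then conclude irreducibility of the tensor product. However, there is one genuine gap in your argument, precisely at the degree-zero component. Proposition \ref{prop:extensionmod} only extends the action of $\sqnh$ to $(\sqnh)^{\O}_u$ for $u \neq 0$; it says nothing about the degree-zero part of the completion, where the convergence of the infinite sums defining the extended action is not controlled by the same mechanism. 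Consequently, your claim that ``every element of $\mathfrak{g}^{[\vec{m}]}$ acts on $V$ as some element of $\widehat{(\sqns)^{\O}}$ does'' is only justified for elements of nonzero degree: the surjectivity of $\varphi^{[\vec{m}]}_{\vec{s}}$ on $(\sqnh)^{\O}_u$ gives invariance of $W$ under $(\mathfrak{g}^{[\vec{m}]})_u$ only for $u \neq 0$, and the image of the \emph{uncompleted} degree-zero part $(\sqnsnn)_0$ does not fill out $(\mathfrak{g}^{[\vec{m}]})_0$ (a Laurent polynomial has only finitely many coefficients, whereas the degree-zero part of $\mathfrak{g}^{[\vec{m}]}$ is an unrestricted diagonal).

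The paper closes this gap with one additional observation that your proposal is missing: $\mathfrak{g}^{[\vec{m}]}$ coincides with its derived algebra, so every degree-zero element is a sum of brackets of elements of nonzero degree. Since $W$ is already invariant under all $(\mathfrak{g}^{[\vec{m}]})_u$ with $u \neq 0$, it is then invariant under their brackets, hence under $(\mathfrak{g}^{[\vec{m}]})_0$ as well, and the first claim follows. With that repair, the rest of your argument (the outer tensor product of irreducible highest weight modules over the direct sum $\oplus_i \mathfrak{g}^{[m_i]}$ is irreducible, so there are no proper nonzero $\sqnh$-submodules) is correct and is exactly how the paper concludes.
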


\begin{proof}
Let $W$ be a $\sqnh$-submodule of $V$. Due to the fact that $W$ is a quasifinite $\sqnh$-module as well, by Proposition \ref{prop:extensionmod} it can be extended to $(\sqnh)^{\O}_u$ for $u \neq 0$. As a result of \eqref{eq:extensionfi}, the map $\varphi^{[\vec{m}]}_{\vec{s}}:(\sqnh)^{\O}_u \longrightarrow (\mathfrak{g}^{[\vec{m}]})_u$ is surjective for any $u \neq 0$. Therefore, $W$ is invariant with respect to all members of the principal gradation of $(\mathfrak{g}^{[\vec{m}]})_u$ with $u \neq 0$. Since $\mathfrak{g}^{[\vec{m}]}$ coincides with its derived algebra, this proves the theorem.
\end{proof}


Now, we will proceed to show that all the irreducible quasifinite $\sqnh$-modules can be realized as some $L^{[\vec{m}],k,\epsilon}_{\vec{s}}(\vec{\lambda})$, for some $\vec{m} \in \ZZ^M_{\geq 0}$ and $\vec{s} \in \CC^M$, with $s_i=q^{a_i}$ such that $a_i-a_j \notin \ZZ+\tau^{-1} \ZZ$ for $i \neq j$. For simplicity, we will consider the case $M=1$ to calculate the generating series $\bigtriangleup^{ \epsilon}_{m,s,\lambda,i}(x)=\sum_{n \in \ZZ}(\bigtriangleup^{\epsilon}_{m,s,\lambda,i})_n \, x^{-n} $ of the highest weight and central charge $c$ of the $\sqnh$-module $L^{[m],k,\epsilon}_s(\lambda)$.

$ $\newline
We will introduce the following notation
\begin{equation}
\eta_i(\alpha,\beta)=\frac{q^{\alpha\beta}+(-1)^iq^{-\alpha\beta}}{q^{\beta/2}-q^{-\beta/2}} \frac{(\beta \log q)^i}{i!}.
\end{equation}

Making use of Theorem \eqref{eq:caracterizacion}, take an irreducible quasifinite weight $\sqnh$-module $V$ with central charge $c$ and generating series $\bigtriangleup_i(x)$, $P^{\epsilon}_N(x)$ an even quasipolynomial such that
\begin{align}
P^{\epsilon}_N(n)=\bigtriangleup_{N,n} \quad \text{ for } \quad  n \neq 0 \quad \text{ and } \quad P^{\epsilon}_N(0)=-2c,
\end{align}
$P_{i}(x)$ a quasipolynomial such that
\begin{equation}
P_{i}(n)=\bigtriangleup_{i,n}-\bigtriangleup_{i+1,n}
\end{equation}
for $1 < i \leq [N/2]-\delta_{N, even}$ are quasipolynomials, and when $N$ is even, $P_{N/2}(x)$ an even quasipolynomial such that
\begin{equation}
P_{N/2}(n)=\bigtriangleup_{N/2,n}-\bigtriangleup_{N/2+1,n}.
\end{equation}
We write
\begin{align}\label{eq:cuaspipolpn}
&P_{i}(x)=\sum_{s \in \CC} p_{s,i}(x)q^{s_ix}, \quad \text{ for } \quad 1 < i \leq [N/2]-\delta_{N,even} \\ \nonumber
&P^{\epsilon}_N(x)=\sum_{j \in \ZZ}p^{\epsilon}_{j,N}(x)\cosh_q(e_j^+x)+\sum_{j \in \ZZ}q^{\epsilon}_{j,N}(x)\sinh_q(e_j^-x) \quad \text{ and } \\ \nonumber
&P_{N/2}(x)=\sum_{j \in \ZZ}p_{j,N/2}(x)\cosh_q(e_j^+x)+\sum_{j \in \ZZ}q_{j,N/2}(x)\sinh_q(e_j^-x),
\end{align}
where $p_{j,N}(x)$ and $p_{j,N/2}(x)$ (respectively, $q_{j,N}(x)$ and $q_{j,N/2}(x)$) are even (respectively, odd) polynomials and $p_{e,i}(x)$ is a polynomial. Let $L^{[\vec{m}]}_{\vec{s}}(\mathfrak{g}^{[\vec{m}]}, \vec{\lambda})$ be a representation of $\mathfrak{g}^{[\vec{m}]}$ considered as a representation of $\sqnh$ via $\hat{\varphi}^{[\vec{m}]}_s$, where $\mathfrak{g}^{[m]}$ is ${ g\ell}_\infty^{[m]}$ or one of its classical subalgebras. Then
\begin{equation}\label{eq:deltafii}
(\bigtriangleup^{\epsilon}_{m,a,\lambda,i})_n=-\lambda(\hat{\varphi}^{[m]}_s((q^{-1/2}\tq)^nE_{i,i}-(q^{1/2}\tq)^{-n}E_{N+1-i,N+1-i}),
\end{equation}
with $1 < i \leq [N/2]+\delta_{N,even}$, and
\begin{equation}\label{eq:deltafiN}
(\bigtriangleup^{\epsilon}_{m,a,\lambda,N})_n=-\lambda(\hat{\varphi}^{[m]}_s((\tq^n+\tq^{-n})E_{N,N}-((q^{-1}\tq)^n+(q^{-1}\tq)^{-n})E_{1,1}),
\end{equation}
where $\hat{\varphi}^{[m]}_s$ is the embedding given by proposition \ref{prop:fihat} composed accordingly with the isomorphisms defined in propositions \ref{prop:a1} to \ref{prop:a1/2npar}.



\begin{proposition}\label{prop:casoanoenz/2}

Take the embedding $\hat{\varphi}^{[m]}_s: {\sqnh} \longrightarrow \glh$ with $s=q^a$ y $a \notin \ZZ/2$. The $\glh$-module $L(\glh,\lambda)$ regarded as a ${\sqnh}$-module is isomorphic to \\
$L({\sqnh};e; e^+;e^-)$, where
\begin{enumerate}
\item[(a)] The exponents $e$ are $-1/2+a-l$ and $1/2-a+l$ with $l \enz$ and their respective mutiplicities are
\begin{align}
& p_{1/2-a+l,i}(x)=\sum^m_{u=0} \frac{(x\log q)^u}{u!} {^ah^{(u)}_{(l-1)N+i}} \quad \text{ and } \\ \nonumber
& p_{-1/2+a-l,i}(x)=\sum^m_{u=0} \frac{(-x\log q)^u}{u!} {^ah^{(u)}_{lN-i}},
\end{align}
for $1 < i \leq [N/2]-\delta_{N,even}$.
\item[(b)] The exponents are $e^+=e^-=a-l$ with $l \enz$ with multiplicities
\begin{equation}
p^{\epsilon}_{a-l,N}(x)=\sum^m_{u=0, u \, even}{^a\widehat{h}^{(u)}_{(l-1)N}}\frac{x^u}{u!} \quad \text{ and } \quad q^\epsilon_{a-l,N}(x)=\sum^m_{u=0, u \, odd}{^a\widetilde{h}^{(u)}_{(l-1)N}}\frac{x^u}{u!},
\end{equation}
where
\begin{align*}
&{^a\widehat{h}^{(u)}_{(l-1)N}}=2(\log q)^u({^ah^{(u)}_{(l-1)N}}+\delta_{l,1} (c_u-\delta_{u,0}c_0)) \quad \text{ and } \\
&{^a\widetilde{h}^{(u)}_{(l-1)N}}= 2(\log q)^u ({^ah^{(u)}_{(l-1)N}}+\delta_{l,1}c_u), \nonumber
\end{align*}
\end{enumerate}
and $P^{\epsilon}_{l,N}(0)=-2c_0$ for $i=N$.
\item[(c)] Moreover, if $N$ is even, we have the exponents $e^+=e^-=1/2-a+l$ with $l \enz$ and their multiplicities are
\begin{align}
&p_{1/2-a+l,N/2}(x)=\sum^m_{u=0, u \text{ par}} \frac{(x \log q)^u}{u!}{^ah^{(u)}_{(l-1/2)N}} \quad \text{ and } \\ \nonumber
&q_{1/2-a+l,N/2}(x)=\sum^m_{u=0, u \text{ impar}} \frac{-(x \log q)^u}{u!}{^ah^{(u)}_{(l-1/2)N}},
\end{align}
for $i=N/2$.
\end{proposition}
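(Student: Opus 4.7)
The plan is to compute the labels $(\bigtriangleup^{\epsilon}_{m,a,\lambda,i})_n$ directly from \eqref{eq:deltafii}--\eqref{eq:deltafiN} by substituting the explicit formula for $\hat\varphi^{[m]}_s$ given in Proposition \ref{prop:fihat}, and then to compare the resulting generating series $\bigtriangleup^{\epsilon}_{m,a,\lambda,i}(x)$ with the canonical quasipolynomial form \eqref{eq:cuaspipolpn} in order to read off the exponents and multiplicities in each of the three cases (a)--(c).

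For part (a), for $1 < i \leq [N/2]-\delta_{N,\text{even}}$, I would apply $\varphi^{[m]}_{q^a}$ with $k=0$ to each summand using the defining formula $\varphi^{[m]}_s(z^k f(\tq) E_{i,j}) = \sum_{l} f(sq^{-l+t}) E_{(l-k)N-i+1, lN-j+1}$. After a shift of the summation index in the second term, this produces
\begin{equation*}
\hat\varphi^{[m]}_s\bigl((q^{-1/2}\tq)^{n}E_{i,i} - (q^{1/2}\tq)^{-n}E_{N+1-i,N+1-i}\bigr) = \sum_{l \in \ZZ} q^{n(a-1/2-l+t)} E_{lN-i+1,lN-i+1} - \sum_{l \in \ZZ} q^{-n(a-1/2-l+t)} E_{lN+i,lN+i},
\end{equation*}
with no correction term from Proposition \ref{prop:fihat} entering, since the relevant indices do not cross zero for $1 < i < N$. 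Applying $\lambda$, expanding $q^{\pm nt} = \sum_u \tfrac{(\pm n \log q)^u}{u!} t^u$, and using $\lambda(t^u E_{j,j}) = {}^a\lambda^{(u)}_j$, I obtain $(\bigtriangleup^{\epsilon}_{m,a,\lambda,i})_n$ as a double sum over $l$ and $u$. Forming the telescoping differences $P_i(n) := (\bigtriangleup^{\epsilon}_{m,a,\lambda,i})_n - (\bigtriangleup^{\epsilon}_{m,a,\lambda,i+1})_n$ collapses adjacent labels into ${}^ah^{(u)}_{j} = {}^a\lambda^{(u)}_j - {}^a\lambda^{(u)}_{j+1}$ (the $\delta_{j,0} c_u$ piece never appears in this range), and grouping by the exponentials $q^{(a-1/2-l)n}$ and $q^{-(a-1/2-l)n}$ exhibits the two families of exponents $\{-1/2+a-l\}$ and $\{1/2-a+l\}$ with multiplicities matching part~(a).

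For part (b), the combination $(\tq^n + \tq^{-n})E_{N,N} - ((q^{-1}\tq)^n + (q^{-1}\tq)^{-n})E_{1,1}$ is manifestly invariant under $n \mapsto -n$, so the resulting quasipolynomial is even. Here the correction term $-\sum_{j=1}^m \tfrac{q^{(a-1)n}+(-1)^j q^{(1-a)n}}{q^{n/2}-q^{-n/2}}(n\log q)^j \tfrac{t^j}{j!}$ from Proposition \ref{prop:fihat} contributes, and the $\delta_{j,0} c_u$ appearing in the definition of ${}^ah^{(u)}_0$ provides a second source of central-charge terms. Expanding $(q^{n/2}-q^{-n/2})^{-1}$ as a geometric series in $q^{\pm n/2}$ distributes the correction over the exponents $\{a-l : l \in \ZZ\}$. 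Combining the even-in-$u$ parts into $\cosh_q$-multiples and the odd-in-$u$ parts into $\sinh_q$-multiples gives the required decomposition, and tracking both central-charge sources produces the $\delta_{l,1}(c_u - \delta_{u,0}c_0)$ and $\delta_{l,1} c_u$ adjustments in ${}^a\widehat h^{(u)}_{(l-1)N}$ and ${}^a\widetilde h^{(u)}_{(l-1)N}$. The identity $P^{\epsilon}_N(0) = -2c_0$ is read off the $u=0$, $l=1$ summand. Part (c), which treats $i = N/2$ when $N$ is even, is entirely analogous: the index coincidence $N+1-N/2 = N/2+1$ forces the same $n \mapsto -n$ symmetry and yields the $\cosh_q$/$\sinh_q$ structure at the exponents $\{1/2 - a + l\}$.

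The hard part will be the bookkeeping in (b): the geometric expansion of $(q^{n/2}-q^{-n/2})^{-1}$ must be distributed correctly over all exponents $a-l$ (not only $l=1$), and one must verify that its contribution combined with the $\delta_{j,0} c_u$ inside ${}^ah^{(u)}_0$ reproduces precisely the stated $\delta_{l,1}$-adjusted multiplicities, including the delicate $-\delta_{u,0} c_0$ offset in ${}^a\widehat h^{(u)}_{(l-1)N}$. Verifying this cleanly requires splitting into even and odd powers of $u$ so that the $\cosh_q$ and $\sinh_q$ pieces can be matched separately.
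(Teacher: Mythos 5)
Your overall strategy --- compute $(\bigtriangleup^{\epsilon}_{m,a,\lambda,i})_n$ by substituting the explicit formula for $\hat{\varphi}^{[m]}_s$ into \eqref{eq:deltafii}--\eqref{eq:deltafiN}, telescope, and match against the normal form \eqref{eq:cuaspipolpn} --- is exactly the paper's, and parts (a) and (c) go through essentially as you describe. One small correction in (a): the central correction term of Proposition \ref{prop:fihat} \emph{does} enter each individual $(\bigtriangleup^{\epsilon}_{m,a,\lambda,i})_n$ (it is part of the definition of $\hat{\varphi}^{[m]}_s$ on the degree-zero part for every $i$, contributing $\sum_{u\geq 1}\eta_u(a-1,n)c_u$); it disappears only because it is independent of $i$ and cancels in the difference $(\bigtriangleup^{\epsilon}_{m,a,\lambda,i})_n-(\bigtriangleup^{\epsilon}_{m,a,\lambda,i+1})_n$. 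Your stated reason (``the relevant indices do not cross zero'') is not the right one, though the conclusion for (a) is unaffected.

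The genuine gap is in part (b). You propose to handle the denominator $q^{n/2}-q^{-n/2}$ of the correction term by expanding $(q^{n/2}-q^{-n/2})^{-1}$ as a geometric series and claim this ``distributes the correction over the exponents $\{a-l\}$''. This step fails. Such an expansion is not a formal identity: which of the two geometric series converges depends on the sign of $n$ (since $|q|\neq 1$), and either one produces infinitely many terms with exponents of the form $a-1\pm(\tfrac12+k)$, $k\geq 0$, i.e.\ lying in $a+\tfrac12+\ZZ$ rather than in the claimed set $a+\ZZ$, and infinitely many of them --- which would destroy the quasipolynomial structure altogether. What actually happens is an exact cancellation, not an expansion: the $i=N$ label element $(\tq^n+\tq^{-n})E_{N,N}-((q^{-1}\tq)^n+(q^{-1}\tq)^{-n})E_{1,1}$ is the combination $q^{n/2}$ times the basic diagonal element at $n$ plus $q^{-n/2}$ times the one at $-n$, so every term of its image, including the correction, carries a factor that is a multiple of $2\sinh_q(n/2)=q^{n/2}-q^{-n/2}$. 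The computation then rests on the identity $2\sinh_q(n/2)\,\eta_u(a-l,n)=\frac{(n\log q)^u}{u!}\bigl(q^{n(a-l)}+(-1)^uq^{-n(a-l)}\bigr)$, after which the correction contributes only at the single exponent $a-1$ (whence the $\delta_{l,1}$ adjustments in ${}^a\widehat{h}^{(u)}_{(l-1)N}$ and ${}^a\widetilde{h}^{(u)}_{(l-1)N}$ and the value $P^{\epsilon}_N(0)=-2c_0$). Without replacing your geometric-series step by this cancellation, the bookkeeping in (b) cannot be completed as written.
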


\begin{proof}
If $1 < i \leq [N/2]-\delta_{N,even}$, combining the formulas of proposition \ref{prop:fihat} with \eqref{eq:gl} and \eqref{eq:deltafii}, we have that
\begin{align*}
(\bigtriangleup^{\epsilon}_{m,a,\lambda,i})_n =  \sum_{l \in \ZZ} \sum^m_{u=0} \bigg( & \frac{-(n\log q)^u}{u!}q^{n(-1/2+a-l)} \,\, {^a\lambda^{(u)}_{lN+1-i}}  \\ \nonumber
&+ \frac{(-n\log q)^u}{u!}q^{-n(-1/2+a-l)}\,\, {^a\lambda^{(u)}_{(l-1)N+i}}       \bigg)
 +\sum^m_{u=1}\eta_u(a-1,n)c_u.
\end{align*}
Then,
\begin{align*}
(\bigtriangleup^{\epsilon}_{m,a,\lambda,i})_n-(\bigtriangleup^{\epsilon}_{m,a,\lambda,i+1})_n=\sum_{l \in \ZZ}\sum^m_{u=0} &\frac{(-n\log q)^u}{u!} {^ah^{(u)}_{(l-1)N+i}q^{-n(-1/2+a-l)}}\\
&+\frac{(n\log q)^u}{u!}{^ah^{(u)}_{lN-i}q^{n(-1/2+a-l)}}.
\end{align*}
Making use of the definitions of multiplicities and exponents for the quasipolynomial $P_i(x)$ in \eqref{eq:cuaspipolpn}, we complete the proof for (a).

If $i=N$, as before, considering \eqref{eq:deltafiN} and \eqref{eq:gl}, we obtain
\begin{align*}
(\bigtriangleup^{\epsilon}_{m,a,\lambda,N})_n=& \sum_{l \in \ZZ} \sum^m_{u=0}  \bigg( \eta_u(-1+a-l,n) 2 \sinh_q(n/2) \,\, {^a\lambda^{(u)}_{lN}} \\
&-\eta_u(a-l,n)2 \sinh_q(n/2)\,\,{^a\lambda^{(u)}_{(l-1)N+1}} \bigg) +\sum^m_{u=1}2 \sinh_q(n/2)\eta_u(a-1,n)c_u.
\end{align*}
Shifting the index $l$ to $l-1$ in the first sum, we get
\begin{equation*}
(\bigtriangleup^{\epsilon}_{m,a,\lambda,N})_n= \sum_{l \in \ZZ} \sum^m_{u=0} 2 \sinh_q(n/2) \eta_u(a-l,n)({^ah^{(u)}_{(l-1)N}}+\delta_{l,1}c_r-c_0).
\end{equation*}
Since
\begin{equation*}
  2 \sinh_q(n/2) \eta_u(a-l,n)=\frac{(n \log q)^u}{u!}(q^{n(a-l)}+(-1)^uq^{-n(a-l)})
\end{equation*}
and making use of the definitions of multiplicities and exponents for the quasipolynomials $P^\epsilon_N(x)$ in \eqref{eq:cuaspipolpn}, we finish the proof for (b).

If $N$ even, following the same steps as in the proof of (a), we have
\begin{equation*}
(\bigtriangleup^{\epsilon}_{m,a,\lambda,N/2})_n-(\bigtriangleup^{\epsilon}_{m,a,\lambda,N/2+1})_n=\sum_{l \in \ZZ} \sum^m_{u=0} \frac{(n\log q)^u}{u!}{^ah^{(u)}_{(l-1/2)N}}(q^{-n(1/2-a+l)}+(-1)^uq^{n(1/2-a+l)}).
\end{equation*}
Then, splitting the sums according to the parity of $u$, we get the multiplicities and exponents expected.

\end{proof}


\begin{proposition}

Let $s=q^a$ with $a=1/2$ and $N$ even. Take the embedding $\hat{\varphi}^{[m]}_s: {\sqnh} \longrightarrow \di$. The $\di$-module $L(\di,\lambda)$ regarded as a ${\sqnh}$-module is isomorphic to $L({\sqnh};e; e^+;e^-)$, where
\begin{enumerate}

\item[(a)] If $1 < i \leq [N/2]-\delta_{N,even}$, the exponents are $e=l$ with $l \enz$ and their multiplicities are
\begin{align}
& p_{l,i}(x)=\sum^m_{u=0} \frac{(-x\log q)^u}{u!} {^dh^{(u)}_{(l-1/2)N+i}} \quad \text{ if } \quad l>0 \quad \text{ and } \\ \nonumber
& p_{l,i}(x)=\sum^m_{u=0} \frac{(x\log q)^u}{u!} {^dh^{(u)}_{(l-1/2)N-i}} \quad \text{ if } \quad l\leq 0
\end{align}

\item[(b)] If $i=N$, the exponents $e^+$ and $e^-$ are $1/2-l$ with $l \geq 1$ and their multiplicities are
\begin{equation}
p^{\epsilon}_{1/2-l,N}(x)=\sum^m_{u=0, u \, even}{^d\widehat{h}^{(u)}_{(l-1/2)N}}\frac{x^u}{u!} \quad \text{ and } \quad q^\epsilon_{1/2-l,N}(x)=\sum^m_{u=0, u \, odd}{^d\widetilde{h}^{(u)}_{(l-1/2)N}}\frac{x^u}{u!},
\end{equation}
where
\begin{align*}
&{^d\widehat{h}^{(u)}_{(l-1/2)N}}=2 (\log q)^u({^dh^{(u)}_{(l-1/2)N}}+\delta_{l,1}(c_u-\delta_{u,0}c_0)) \quad \text{ and } \\
&{^d\widetilde{h}^{(u)}_{(l-1/2)N}}=2(\log q)^u({^dh^{(u)}_{(l-1/2)N}}+\delta_{l,1}c_u)  \nonumber
\end{align*}
and $P^{\epsilon}_{l,N}(0)=-2c_0$.

\item[(c)] Moreover, if $N$ is even, for $i=N/2$ the exponents $e^+$ and $e^-$ are $l\geq 0$ and their multiplicities are, if $l \geq 1$,
\begin{align}
& p_{l,N/2}(x)=\sum^m_{u=0, u \text{ even}} 2\frac{(x \log q)^u}{u!}{^dh^{(u)}_{lN}} \quad \text{ and } \\ \nonumber
& q_{l,N/2}(x)=\sum^m_{u=0, u \text{ odd}} -2\frac{(x \log q)^u}{u!}{^dh^{(u)}_{lN}}
\end{align}
and if $l=0$
\begin{align}
& p_{0,N/2}(x)=\sum^m_{u=0, u \text{ even}} 2\frac{(x \log q)^u}{u!}{^d\lambda^{(u)}_{1}} \quad \text{ and } \quad q_{0,N/2}(x)=0.
\end{align}
\end{enumerate}
\end{proposition}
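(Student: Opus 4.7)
The plan is to parallel the proof of Proposition~\ref{prop:casoanoenz/2}, but with $s = q^{1/2}$ and with the target of the composed embedding being $\di$ via Proposition~\ref{prop:a1/2npar} (applying the automorphism $T$ of~\eqref{eq:tpar}, and also $T^{\prime}$ of~\eqref{eq:tparmenosuno} when $\epsilon = -1$). I would substitute the resulting formula for $\hat\varphi^{[m]}_{q^{1/2}}$ into~\eqref{eq:deltafii}--\eqref{eq:deltafiN}, express each coefficient $(\bigtriangleup^\epsilon_{m,1/2,\lambda,i})_n$ through the $^d\lambda$-labels of~\eqref{eq:di}, telescope into $^dh$-labels where possible, and read off the exponents and multiplicities from the decomposition~\eqref{eq:cuaspipolpn}.

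For part~(a), with $1 < i \leq [N/2] - \delta_{N,even}$: the image under $\hat\varphi^{[m]}_{q^{1/2}}$ of $(q^{-1/2}\tq)^n E_{i,i} - (q^{1/2}\tq)^{-n} E_{N+1-i,N+1-i}$ is, after applying $T$, a sum over $l \in \ZZ$ of matrix units attached to the indices $(l - 1/2)N - i + 1$ and $(l - 1/2)N + i$. Computing $(\bigtriangleup^\epsilon_{m,1/2,\lambda,i})_n - (\bigtriangleup^\epsilon_{m,1/2,\lambda,i+1})_n$ telescopes $^d\lambda$'s into $^dh$'s exactly as in Proposition~\ref{prop:casoanoenz/2}(a); separating the coefficients of $q^{nl}$ and $q^{-nl}$ produces the two halves of $p_{l,i}(x)$ according to whether $l > 0$ or $l \leq 0$. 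For part~(b), starting from~\eqref{eq:deltafiN} and performing an analogous index shift, the telescoping brings out a common factor $2\sinh_q(n/2)$ which combines with $\eta_u(1/2 - l, n)$ into $(n\log q)^u(q^{n(1/2-l)} + (-1)^u q^{-n(1/2-l)})/u!$; splitting by parity of $u$ yields $\cosh_q$- and $\sinh_q$-halves with common exponent $1/2 - l$, while the central-charge correction coming from the second summand in Proposition~\ref{prop:fihat} contributes only at $l = 1$ and is absorbed into $^d\widehat{h}^{(u)}$ and $^d\widetilde{h}^{(u)}$, in parallel with Proposition~\ref{prop:casoanoenz/2}(b). For $l \geq 1$ in part~(c), the same parity split applied to $i = N/2$ (where the image lands on the self-dual middle diagonal of $\di$) produces the stated $\cosh_q/\sinh_q$ formulas with exponent $l$.

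The main obstacle is the value $l = 0$ in part~(c). For $a = 1/2$, $i = N/2$ and $N$ even, both summands $E_{N/2,N/2}$ and $E_{N/2+1,N/2+1}$ get mapped by $T$ to matrix units whose indices land on the exceptional pair $\{0,1\}$ of $\di$, where, according to~\eqref{eq:di}, the generator $^dH^{(j)}_0$ has the four-term form plus a central-charge correction rather than the usual difference $^d\lambda^{(j)}_k - {}^d\lambda^{(j)}_{k+1}$. Hence the telescoping of part~(a) breaks down at $l = 0$, and the contribution must be extracted directly by combining the $l = 0$ and $l = 1$ summands of~\eqref{eq:deltafii} (whose indices pair under $j \leftrightarrow 1 - j$) with the central-charge correction from Proposition~\ref{prop:fihat}, and expressing the result through the labels $^d\lambda^{(j)}_1 = \lambda(u^j E_{1,1} - (-u)^j E_{0,0})$. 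The factor $2$ and the restriction to even $u$ in $p_{0,N/2}(x)$ come from the coincidence of the two $\lambda$-pairings, while $q_{0,N/2}(x) = 0$ is forced because the exponent $0$ appears only in $\cosh_q$, since $\sinh_q(0\cdot x) \equiv 0$. The identification of the resulting series with an even quasipolynomial is then immediate from Proposition~\ref{eq:proposicionquasi}.
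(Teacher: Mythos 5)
Your proposal is correct and follows essentially the same route as the paper: composing $\hat{\varphi}^{[m]}_{q^{1/2}}$ with the automorphism $T$ of \eqref{eq:tpar} (and $T'$ of \eqref{eq:tparmenosuno} when $\epsilon=-1$), substituting into \eqref{eq:deltafii}--\eqref{eq:deltafiN}, re-indexing in $l$, telescoping the $^d\lambda$-labels into $^dh$-labels via \eqref{eq:di}, using the identity $2\sinh_q(n/2)\,\eta_u(1/2-l,n)=\frac{(n\log q)^u}{u!}(q^{n(1/2-l)}+(-1)^uq^{-n(1/2-l)})$, and splitting by parity of $u$. Your separate treatment of the exceptional $l=0$ term in part (c), where the middle indices land on the pair $\{0,1\}$ and the contribution is expressed through $^d\lambda^{(u)}_1$ with the even-$u$ restriction coming from the factor $(1+(-1)^u)$, is exactly what the paper does.
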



\begin{proof}
Consider first the case $\epsilon=1.$ By Remark \ref{rmk:embeding}, part (a), we have that the embedding $\hat{\varphi}^{[m]}_s: \sqnh \longrightarrow \di$ is in fact the embedding given by proposition \ref{prop:fihat} composed by $T^{-1}$, where $T$ is the automorphism of $\gl$ defined in \eqref{eq:tpar}.

If $1 < i \leq [N/2]-\delta_{N,even}$, using \eqref{eq:deltafii} for the embedding in this case, we get
\begin{align}\label{eq:1}
&(\bigtriangleup^{\epsilon}_{m,1/2,\lambda,i})_n=\lambda \bigg( \sum_{l \in \ZZ} \sum^m_{u=0} -\frac{(n \log q)^u}{u!}q^{-nl}t^rE_{(l+1/2)N+1-i,(l+1/2)N+1-i} \\ \nonumber
&+\sum_{l \in \ZZ} \sum^m_{u=0}\frac{(-n \log q)^u}{u!}q^{nl}t^rE_{(l-1/2)N+i,(l-1/2)N+i} \bigg) + \sum^m_{u=1} \eta_u(-1/2,n)c_u. \nonumber
\end{align}

%

Making an adequate change of variables in $l$ and using \eqref{eq:di}, we get
\begin{equation*}
(\bigtriangleup^{\epsilon}_{m,1/2,\lambda,i})_n-(\bigtriangleup^{\epsilon}_{m,1/2,\lambda,i+1})_n=\sum_{l \geq 1} \sum^m_{u=0} \frac{(n\log q)^u}{u!} \big(
(-1)^uq^{nl} \, { ^dh_{(l-1/2)N+i}}+q^{n(-l+1)}\, {^dh_{(l-1/2)N-i}} \big).
\end{equation*}
Making use of the definitions of multiplicities and exponents for the quasipolynomial $P_i(x)$ in \eqref{eq:cuaspipolpn}, we finish the proof of (a).

Using \eqref{eq:deltafiN} for the embedding in this case, we get
\begin{align}\label{eq:2}
&(\bigtriangleup^{\epsilon}_{m,1/2,\lambda,N})_n= \\ \nonumber
&\lambda \bigg( \sum_{l \in \ZZ} \sum^{m}_{u=0}  \frac{(n \log q)^u}{u!} t^u \big( (-q^{n(1/2-l)}-(-1)^uq^{-n(1/2-l)})E_{(l-1/2)N+1,(l-1/2)N+1} \\ \nonumber
&+\sum_{l \in \ZZ} \sum^{m}_{u=0} \frac{(n \log q)^u}{u!} t^u (q^{-n(1/2+l)}+(-1)^uq^{n(1/2+l)})E_{(l+1/2)N,(l+1/2)N} \big) \bigg)  \\ \nonumber
&+\sum^m_{u=1}2 \sinh_q(n/2)\eta_u(-1/2,n)c_u. \nonumber
\end{align}
Once again, making an adequate change of variables in $l$, using \eqref{eq:di}, and taking into account the fact that
\begin{equation*}
2\eta_u(-1/2,n)\sinh_q(n/2)=\frac{(q^{n/2}+(-1)^uq^{-n/2})(n \log q)^u}{u!},
\end{equation*}
we have
\begin{align*}
(\bigtriangleup^{\epsilon}_{m,1/2,\lambda,N})_n= & \sum_{l \geq 1} {^dh^{(u)}_{(l-1/2)N}(n)}(q^{n(1/2-l)}+(-1)^uq^{-n(1/2-l)})\\
&+\sum^m_{u=0}\frac{(n \log q)^u}{u!}(q^{n/2}+(-1)^uq^{-n/2})(c_u-\delta_{u,0}c_0).
\end{align*}

In order to complete the proof, we study the parity of $u$ and split the sums accordingly. As a result of the definitions of multiplicities and exponents for the quasipolynomials $P^\epsilon_N(x)$ in \eqref{eq:cuaspipolpn}, we find the exponents and multiplicities expected for (b).

For $i=N/2$, we follow the same steps as in (a), but with an adequate change of variables in $l$. This way, we get
\begin{align*}
(\bigtriangleup^{\epsilon}_{m,1/2,\lambda,N/2})_n-(\bigtriangleup^{\epsilon}_{m,1/2,\lambda,N/2+1})_n=& \sum_{l \geq 1} \sum^m_{u=0} \frac{(n \log q)^u}{u!}(q^{-n(l-1)}+(-1)^uq^{n(l-1)}){ ^dh^{(u)}_{(l-1)N}} \\
& + \sum^m_{u=0} \frac{(n \log q)^u}{u!} (1+(-1)^u) \, {^d\lambda^{(u)}_1}.
\end{align*}
Lastly, by splitting this sums according to the parity of $u$, we find the exponents and multiplicities expected, finishing the proof for this case.

Consider now $\epsilon=-1.$ The embedding $\hat{\varphi}^{[m]}_s: \sqnh \longrightarrow \di$ is in this case the embedding given by proposition \ref{prop:fihat} composed by $D=T \circ T^{\prime}$, where $T^{\prime}$ is the automorphism of $\gl$ defined in \eqref{eq:tparmenosuno}. The results for this embedding are the same as for $\epsilon=1$.
\end{proof}


\begin{proposition}

Let $s=q^a$ with $a=1/2$ and $N$ odd and take the embedding $\hat{\varphi}^{[m]}_s:{\sqnh} \longrightarrow \gm$, where
$\gm= \tilde{b}^{[m]}_{\infty}$ if $\epsilon=1$ and $\gm=\bi$
if $\epsilon=-1$.
The $\gm$-module $L(\gm,\lambda)$ regarded as a ${\sqnh}$-module is isomorphic to $L({\sqnh};e; e^+;e^-)$, where
\begin{enumerate}

\item[(a)] The exponents $e$ are $l \in \ZZ$ and their multiplicities are
\begin{align}
& p_{l,i}(x)=\sum^m_{u=0} \frac{(-x\log q)^u}{u!} {^bh^{(u)}_{(l-1/2)N+i-1/2}} \quad \text{ if } \quad l> 0 \quad { and } \\
& p_{l,i}(x)=\sum^m_{u=0} \frac{(x\log q)^u}{u!} {^bh^{(u)}_{(l-1/2)N-i-1/2}} \quad \text{ if } \quad l \leq 0, \nonumber
\end{align}
for $1 < i \leq [N/2]-\delta_{N,even}$.
\item[(b)] The exponents are $e^+=e^-=1/2-l$ with $l \geq 0$ and their respective multiplicities are, for $l \leq 1$,
\begin{align}
& p^{\epsilon}_{1/2-l,N}(x)=\sum^m_{u=0, u \, even}2(\log q)^u \, {^bh^{(u)}_{(l-1/2)N-1/2}}\frac{x^u}{u!} \quad \text{ and } \\ \nonumber
& q^\epsilon_{1/2-l,N}(x)=\sum^m_{u=0, u \, odd}2(\log q)^u \,  {^bh^{(u)}_{(l-1/2)N-1/2}}\frac{x^u}{u!},
\end{align}
and for $l=0$,
\begin{align*}
& p^{\epsilon}_{1/2,N}(x)=\sum^m_{u=0, u \, even}2(\log q)^u\frac{x^u}{u!}(c_u-\delta_{u,0}c_0) \quad \text{ and }  \quad q^\epsilon_{1/2,N}(x)=\sum^m_{u=0, u \, odd}2(\log q)^u\frac{x^u}{u!}c_u,
\end{align*}
and $P^{\epsilon}_{l,N}(0)=-2c_0$ for $i=N$.
\end{enumerate}
\end{proposition}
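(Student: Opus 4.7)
The plan is to follow the same strategy used in the two preceding propositions, adapted to the $N$ odd case. By Remark \ref{rmk:embeding}(a), for $\epsilon=1$ the embedding $\hat{\varphi}^{[m]}_s:\sqnh\to \tilde b^{[m]}_\infty$ is the map of Proposition \ref{prop:fihat} post-composed with the automorphism $T$ of $\gl$ introduced in \eqref{eq:timpar}, while for $\epsilon=-1$ we compose instead with $D=T\circ T'$, where $T'$ is the automorphism defined in \eqref{eq:timparmenosuno}. In both cases the Cartan generators $(q^{-1/2}\tq)^nE_{i,i}-(q^{1/2}\tq)^{-n}E_{N+1-i,N+1-i}$ of $\sqnh$ are mapped to sums of expressions of the form $u^{j}E_{k,k}-(-u)^{j}E_{-k,-k}$ inside $\gm$, which are exactly the building blocks used in \eqref{eq:bi} to express the labels ${}^b\lambda^{(u)}_k$ and the coroots ${}^bH^{(u)}_k$.

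For part (a), with $1<i\le [N/2]-\delta_{N,even}$, I would apply \eqref{eq:deltafii} to the composed embedding, expand the exponential factors using Proposition \ref{prop:fihat}, and reindex the sum over $l\in\ZZ$ so that the half-integer shifts produced by $T$ (respectively $D$) align with the indexing of the basis of $\bar b^{\pm[m]}_\infty$. The series thus splits as a sum $\sum_{l\in\ZZ}\sum_{u=0}^{m}\frac{(\pm n\log q)^{u}}{u!}\,{}^{b}\!\lambda^{(u)}_{\ast}\,q^{\pm n(\ast)}$ plus a central term $\sum_{u=1}^{m}\eta_u(-1/2,n)c_u$. Taking the telescoping difference $(\bigtriangleup^\epsilon_{m,1/2,\lambda,i})_n-(\bigtriangleup^\epsilon_{m,1/2,\lambda,i+1})_n$ collapses adjacent $\lambda$-terms into ${}^bh^{(u)}_{(l-1/2)N\pm(i-1/2)}$-terms; comparing with the form of a quasipolynomial in \eqref{eq:cuaspipolpn} then identifies the exponents as $l\in\ZZ$ with the stated multiplicities.

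For part (b), i.e. $i=N$, I would apply \eqref{eq:deltafiN}. The extra $E_{N,N}$ and $E_{1,1}$ contributions produce a factor $2\sinh_q(n/2)$, and using the identity
\begin{equation*}
2\sinh_q(n/2)\,\eta_u(-1/2,n)=\frac{(q^{n/2}+(-1)^{u}q^{-n/2})(n\log q)^{u}}{u!}
\end{equation*}
rewrites the series in a manifestly $(\cosh_q,\sinh_q)$-form. A careful reindexing in $l$, combined with the special formulas for ${}^bH^{(j)}_0$ in both parities of $j$ given in \eqref{eq:bi}, yields the $l\ge 1$ terms with coefficients ${}^bh^{(u)}_{(l-1/2)N-1/2}$ and an additional $l=0$ central contribution of the form $c_u-\delta_{u,0}c_0$ in the even-$u$ part and $c_u$ in the odd-$u$ part; the $-\delta_{u,0}c_0$ correction is precisely what matches the normalization $P^{\epsilon}_{l,N}(0)=-2c_0$. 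Splitting the $u$-sum by parity then reads off the even-type multiplicity $p^\epsilon_{1/2-l,N}(x)$ and the odd-type multiplicity $q^\epsilon_{1/2-l,N}(x)$; no statement (c) is needed because $N$ is odd and there is no central node.

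The main obstacle, as in the analogous propositions above, is the bookkeeping at the boundary index $l=1$ (and $l=0$ for the central term): one must verify that the asymmetry between the ${}^bH^{(j)}_0$ formulas for even and odd $j$ in \eqref{eq:bi} combines with the half-integer shift coming from $T$ (or $D$) to produce exactly the central correction $\delta_{l,1}(c_u-\delta_{u,0}c_0)$ in the even part and $\delta_{l,1}c_u$ in the odd part. Once this has been checked for $\epsilon=1$, the case $\epsilon=-1$ is identical modulo replacing $T$ by $D$ and $\tilde b^{[m]}_\infty$ by $\bi$, since $T'$ acts trivially on the diagonal subalgebra used to compute $(\bigtriangleup^\epsilon_{m,1/2,\lambda,i})_n$.
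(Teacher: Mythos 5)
Your proposal follows essentially the same route as the paper's own proof: composing the embedding of Proposition \ref{prop:fihat} with the automorphism $T$ (resp.\ $D=T\circ T'$) from \eqref{eq:timpar} and \eqref{eq:timparmenosuno}, applying \eqref{eq:deltafii} and \eqref{eq:deltafiN}, reindexing in $l$, telescoping to produce the ${}^bh^{(u)}$-coefficients, using the identity for $2\sinh_q(n/2)\,\eta_u(-1/2,n)$, and splitting by the parity of $u$ to read off the multiplicities. The bookkeeping you flag at $l=1$ and $l=0$ for the central correction $\delta_{l,1}(c_u-\delta_{u,0}c_0)$ is exactly where the paper's computation lands as well, so the argument is sound and matches the paper's.
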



\begin{proof}
Consider first $\epsilon=1$. By remark \ref{rmk:embeding}, part (a), we have that the embedding $\hat{\varphi}^{[m]}_s: \sqnh \longrightarrow \bi$ is in fact the embedding given by proposition \ref{prop:fihat} composed by $T^{-1}$, where $T$ is the automorphism of $\gl$ defined in \eqref{eq:timpar}.
 Using \eqref{eq:deltafii} for the embedding for this case, we have
\begin{align}\label{eq:3}
&(\bigtriangleup_{m,1/2,\lambda,i})_n=\lambda \bigg( \sum_{l \in \ZZ} \sum^m_{u=0}- \frac{(n \log q)^u}{u!}q^{-nl}t^rE_{(l-1/2)N-i+1/2,(l-1/2)N-i+1/2} \\ \nonumber
&+\sum_{l \in \ZZ} \sum^m_{u=0}\frac{(-n \log q)^u}{u!}q^{nl}t^rE_{((l-1/2)N+i-1/2,(l-1/2)N+i-1/2} \bigg) + \sum^m_{u=1} \eta_u(-1/2,n)c_u. \nonumber
\end{align}
Making an adequate change of variables in $l$ and using \eqref{eq:bi}, we have
\begin{align*}
(\bigtriangleup_{m,1/2,\lambda,i})_n = & \sum_{l \geq 1} \sum^m_{u=0} \frac{(n \log q)^u}{u!} ({ ^d\lambda^{(u)}_{(l-1/2)N+i-1/2}}(-1)^uq^{nl}-{^d\lambda^{(u)}_{(l-1/2)N-i+1/2}}q^{-n(l-1)}) \\ \nonumber
&+ \sum^m_{u=1} \eta_u(-1/2,n)c_u.
\end{align*}
Then,
\begin{align*}
&(\bigtriangleup_{m,1/2,\lambda,i})_n-(\bigtriangleup_{m,1/2,\lambda,i+1})_n=\\
&\sum_{l \geq 1} \sum^m_{u=0} \frac{(n \log q)^u}{u!} ((-1)^u q^{nl} \,
{ ^bh^{(u)}_{(l-1/2)N+i-1/2}}+q^{-n(l-1)}\, {^bh^{(u)}_{(l-1/2)N-i-1/2}}).
\end{align*}
Making use of the definitions of multiplicities and exponents for the quasipolynomial $P_i(x)$ in \eqref{eq:cuaspipolpn}, we finish the proof of (a).

Using \eqref{eq:deltafiN} for the embedding in this case, we have
\begin{align}\label{eq:4}
&(\bigtriangleup_{m,1/2,\lambda,N})_n=\\ \nonumber
& \lambda \bigg( \sum_{l \in \ZZ} \sum^{m}_{u=0}  \frac{(n \log q)^u}{u!} t^u \big( (-q^{n(1/2-l)}-(-1)^uq^{-n(1/2-l)})E_{(l-1/2)N+1/2,(l-1/2)N+1/2} \\ \nonumber
&+ \sum_{l \in \ZZ} \sum^{m}_{u=0} \frac{(n \log q)^u}{u!} t^u (q^{-n(1/2+l)}+(-1)^uq^{n(1/2+l)})E_{(l+1/2)N-1/2,(l+1/2)N-1/2} \big) \bigg)  \\ \nonumber
&+\sum^m_{u=1}2 \sinh_q(n/2)\eta_u(-1/2,n)c_u.  \nonumber
\end{align}
Once again, by making an adequate change of variables in $l$, using \eqref{eq:bi} and the fact that
\begin{equation*}
2\eta_u(-1/2,n)\sinh_q(n/2)=\frac{(q^{n/2}+(-1)^uq^{-n/2})(n \log q)^u}{u!},
\end{equation*}
we obtain
\begin{align*}
(\bigtriangleup_{m,1/2,\lambda,N})_n= &\sum_{l \geq 1} \sum^{m}_{u=0} \frac{(n \log q)^u}{u!}{^dh^{(u)}_{(l-1/2)N-1/2}}(q^{n(1/2-l)}+(-1)^{u}q^{-n(1/2-l)}) \\
&+\sum^m_{u=0}\frac{(n \log q)^u}{u!}(q^{n/2}+(-1)^uq^{-n/2})(c_u-\delta_{u,0}c_0).
\end{align*}
To complete the proof, we study the parity of $u$ and split the sums accordingly. As a result of the definitions of multiplicities and exponents for the quasipolynomials $P^\epsilon_N(x)$ in \eqref{eq:cuaspipolpn}, we find the exponents and multiplicities expected.

Consider now $\epsilon=-1.$ The embedding $\hat{\varphi}^{[m]}_s: \sqnh \longrightarrow \di$ is in this case the embedding given by proposition \ref{prop:fihat} composed by $D=T \circ T^{\prime}$, where $T^{\prime}$ is the automorphism of $\gl$ defined in \eqref{eq:timparmenosuno}. Proceeding in an analogous way as for the case $\epsilon=1$, we get the expected results.

\end{proof}


\begin{proposition}\label{prop:casoa1}

Let $s=q^a$ with $a=1$ and let the embedding $\hat{\varphi}^{[m]}_s: {\sqnh} \longrightarrow \gm$, where
$\gm=\di$  if $\epsilon=1$ and $\gm=\ci$ if $\epsilon=-1$.
The $\gm$-module $L(\gm,\lambda)$ regarded as a ${\sqnh}$-module is isomorphic to $L({\sqnh};e;e^+;e^-)$, where
\begin{enumerate}

\item[(a)] If $1 < i \leq [N/2]-\delta_{N,even}$, the exponents $e$ are $1/2-l$ with $l \enz$ and their multiplicities are
\begin{align}
& p_{1/2-l,i}(x)=\sum^m_{u=0} \frac{(x\log q)^u}{u!}  {^{\dagger}h^{(u)}_{lN-i}} \quad \text{ with } l >0 \quad \quad \text{ ahd } \\ \nonumber
& p_{1/2-l,i}(x)=\sum^m_{u=0} \frac{(-x\log q)^u}{u!} {^{\dagger}h^{(u)}_{(l-1)N+i}}, \quad \text{ with } l \leq 0 \quad
\end{align}
where $\dagger$ represents $c$ or $d$ depending on whether $\gm$ is $\ci$ or $\di$.

\item[(b)] If $i=N$, the exponents $e^+$ and $e^-$ are $l-1$ with $l \leq 1$ and their multiplicities are
\begin{equation}
p^{\epsilon}_{l-1,N}(x)=\sum^m_{u=0, u \, even}{^{\dagger}\widehat{h}^{(u)}_{(l-1)N}}\frac{x^u}{u!} \quad \text{ and } \quad q^\epsilon_{l-1,N}(x)=\sum^m_{u=0, u \, odd}{^{\dagger}\widetilde{h}^{(u)}_{(l-1)N}}\frac{x^u}{u!},
\end{equation}
with
\begin{align*}
&{^{\dagger}\widehat{h}^{(u)}_{(l-1)N}}=2  (\log q)^u({^{\dagger}h^{(u)}_{(l-1)N}}+\delta_{l,1}(c_u+ \, {^{\dagger} \lambda^{(u)}_1}  -\delta_{u,0}c_0)) \quad \text{ and } \\ \nonumber
&{^{\dagger}\widetilde{h}^{(u)}_{(l-1)N}}=-2 (\log q)^u \, {^{\dagger}h^{(u)}_{(l-1)N}}, \nonumber
\end{align*}
where $\dagger$ represents $c$ or $d$ depending on whether $\gm$ is $\ci$ or $\di$ and $P^{\epsilon}_{i,N}(0)=-2c_0$.

\item[(c)] Moreover, if $N$ is even, for $i=N/2$ the exponents $e^+$ and $e^-$ are $l-1/2$ with $l \geq 1$ and their multiplicities are
\begin{align}
&p_{l-1/2,N/2}(x)=\sum^m_{u=0, u \text{ even}} 2\frac{(x \log q)^u}{u!}{^{\dagger}h^{(u)}_{(l-1/2)N}} \quad \text{ and } \\
&q_{l-1/2,N/2}(x)=\sum^m_{u=0, u \text{ odd }} -2\frac{(x \log q)^u}{u!}{^{\dagger}h^{(u)}_{(l-1/2)N}}, \nonumber
\end{align}
where $\dagger$ represents $c$ or $d$ depending on whether $\gm$ is $\ci$ or $\di$.
\end{enumerate}
\end{proposition}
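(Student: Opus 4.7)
The plan is to follow the same template as the three preceding propositions, adapting the computations to $a=1$. First, using Remark \ref{rmk:embeding}(a), I identify the embedding $\hat{\varphi}^{[m]}_s: \sqnh \to \gm$: when $\epsilon=1$ this is directly the map of Proposition \ref{prop:fihat} (since Proposition \ref{prop:a1} already lands in $\di$), and when $\epsilon=-1$ it is the map of Proposition \ref{prop:fihat} precomposed with the inverse of the automorphism $T^{\prime}$ of \eqref{eq:ta1} that conjugates $\omega$ onto the anti-involution defining $\ci$. With the embedding in hand, I compute $(\bigtriangleup^{\epsilon}_{m,1,\lambda,i})_n$ via \eqref{eq:deltafii} and \eqref{eq:deltafiN}, substituting $a=1$ into the explicit formula for $\hat{\varphi}^{[m]}_s$ from Proposition \ref{prop:fihat}.

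For part (a), in the range $1 < i \leq [N/2]-\delta_{N,even}$, I take the difference $(\bigtriangleup^{\epsilon}_{m,1,\lambda,i})_n-(\bigtriangleup^{\epsilon}_{m,1,\lambda,i+1})_n$ so that central charge contributions cancel and $^{\dagger}\lambda^{(u)}_j - {^{\dagger}\lambda^{(u)}_{j+1}}$ collapses to $^{\dagger}h^{(u)}_j$ according to \eqref{eq:ci} or \eqref{eq:di}. After an index shift $l \mapsto l+1$ that is analogous to the ones performed in the previous propositions, the resulting double sum takes the form $\sum_{l \in \ZZ}\sum_{u=0}^{m}\frac{(\pm n \log q)^u}{u!} \, q^{\pm n(1/2-l)} \, {^{\dagger}h^{(u)}_{\ast}}$; matching this with the definition \eqref{eq:cuaspipolpn} of the multiplicities of the quasipolynomial $P_i(x)$ yields the claimed exponents $1/2-l$ and the two cases $l>0$, $l\leq 0$ distinguished by which diagonal of matrix indices is hit.

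For part (b), I compute $(\bigtriangleup^{\epsilon}_{m,1,\lambda,N})_n$ using \eqref{eq:deltafiN} and the identity $2\sinh_q(n/2)\eta_u(a-l,n) = \frac{(n\log q)^u}{u!}(q^{n(a-l)}+(-1)^u q^{-n(a-l)})$, exactly as in the proof of the $a=1/2$ case. After reindexing so that only $l \geq 1$ survives, the sum reduces to $^{\dagger}h^{(u)}_{(l-1)N}$ times $q^{n(l-1)}+(-1)^uq^{-n(l-1)}$, plus a boundary contribution at $l=1$; because the definition of $^{\dagger}H^{(j)}_0$ in \eqref{eq:ci} and \eqref{eq:di} is inhomogeneous (involving $E_{1,1}$, $E_{2,2}$ and an extra $u^j$), this boundary contribution contains the extra summand $\delta_{l,1}(c_u + {^{\dagger}\lambda^{(u)}_1}-\delta_{u,0}c_0)$ that distinguishes the $^{\dagger}\widehat{h}$ formula here from its counterparts in Propositions \ref{prop:casoanoenz/2}–\ref{prop:a1/2nimpar}. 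Splitting by the parity of $u$ separates $P^{\epsilon}_N(x)$ into its $\cosh_q$ and $\sinh_q$ parts, giving the even and odd type exponents $l-1$ with the stated multiplicities. Part (c) is handled analogously to part (a), but with $N$ even and $i=N/2$ one must recognize $P_{N/2}(x)$ as an even quasipolynomial and split the result by the parity of $u$ to separate the $\cosh_q$ and $\sinh_q$ components at the exponent $l-1/2$.

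The main obstacle I anticipate is the bookkeeping at the boundary index $l=1$ in part (b): the nonstandard formulas for $^{c}H^{(i)}_0$ and $^{d}H^{(j)}_0$ force an extra $^{\dagger}\lambda^{(u)}_1$ term into the definition of $^{\dagger}\widehat{h}$, which did not appear in the $a\notin\ZZ/2$ or $a=1/2$ cases. A second, more routine difficulty is correctly tracking, in the $\epsilon=-1$ subcase, how the conjugating automorphism $T^{\prime}$ permutes the $E_{a,b}$ indices and interacts with the shift $l \mapsto l+1$, so that the resulting labels really match the $^{c}h^{(u)}_{(l-1)N}$ of \eqref{eq:ci}; since the image of $\sqns$ under $T^{\prime-1}\circ\varphi^{[m]}_s$ is literally $\bar c_{\infty}^{[m]}$, both parities of the index-shifted formula combine into a single clean expression as written.
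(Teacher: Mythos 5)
Your proposal is correct and follows essentially the same route as the paper: identify the embedding for $a=1$ via Remark \ref{rmk:embeding} and Proposition \ref{prop:a1}, compute the labels through \eqref{eq:deltafii} and \eqref{eq:deltafiN} with the explicit formula of Proposition \ref{prop:fihat}, reindex $l$, collapse via \eqref{eq:ci}/\eqref{eq:di}, and split by the parity of $u$ to read off exponents and multiplicities from \eqref{eq:cuaspipolpn}. You also correctly isolate the one genuinely new feature of this case, namely the boundary term $\delta_{l,1}(c_u+{^{\dagger}\lambda^{(u)}_1}-\delta_{u,0}c_0)$ forced by the inhomogeneous form of $^{\dagger}H^{(j)}_0$, which is exactly where the paper's computation of part (b) departs from the earlier propositions.
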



\begin{proof}
By Remark \ref{rmk:embeding}, part (a), we have that the embedding $\hat{\varphi}^{[m]}_s: \sqnh \longrightarrow \di$ is in fact the embedding given by proposition \ref{prop:fihat} composed by $T^{-1}$, where $T$ is the automorphism of $\gl$ defined in \eqref{eq:tpar}.

If $1 < i \leq [N/2]-\delta_{N,even}$, using \eqref{eq:deltafii} for the embedding in this case, we have
\begin{align*}
(\bigtriangleup_{m,1,\lambda,i})_n=&\lambda \bigg( \sum_{l \in \ZZ} \sum^m_{u=0} -\frac{(n \log q)^u}{u!}q^{n(1/2-l)}t^rE_{lN+1-i,lN+1-i} \\
&+\sum_{l \in \ZZ} \sum^m_{u=0}\frac{(-n \log q)^u}{u!}q^{n(-1/2+l)}t^rE_{(l-1)N+i,(l-1)N+i} \bigg) + \sum^m_{u=1} \eta_u(0,n)c_u.
\end{align*}
Making an adequate change of variable in $l$ and using \eqref{eq:di}, we have
\begin{align*}
(\bigtriangleup_{m,1,\lambda,i})_n=&  \sum_{l \in \ZZ} \sum^m_{u=0} \frac{(n \log q)^u}{u!} \big( - {^{d}\lambda^{(u)}_{lN+1-i}} q^{n(1/2-l)} \\
&+(-1)^r \, {^{d}\lambda^{(u)}_{(l-1)N+i}}q^{n(-1/2+l)} \big) + \sum^m_{u=1} \eta_u(0,n)c_u.
\end{align*}
Then,
\begin{align*}
&(\bigtriangleup_{m,1,\lambda,i})_n-(\bigtriangleup_{m,1,\lambda,i+1})_n= \\
&\sum_{l \geq 1} \sum^m_{u=0} \frac{(n \log q)^u}{u!}
({^dh^{(u)}_{lN-i}} \, q^{n(1/2-l)}+(-1)^u \,  q^{n(l-1/2)} \, {^dh^{(u)}_{(l-1)N+i}} ).
\end{align*}

Making use of the definitions of multiplicities and exponents for the quasipolynomial $P_i(x)$ in \eqref{eq:cuaspipolpn}, we finish the proof of (a).

Using \eqref{eq:deltafiN} for the embedding in this case, we obtain
\begin{align}\label{eq:5}
(\bigtriangleup_{m,1,\lambda,N})_n=& \lambda \bigg( \sum_{l \in \ZZ} \sum^{m}_{u=0}  \frac{(n \log q)^u}{u!} t^u \big( (-q^{n(1-l)}-(-1)^uq^{-n(1-l)})E_{(l-1)N+1,(l-1)N+1} \\ \nonumber
&+\sum_{l \in \ZZ} \sum^{m}_{u=0}  \frac{(n \log q)^u}{u!} t^u(q^{-nl}+(-1)^uq^{nl})E_{lN,lN} \big) \bigg)  \\ \nonumber
&+\sum^m_{u=1}2 \sinh_q(n/2)\eta_u(0,n)c_u. \nonumber
\end{align}
We make a change of variables in $l$. Using \eqref{eq:di} and the fact that
\begin{align*}
&2\eta_u(0,n)\sinh_q(n/2)=\frac{(1+(-1)^u)(n \log q)^u}{u!} \quad \text{ y } \\
&2\eta_0(0,n)\sinh_q(n/2)=\frac{2(n \log q)^u}{u!},
\end{align*}
 we have
\begin{align*}
(\bigtriangleup_{m,1,\lambda,N})_n=& \sum_{l \geq 1} \sum^{m}_{u=0}  \frac{(n \log q)^u}{u!} \big( (q^{-n(l-1)}+(-1)^uq^{n(l-1)}) \, {^dh_{(l-1)N}}+(1+(-1)^u) \, {^d\lambda^{(u)}_{1}} \big) \\
&+\sum^m_{u=0}\frac{(1+(-1)^u)(n \log q)^u}{u!}c_u-2c_0.
\end{align*}

In order to finish the proof, we study the parity of $u$ and split the sums accordingly. As a result of the definitions of multiplicities and exponents for the quasipolynomials $P^\epsilon_N(x)$ in \eqref{eq:cuaspipolpn}, we find the exponents and multiplicities expected for (b).

For $i=N/2$, following the same steps as in the proof of (a), we have
\begin{align*}
&(\bigtriangleup_{m,1/2,\lambda,N/2})_n-(\bigtriangleup_{m,1/2,\lambda,N/2+1})_n= \\
&\sum_{l \geq 1} \sum^m_{u=0} \lambda \bigg( -\frac{(n \log q)^u}{u!}t^u ( q^{n(1/2-l)}+(-1)^uq^{n(-1/2+l)})E_{(l-1/2)N+1,(l-1/2)N+1}\\
&+\frac{(n \log q)^u}{u!}t^u ( q^{n(1/2-l)}+(-1)^uq^{n(-1/2+l)})E_{(l-1/2)N,(l-1/2)N} \bigg).
\end{align*}
Making a change of variables in $l$ and using \eqref{eq:di}, we get
\begin{align*}
&(\bigtriangleup_{m,1/2,\lambda,N/2})_n-(\bigtriangleup_{m,1/2,\lambda,N/2+1})_n= \\
&\sum_{l \geq 1}\sum^m_{u=0} \frac{(n \log q)^u}{u!}( q^{n(1/2-l)}+(-1)^uq^{n(-1/2+l)}){ ^dh^{(u)}_{(l-1/2)N}}.
\end{align*}
Studying the parity of $u$ and splitting the sums accordingly, we find the exponents and multiplicities expected for this case, finishing the proof.

Consider now $\epsilon=-1.$ The embedding $\hat{\varphi}^{[m]}_s: \sqnh \longrightarrow \ci$ is in this case the embedding given by proposition \ref{prop:fihat} composed by $D=T \circ T^{\prime}$, where $T^{\prime}$ is the automorphism of $\gl$ defined in \eqref{eq:ta1}. Proceeding in an analogous way as for case $\epsilon=1$, we obtained the expected results.
\end{proof}


Consider an irreducible quasifinite highest weight $\sqnh$-module $V$ with central charge $c$ and generating series $\bigtriangleup_i(x)$ such that
\begin{align*}
P_{i}(n)=\bigtriangleup_{i,n}-\bigtriangleup_{i+1,n} \quad \text{ for } \quad  1 < i \leq [N/2]-\delta_{N, even} \quad  \text{ and } \\
P^{\epsilon}_N(n)=\bigtriangleup_{N,n} \quad \text{ for } \quad  n \neq 0 \quad \text{ and } \quad P^{\epsilon}_N(0)=-2c,
\end{align*}
where $P_i(x)$ are quasipolynomials and $P^{\epsilon}_N(x)$ are even quasipolynomials. Moreover, if $N$ is even, there exists an even quasipolynomial $P_{N/2}(x)$ such that
\begin{equation*}
P_{N/2}(n)=\bigtriangleup_{N/2,n}-\bigtriangleup_{N/2+1,n}.
\end{equation*}
Using the notation introduced in \eqref{eq:cuaspipolpn}, decompose the set
$A=\{s \in \CC| p_{s,i} \neq 0 \,\, \text{ for some } i \} \cup \{s \in \CC| p^{\epsilon}_{s,N} \neq 0 \} \cup \{s \in \CC| p_{s,N/2} \neq 0 \} \cup \{s \in \CC| q^{\epsilon}_{s,N} \neq 0 \} \cup \{s \in \CC| q_{s,N/2} \neq 0 \} $ into a disjoint union of equivalence classes under the condition
\begin{equation*}
s=q^a \sim q^{a^{\prime}}=s^{\prime} \Leftrightarrow a-a^{\prime} \in \ZZ+ \tau^{-1}\ZZ.
\end{equation*}
Pick a representative $s$ in an equivalence class $S$ such that $s=q$ if the equivalence class lies in $\ZZ$ and $s=q^{1/2}$ if the equivalence class lies in $\ZZ+1/2$. Let $S=\{q^a, q^{a+t_1}, q^{a+t_2}, \dots \} $ be such an equivalence class. Take $t_0=0$ and let \\
$m=\max_{s \in S} \{\deg \, p_{s,i}, \deg \, p^{\epsilon}_{s,N}, \deg \, q^{\epsilon}_{s,N}, \deg \, p_{s,N/2},\deg \, q_{s,N/2} \}$. It is easy to see that if $a=1$ or $a=1/2$, then $t_i \in \ZZ$.
Now, we will associate $S$ to a $\gm$-module $L^{[m]}_s(\lambda_S)$ in one of the following ways.

\begin{itemize}
\item
If $a \notin \ZZ/2$, for $1 < i \leq [N/2]-\delta_{N,even}$ let
\begin{align}
& {^ah^{(u)}_{(t_j-1)N+i}}=\frac{1}{(\log q)^u}\bigg( \frac{d}{dx} \bigg)^u p_{1/2-a+t_j,i}(0) \quad \text{ and} \\
& {^ah^{(u)}_{t_jN-i}}=\bigg(\frac{-1}{\log q}\bigg)^u \bigg( \frac{d}{dx} \bigg)^u p_{-1/2+a-t_j,i}(0),
\end{align}
and let
\begin{align}
& {^ah^{(u)}_{(t_j-1)N}}+\delta_{t_j,1}(c_u-\delta_{u,0}c_0)=\frac{1}{2(\log q)^u}\bigg( \frac{d}{dx} \bigg)^u p^{\epsilon}_{t_j,N}(0) \quad \text{ if $u$ even and} \\
& {^ah^{(u)}_{(t_j-1)N}}+\delta_{t_j,1}c_u=\frac{1}{2(\log q)^u}\bigg( \frac{d}{dx} \bigg)^u q^{\epsilon}_{t_j,N}(0) \quad \text{ if $u$ odd},
\end{align}
and if $N$ is even
\begin{align}
& {^ah^{(u)}_{(t_j-1/2)N}}=\frac{1}{(\log q)^u}\bigg( \frac{d}{dx} \bigg)^u p_{t_j,N/2}(0) \quad \text{ if $u$ even and} \\
& {^ah^{(u)}_{(t_j-1/2)N}}=-\frac{1}{(\log q)^u}\bigg( \frac{d}{dx} \bigg)^u q_{t_j,N/2}(0) \quad \text{ if $u$ odd},
\end{align}
for $u=0, \dots, m$. We associate $S$ to the $\glh$-module $L^{[m]}_s(\lambda_S)$ with central charges
\begin{equation*}
c_u=\sum_i \sum_{t_j}({^ah^{(u)}_{(t_j-1)N+i}}+{^ah^{(u)}_{t_jN-i}})+\sum_{t_j}({^ah^{(u)}_{(t_j-1)N}}+\delta_{N, \, even}{^ah^{(u)}_{(t_j-1/2)N}})
\end{equation*}
and labels
\begin{align*}
^a\lambda^{(u)}_l=&\sum_{(t_j-1)N+i \geq l}{^a\dot{h}^{(u)}_{(t_j-1)N+i}}+\sum_{t_jN-i \geq l}{^a\dot{h}^{(u)}_{t_jN-i}} \\
&+\sum_{(t_j-1)N \geq l}{^a\dot{h}^{(u)}_{(t_j-1)N}}+\delta_{N, \, even}\sum_{(t_j-1/2)N \geq l}{^a\dot{h}^{(u)}_{(t_j-1/2)N}},
\end{align*}
with ${^a\dot{h}^{(u)}_{t}}={^ah^{(u)}_{t}}-\delta_{t,0}c_u$.

\item
If $a=1/2$ and $N$ is even, for $1 < i \leq [N/2]-\delta_{N,even}$ let
\begin{align}
& {^dh^{(u)}_{(t_j-1/2)N+i}}=\bigg(\frac{-1}{\log q}\bigg)^u\bigg( \frac{d}{dx} \bigg)^u p_{t_j,i}(0) \quad \text{ if } \quad t_j > 0 \\
& {^dh^{(u)}_{(t_j-1/2)N-i}}= \frac{1}{(\log q)^u} \bigg( \frac{d}{dx} \bigg)^u p_{t_j,i}(0) \quad \text{ if } \quad t_j \leq 0,
\end{align}
and let
\begin{align}
& {^dh^{(u)}_{(t_j-1/2)N}}+\delta_{t_j,1}(c_u-\delta_{u,0}c_0)=\frac{1}{2(\log q)^u}\bigg( \frac{d}{dx} \bigg)^u p^{\epsilon}_{1/2-t_j,N}(0) \quad \text{ if $u$ even and } \\
& {^dh^{(u)}_{(t_j-1/2)N}}+\delta_{t_j,1}c_u=\frac{1}{2(\log q)^u}\bigg( \frac{d}{dx} \bigg)^u q^{\epsilon}_{1/2-t_j,N}(0) \quad \text{ if $u$ odd},
\end{align}
and if $N$ is even
\begin{align}
& {^dh^{(u)}_{t_jN}}=\frac{1}{2(\log q)^u}\bigg( \frac{d}{dx} \bigg)^u p_{t_j,N/2}(0) \quad \text{ if $u$ even and } \\
& {^dh^{(u)}_{t_jN}}=-\frac{1}{2(\log q)^u}\bigg( \frac{d}{dx} \bigg)^u q_{t_j,N/2}(0) \quad \text{ if $u$ odd},
\end{align}
for $u=0, \dots, m$. We associate $S$ to the $\di$-module $L^{[m]}_s(\lambda_S)$ with central charges
\begin{equation*}
c_u=\sum_i \sum_{t_j}({^dh^{(u)}_{(t_j-1/2)N+i}}+{^dh^{(u)}_{(t_j-1/2)N-i}})+\sum_{t_j}({^dh^{(u)}_{(t_j-1/2)N}}+\delta_{N, \, even}{^dh^{(u)}_{t_jN}})
\end{equation*}
and labels
\begin{align*}
^d\lambda^{(u)}_l=&\sum_{(t_j-1/2)N+i \geq l}{^dh^{(u)}_{(t_j-1/2)N+i}}+\sum_{(t_j-1/2)N-i \geq l}{^dh^{(u)}_{(t_j-1/2)N-i}} \\
&+\sum_{(t_j-1/2)N \geq l}{^dh^{(u)}_{(t_j-1/2)N}}+\delta_{N, \, even}\sum_{t_jN \geq l}{^dh^{(u)}_{t_jN}}.
\end{align*}

\item
If $a=1/2$, $N$ is odd for $1 < i \leq [N/2]-\delta_{N,even}$ let
\begin{align}
& {^bh^{(u)}_{(t_j-1/2)N+i-1/2}}=\bigg(\frac{-1}{\log q}\bigg)^u\bigg( \frac{d}{dx} \bigg)^u p_{t_j,i}(0) \quad \text{ if } \quad t_j > 0 \\
& {^bh^{(u)}_{(t_j-1/2)N-i-1/2}}= \frac{1}{(\log q)^u} \bigg( \frac{d}{dx} \bigg)^u p_{t_j,i}(0) \quad \text{ if } \quad t_j \leq 0,
\end{align}
and let
\begin{align}
& {^bh^{(u)}_{(t_j-1/2)N-1/2}}+\delta_{t_j,1}(c_u-\delta_{u,0}c_0)=\frac{1}{2(\log q)^u}\bigg( \frac{d}{dx} \bigg)^u p^{\epsilon}_{1/2-t_j,N}(0) \quad \text{ if $u$ even and} \\
& {^bh^{(u)}_{(t_j-1/2)N-1/2}}+\delta_{t_j,1}c_u=\frac{1}{2(\log q)^u}\bigg( \frac{d}{dx} \bigg)^u q^{\epsilon}_{1/2-t_j,N}(0) \quad \text{ if $u$ odd},
\end{align}
for $u=0, \dots, m$. We associate $S$ to the $\gm$-module $L^{[m]}_s(\lambda_S)$ with central charges
\begin{equation*}
c_u=\sum_i \sum_{t_j}({^bh^{(u)}_{(t_j-1/2)N+i-1/2}}+{^bh^{(u)}_{(t_j-1/2)N-i-1/2}})+\sum_{t_j}{^bh^{(u)}_{(t_j-1/2)N-1/2}}
\end{equation*}
and labels
\begin{align*}
^b\lambda^{(u)}_l=&\sum_{(t_j-1/2)N+i-1/2 \geq l}{^bh^{(u)}_{(t_j-1/2)N+i-1/2}}+\sum_{(t_j-1/2)N-i-1/2 \geq l}{^bh^{(u)}_{(t_j-1/2)N-i-1/2}} \\
&+\sum_{(t_j-1/2)N-1/2 \geq l}{^bh^{(u)}_{(t_j-1/2)N-1/2}},
\end{align*}
with $\gm=\tilde{b}^{[m]}_{\infty}$ if $\epsilon=1$ and $\gm=\bi$ if $\epsilon=-1$.
%

\item
If $a=1$, for $1 < i \leq [N/2]-\delta_{N, even}$, let
\begin{align}
& {^{\dagger}h^{(u)}_{t_jN-i}}=\frac{1}{(\log q)^u} \bigg( \frac{d}{dx} \bigg)^u p_{1/2-t_j,i}(0) \quad \text{ if } \quad t_j > 0 \\
& {^{\dagger}h^{(u)}_{(t_j-1)N+i}}=\bigg(\frac{-1}{\log q}\bigg)^u  \bigg( \frac{d}{dx} \bigg)^u p_{1/2-t_j,i}(0) \quad \text{ if } \quad t_j \leq 0,
\end{align}
and let
\begin{align}
& {^{\dagger}h^{(u)}_{(t_j-1)N}}+\delta_{t_j,1}(c_u+{^{\dagger}\lambda^{(u)}_1}-\delta_{u,0}c_0)=\frac{1}{2(\log q)^u}\bigg( \frac{d}{dx} \bigg)^u p^{\epsilon}_{t_j,N}(0) \quad \text{ if $u$ even and }  \\
& {^{\dagger}h^{(u)}_{(t_j-1)N}}=-\frac{1}{2(\log q)^u}\bigg( \frac{d}{dx} \bigg)^u q^{\epsilon}_{t_j,N}(0)  \quad \text{ if $u$ odd} ,
\end{align}
and if $N$ is even
\begin{align}
& {^{\dagger}h^{(u)}_{(t_j-1/2)N}}=\frac{1}{2(\log q)^u}\bigg( \frac{d}{dx} \bigg)^u p_{t_j-1/2,N/2}(0)  \quad \text{ if $u$ even and} \\
& {^{\dagger}h^{(u)}_{(t_j-1/2)N}}=-\frac{1}{2(\log q)^u}\bigg( \frac{d}{dx} \bigg)^u q_{t_j-1/2,N/2}(0)  \quad \text{ if $u$ odd} ,
\end{align}
for $u=0, \dots, m$, where ${\dagger}$ represents $d$ if $\epsilon=1$ and $c$ if $\epsilon=-1$. We associate $S$ to the $\gm$-module $L^{[m]}_s(\lambda_S)$ with central charges
\begin{equation*}
c_u=\sum_i \sum_{t_j}({^{\dagger}h^{(u)}_{t_jN+i}}+{^{\dagger}h^{(u)}_{t_jN-i}})+\sum_{t_j}({^{\dagger}h^{(u)}_{(t_j-1)N}}+\delta_{N, \, even}{^{\dagger}h^{(u)}_{(t_j-1/2)N}})
\end{equation*}
and labels
\begin{align*}
^{\dagger}\lambda^{(u)}_l=&\sum_{t_jN+i \geq l}{^{\dagger}h^{(u)}_{t_jN+i}}+\sum_{t_jN-i \geq l}{^{\dagger}h^{(u)}_{t_jN-i}} \\
&+\sum_{(t_j-1)N \geq l}{^{\dagger}h^{(u)}_{(t_j-1)N}}+\delta_{N, \, even}\sum_{(t_j-1/2)N \geq l}{^{\dagger}h^{(u)}_{(t_j-1/2)N}},
\end{align*}
where $\gm=\di$ and ${\dagger}=d$ if $\epsilon=1$ and $\gm=\ci$ with ${\dagger}=c$ if $\epsilon=-1$.

%
\end{itemize}

Denote $\{s_1,s_2, \dots \}$ with $s_i=q^{a_i}$ a set of representatives of equivalence classes of the set $A$. By Theorem \ref{teo:lirreduc}, the $\sqnh$-module $L^{[\vec{m}]}_{\vec{s}}(\lambda)$ is irreducible for $\vec{s}=(s_1,s_2, \dots)$ such that $a_i \enz$ implies that $a_i=1$ and $a_i \in \ZZ/2$ implies that $a_i=1/2$. Then, as consequence of the discussion above, the Theorem \ref{teo:lirreduc} and Propositions \ref{prop:casoanoenz/2}-\ref{prop:casoa1}, we have proved the following.

\begin{theorem}
Let $V$ an irreducible quasifinite highest weight $\sqnh$-module with central charge $c$ and let $P_i(x)$, $P^{\epsilon}_N(x)$ and, if $N$ is even, $P_{N/2}(x)$ the quasipolynomials given by Theorem \ref{eq:caracterizacion} written in the form \eqref{eq:cuaspipolpn}. Then, $V$ is isomorphic to the tensor product of the modules $L^{[m]}_s(\lambda_S)$ with distinct equivalence classes $S$.
\end{theorem}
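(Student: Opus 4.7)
The plan is to reverse-engineer the construction preceding the statement: given $V$, extract its quasipolynomials via Theorem \ref{eq:caracterizacion}, group exponents into equivalence classes, and for each class $S$ build the prescribed module $L^{[m]}_s(\lambda_S)$. Then I form the external tensor product $W=\bigotimes_S L^{[m]}_s(\lambda_S)$, view it as an $\sqnh$-module through the homomorphism $\hat\varphi^{[\vec m]}_{\vec s}$ of \eqref{eq:extensionfi}, and show $W\cong V$.

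First I would apply Theorem \ref{eq:caracterizacion} to $V$ to produce the quasipolynomials $P_i$, $P^\epsilon_N$ and (if $N$ is even) $P_{N/2}$, and decompose them in the form \eqref{eq:cuaspipolpn}. The set $A$ of exponents then splits into equivalence classes under $s\sim s'\iff a-a'\in\ZZ+\tau^{-1}\ZZ$, and for each class $S$ one picks the canonical representative (namely $s=q$ or $s=q^{1/2}$ if the class meets $\ZZ$ or $\ZZ+\tfrac12$, and any representative otherwise). The explicit formulas given just before the theorem then define labels $^{\dagger}h^{(u)}_\ast$ and central charges $c_u$ (with $\dagger\in\{a,b,c,d\}$ chosen according to whether $a\notin\ZZ/2$, $a=\tfrac12$ with $N$ odd, or $a=1$/$a=\tfrac12$ with $N$ even), and hence an irreducible highest weight $\gm^{[m]}$-module $L^{[m]}_s(\lambda_S)$.

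The heart of the argument is to verify that the generating series of $W$, as an $\sqnh$-module via $\hat\varphi^{[\vec m]}_{\vec s}$, coincides with that of $V$. Since the action on the tensor product adds the contributions from each factor, and since different classes $S$ produce different exponential types $q^{a+t}x$ (the condition $a_i-a_j\notin\ZZ+\tau^{-1}\ZZ$ guarantees that the exponential functions $q^{(a_i+t)x}$ and $q^{(a_j+t')x}$ are linearly independent), it suffices to check the matching class by class. Here Propositions \ref{prop:casoanoenz/2}--\ref{prop:casoa1} are tailor-made: each one expresses the $\bigtriangleup^\epsilon_{m,a,\lambda,i}$ of a single $L^{[m]}_s(\lambda_S)$ in terms of the $^{\dagger}h^{(u)}$'s exactly in the form \eqref{eq:cuaspipolpn}, and the defining recipe for $\lambda_S$ was arranged so that the Taylor coefficients $\tfrac{1}{u!}(d/dx)^u p_{\cdot}(0)$ and $\tfrac{1}{u!}(d/dx)^u q_{\cdot}(0)$ recover exactly those multiplicities. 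Summing over $S$ reconstructs $P_i$, $P^\epsilon_N$ and $P_{N/2}$, and the special value $P^\epsilon_N(0)=-2c$ matches because the contributions at $t_j=1$ absorb the correction $c_u-\delta_{u,0}c_0$ in every case.

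Finally, by Theorem \ref{teo:lirreduc}, the choice of representatives ensures that $W$ is irreducible as an $\sqnh$-module. Both $W$ and $V$ are irreducible quasifinite highest weight $\sqnh$-modules with the same labels and central charge, so their highest weight functionals on $(\sqnh)_0$ agree; hence $W\cong V$. The main obstacle, and where care is required, is the bookkeeping in the previous paragraph: one must track, for each of the four embedding types $(a\notin\ZZ/2,\ a=\tfrac12\text{ with }N\text{ odd or even},\ a=1)$, both the shift of index $l$ used in the proofs of Propositions \ref{prop:casoanoenz/2}--\ref{prop:casoa1} and the correct splitting of the even/odd parts of $P^\epsilon_N$ and $P_{N/2}$ into $\cosh_q$ and $\sinh_q$ components, so that the inversion formulas defining $^{\dagger}h^{(u)}$ indeed give back the prescribed $P$'s after summation.
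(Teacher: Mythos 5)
Your proposal is correct and follows essentially the same route as the paper: the paper's proof consists precisely of the construction preceding the theorem (decomposing the exponent set into equivalence classes and defining $\lambda_S$ by the inversion formulas), combined with Propositions \ref{prop:casoanoenz/2}--\ref{prop:casoa1} to match generating series class by class and Theorem \ref{teo:lirreduc} for irreducibility of the tensor product. Your additional remarks (linear independence of the exponential types across classes, and uniqueness of the irreducible highest weight module with a given highest weight) make explicit steps the paper leaves implicit, but the argument is the same.
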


\begin{remark}
A different choice of representative $s=q^a$ with $a \notin \ZZ/2$ in the equivalence class $S$ has the effect of shifting $\glh$ via the automorphism $\nu^i$ for some $i$. It is not difficult to see that any irreducible quasifinite highest weight module $L(\sqnh, \xi)$ can be obtained as above in an essentially unique way, up to this shift.
\end{remark}






\begin{thebibliography}{MMM}

\bibitem[BB]{BB}{\em K. Batistelli and C. Boyallian, {\em  Subalgebras of the Lie algebra of matrix quantum pseudo differential operators}, Adv. in Math. Phys. (2016), doi:10.1155/2016/9218693.}


\bibitem[BKLY]{BKLY} {\em C. Boyallian, V. Kac, J. Liberati and C. Yan, {\em  Quasifinite highest weight modules  over the Lie algebra of matrix
differential operators on the circle}, Journal of Math. Phys. {\bf 39} (1998), 2910--2928.}

\bibitem[BL1]{BL1} {\em C. Boyallian and J. Liberati, {\em  Classical Lie subalgebras of the Lie algebra of matrix differential operators on the circle},
 Journal of Math. Phys. {\bf 42} (2001), 3735-3753.}

\bibitem[BL2]{BL2} {\em C. Boyallian and J. Liberati,  {\em On Modules over Matrix Quantum Pseudo-Differential Operators}, Letters in Math. Phys. {\bf 60} (2002), 73-85.}

\bibitem[BL3]{BL3} {\em C. Boyallian and J. Liberati, {\em  Representations of classical Lie subalgebras of quantum pseudo-differential operators}, Journal of Math. Phys. {\bf 46} (2005), 033516-1--033516-17.}

\bibitem[FKRW]{FKRW}{\em E. Frenkel, V. Kac and W. Wang, {\em $W_{1+\infty}$ and $W(gl_N)$ with central charge $N$}, Comm. Math. Phys. {\bf 170} (1995), 337-357.}

\bibitem[K]{K} {\em V. Kac, {\em Infinite-Dimensional Lie Algebras}, 3rd ed. (Cambridge University Press, Cambridge, 1990).}

\bibitem[KL]{KL} {\em V. Kac and J. Liberati {\em Unitary Quasi-finite representations of $W_{\infty}$}, Letters in Math. Phys. {\bf 53} (2000), 11-27.}

\bibitem[KR1]{KR1} {\em V.~G.~Kac  and A. Radul, {\em Quasifinite highest weight modules  over the Lie algebra of differential operators on the
circle}, Comm. Math. Phys. {\bf 157} (1993), 429-457.}

\bibitem[KR2]{KR2} {\em V.~G.~Kac  and A. Radul, {\em Representation theory of the vertex algebra $W_{\infty}$}, Transf. Groups {\bf 1} (1996), 41-70.}

\bibitem[KWY]{KWY}  {\em V.~G.~Kac, W. Wang and C. Yan,  {\em Quasifinite
representations of classical Lie subalgebras of  $W_{1+\infty}$},
Adv. Math. {\bf 139} (1998),  56--140.}

\bibitem[W]{W} {\em W. Wang, {\em Duality in infinite dimensional Fock representations}, Commun. Contemp. Math., {\bf 1} (1999) 155-199.}	
\end{thebibliography}

\end{document}